\documentclass[twoside,11pt]{article}

\usepackage[preprint]{jmlr2e}

\usepackage[T1]{fontenc}
\usepackage{lmodern}

\usepackage{amsmath,mathtools}

\usepackage{subcaption}
\usepackage{booktabs}

\usepackage[capitalise,noabbrev]{cleveref}

\usepackage{enumitem}
\usepackage{rotating}
\usepackage{placeins}
\usepackage{lastpage}

\newenvironment{interpretation}{\par\noindent{\bf Interpretation.}\itshape}{\par}

\newcommand{\X}{\mathcal{X}}

\newcommand{\dmu}{\delta_\mu}
\newcommand{\kmu}{k_\mu}

\newcommand{\sigmasq}{\sigma^2}

\newcommand{\E}{\mathbb{E}}

\newcommand{\Cov}{\operatorname{Cov}}

\newcommand{\rank}{\operatorname{rank}}

\newcommand{\argmin}{\operatorname*{argmin}}

\newcommand{\R}{\mathbb{R}}

\newcommand{\JS}{\mathrm{JS}}
\newcommand{\KL}{\mathrm{KL}}
\newcommand{\CE}{\mathrm{CE}}
\newcommand{\Hsq}{H^2}
\newcommand{\thmref}[1]{Thm.~\ref{#1}}
\newcommand{\propref}[1]{Prop.~\ref{#1}}
\newcommand{\corref}[1]{Cor.~\ref{#1}}
\newcommand{\defref}[1]{Def.~\ref{#1}}

\begin{document}

\jmlrheading{X}{2026}{1-\pageref{LastPage}}{X/26}{X/26}{dalla-riva26a}{Giulio V.~Dalla Riva}
\ShortHeadings{Task Ecologies and World-Tracking Representations in LLMs}{Dalla Riva}
\firstpageno{1}

\title{Task Ecologies and the Evolution of\\
World-Tracking Representations in Large Language Models}

\author{\name Giulio Valentino Dalla Riva \email me@gvdallariva.net \\
        \addr Baffelan O\"U \\
        \addr \url{https://www.baffelan.com}}

\editor{}

\maketitle

\begin{abstract}
We study language models as evolving model organisms and ask
when autoregressive next-token learning selects for world-tracking
representations. For any encoding of latent world states, the
Bayes-optimal next-token cross-entropy decomposes into the irreducible
conditional entropy plus a Jensen--Shannon excess term. That excess
vanishes if and only if the encoding preserves the training ecology's
equivalence classes. This yields a precise notion of ecological
veridicality for language models and identifies the minimum-complexity
zero-excess solution as the quotient partition by training equivalence.
We then determine when this fixed-encoding analysis applies to
transformer families: frozen dense and frozen Mixture-of-Experts
transformers satisfy it, in-context learning does not enlarge the
model's separation set, and per-task adaptation breaks the premise. The
framework predicts two characteristic failure modes: simplicity pressure
preferentially removes low-gain distinctions, and training-optimal
models can still incur positive excess on deployment ecologies that
refine the training ecology. A conditional dynamic extension shows how
inter-model selection and post-training can recover such gap
distinctions under explicit heredity, variation, and selection
assumptions. Exact finite-ecology checks and controlled microgpt
experiments validate the static decomposition, split-merge threshold,
off-ecology failure pattern, and two-ecology rescue mechanism in a
regime where the relevant quantities are directly observable. The goal
is not to model frontier systems at scale, but to use small language
models as laboratory organisms for theory about representational
selection.
\end{abstract}

\begin{keywords}
  ecological veridicality, representation learning, large language models,
  Jensen--Shannon divergence, multi-task selection
\end{keywords}

\section{Introduction}\label{sec:intro}

Recent work on language-model representations asks whether optimization drives models toward internal structure that tracks the world, or only toward whatever distinctions are locally useful for prediction. The Platonic Representation Hypothesis \citep{huh_etal_2024} argues that task generality, capacity, and simplicity jointly push learned representations toward a shared statistical model of reality; \citet{groeger_etal_2026} challenge the strongest form of that claim, showing that much apparent global alignment is a scale confound and that the robust signal is local-neighborhood rather than global-spectral convergence. Debates about whether language models develop ``world models'' or ``understanding'' \citep{bender_koller_2020,aguerayarcas_2022,mitchell_krakauer_2023,vandijk_etal_2023,cuskley_etal_2024,loru_etal_2025} and empirical demonstrations of domain-specific internal structure \citep{li_etal_2023,gurnee_tegmark_2024,nanda_etal_2023,taniguchi_etal_2025} concern the same issue. We isolate two parts of it that we can state exactly: for a fixed training ecology, which latent distinctions must an autoregressive model preserve in order to achieve Bayes-optimal next-token loss? And under explicit heredity, variation, and selection assumptions on model lineages, what population-level pressure does inter-model competition exert on those distinctions?

Throughout, ``representation'' means an encoding of latent world states into behavioural distinctions: which states the model keeps apart, which it merges, and which differences survive into its next-token predictions. This is close in spirit to the ecological-veridicality framework developed in evolutionary perception, where \citet{hoffman_etal_2015} showed that single-task selection generically favors non-veridical encodings, \citet{berke_etal_2022} showed by simulation that multi-task selection reverses this, and \citet{dallariva_2026} provided the full theory: the separation structure of the task ecology determines which distinctions are preserved, and population-level convergence requires explicit mutation-selection assumptions. An encoding is \emph{ecologically veridical} when it may merge task-equivalent states but not ecology-separated ones. We carry that logic into frozen autoregressive transformers.

Several nearby literatures frame parts of this problem. In multi-task representation learning, \citet{baxter_2000} and \citet{maurer_etal_2016} show that shared representations improve sample complexity, but they do not characterize the exact representational object selected by the autoregressive loss. \citet{lobashev_2025} gives a Bayesian route to convergence in the large-data limit, but attributes failure mainly to capacity mismatch. On the neural-theory side, \citet{wang_johnston_fusi_2025} prove approximately orthogonal latent-variable representations for feedforward networks at global minima, while mechanistic interpretability supplies architectural analogues rather than ecological theorems: \citet{elhage_etal_2021_framework} formalize the transformer residual stream as a shared communication channel, \citet{elhage_etal_2022_superposition} give a capacity-pressure account of feature storage, and \citet{gurnee_etal_2025_linebreaks} show that next-token training can induce low-dimensional internal geometry for structural task variables. The information-bottleneck literature \citep{tishby_etal_1999} is also adjacent, but our object is more concrete: the minimum-complexity encoding that achieves zero excess next-token loss under a fixed ecology.

We study that question in small transformers used as model organisms: systems simple enough that we can inspect induced partitions, exact finite-ecology quantities, and population-level selection trajectories directly. This is a methodological use of model organisms in the sense discussed by \citet{hubinger_etal_2024} and \citet{paez_2024}; \cref{sec:lab} makes the laboratory regime concrete.

We make four main contributions. First, we prove that the Bayes-optimal next-token loss induced by an encoding decomposes exactly into an irreducible entropy term plus a Jensen--Shannon excess term, and that this excess vanishes exactly when the encoding preserves the task-equivalence classes of the training ecology. This is a theorem about the target of the Bayes-optimal next-token objective under a fixed ecology, not a convergence theorem for realistic SGD. Second, we identify when that pressure is well-defined for transformer architectures: frozen dense and frozen Mixture-of-Experts transformers satisfy the required fixed-encoding conditions, in-context learning does not enlarge the separation set, and per-task adaptation changes the encoding. Third, we characterize the simplest zero-excess solution: the minimum-complexity encoding is exactly the quotient partition $W/{\sim_\mu}$, which preserves all and only the distinctions the ecology supports. Fourth, we add a conditional dynamic extension: under explicit heredity, variation, and selection assumptions on model lineages, inter-model selection pushes toward lower ecological excess loss, and a two-ecology mechanism shows how post-training can rescue distinctions weakly supported by the token ecology alone.

We proceed as follows. \Cref{sec:lab} introduces the laboratory model organism. \Cref{sec:ecology,sec:frozen,sec:static} formalise the autoregressive task ecology and establish the static optimality results. \Cref{sec:evolution} states when the ecological-veridicality population dynamics can be imported to model lineages and develops the two-ecology extension. \Cref{sec:mincomplex} develops the minimum-complexity and simplicity-pressure results. \Cref{sec:conclusion} collects the failure predictions, production-scale implications, geometric limits, and concluding discussion, with supplementary results in the appendices.

\section{The Laboratory LLM Organism}\label{sec:lab}

We build our empirical results on a single \emph{model organism}: a small Julia implementation of a frozen autoregressive transformer inspired by the architectural template of \citeauthor{karpathy_2026}'s~(\citeyear{karpathy_2026}) \emph{microgpt}. As in the model-organisms methodology discussed by \citet{hubinger_etal_2024} and \citet{paez_2024}, its value lies not in scale or ecological realism, but in the fact that we can directly observe, enumerate, and compare every theoretically relevant quantity (induced partitions, exact finite-ecology decompositions, population-level selection trajectories) against the theorems.

The laboratory world states are languages or language groups drawn from aligned multilingual corpora of three widely translated texts: \emph{Alice's Adventures in Wonderland}, Dante's \emph{Commedia}, and the \emph{Communist Manifesto}. The off-ecology probe uses the Voynich manuscript through an EVA transliteration from Rene Zandbergen's digital archive. Specific editions and digital sources are listed in \cref{app:corpora}. In the neural experiments, the observables are behavioural distance matrices, thresholded induced partitions, held-out token losses, and population-level selection trajectories. At the exact level, we collapse the same held-out corpora into finite empirical ecologies whose world states are languages and whose contexts are short prefix-length conditions, so we can evaluate the theorem quantities directly rather than only through SGD-trained proxies.

This distinction between an exact empirical ecology and a learned neural approximation also determines how the empirical results should be read. Some figures report theorem quantities directly, evaluated either in finite synthetic ecologies or in held-out empirical ecologies. Others report the behaviour of trained models relative to those same quantities, and therefore include the additional effects of optimization error, finite capacity, and finite-sample noise. The model organism validates the theoretical machinery in a regime where all quantities are observable; the predictions for production models (\cref{sec:conclusion}) necessarily rely on proxies.

\section{The Autoregressive Task Ecology}\label{sec:ecology}

We use only a limited part of the framework of \citet{dallariva_2026}. In
that setting, one starts with a finite world-state space $W$, a task
ecology $\mu$ over functions on $W$, and an encoding $p$ that may merge
world states. The ecology induces an equivalence relation on $W$: two
states are equivalent when the tasks sampled from $\mu$ do not
distinguish them. An encoding is \emph{ecologically veridical} when it
merges only states that are equivalent in that sense. The static theory
then identifies those veridical encodings as the zero-excess solutions,
while the dynamic theory adds conditional evolutionary convergence when
a genuine reproduction--selection--mutation process is present.

For an autoregressive language model with frozen weights, those objects
take the following form.

\subsection{World States and Linguistic Accessibility}

\begin{definition}[World states]
Let $W$ be a finite set of \emph{world states}: latent configurations of reality that are relevant to the agent's task ecology, equipped with a prior distribution $\pi$ with $\pi(w) > 0$ for all $w \in W$. Each $w \in W$ determines a joint distribution over observable texts.
\end{definition}

We do not require $W$ to include all possible configurations of reality, only those that the agent's task ecology may query. The finiteness assumption matches the finite-state setup of \citet{dallariva_2026} and holds because any practical task ecology distinguishes only finitely many states.

For language models, $W$ includes cultural and informational states alongside physical ones. Moreover, LLMs are now among the agents that produce such states: model-generated text enters future training corpora, model-written code becomes infrastructure, model outputs reshape what is ``true'' about the informational environment. $W$ is therefore not exogenous to the population of models whose veridicality we study; it is partially co-constructed by them. At any snapshot in time, we may fix $W$ and apply the framework's static results (Thms.~\ref{thm:llm-veridicality} and~\ref{thm:min-complexity}). But the interpretation of ecological veridicality must acknowledge that the target reality is itself a moving object shaped by the models that are veridical to it. We return to this point in \cref{sec:conclusion} under the heading of niche construction.

Let $V$ be a finite token vocabulary, let $V^*$ denote the set of all finite token sequences over $V$, and let $\Delta(V)$ denote the simplex of probability distributions on $V$.

\begin{definition}[Text distribution conditioned on world state]
For each $w \in W$ and each finite token sequence (context) $c \in V^*$, let $P_w(\cdot \mid c) \in \Delta(V)$ denote the conditional distribution of the next token given context $c$ when the world state is $w$.
\end{definition}

\begin{definition}[Linguistic equivalence]\label{def:ling-equiv}
Two world states $w_1, w_2 \in W$ are \emph{linguistically equivalent}, written $w_1 \approx_L w_2$, if
\[
P_{w_1}(\cdot \mid c) = P_{w_2}(\cdot \mid c)
\quad \text{for all } c \in V^*.
\]
That is, no text context can distinguish them.
\end{definition}

\begin{remark}
Linguistic equivalence is an equivalence relation on $W$ (reflexive, symmetric, and transitive, the last by transitivity of equality). The equivalence classes $[w]_L$ partition $W$ into groups of states that are indistinguishable through text. These classes are at least as coarse as the task-equivalence classes $[w]_\mu$, i.e.\ the equivalence classes induced by the task ecology $\mu$ in the framework of \citet{dallariva_2026}, and in general strictly coarser: an embodied agent with non-linguistic sensory channels may separate states that are linguistically equivalent.
\end{remark}

\subsection{Contexts as Tasks}

In this subsection, we translate the ecological-veridicality framework into the autoregressive setting. We treat next-token prediction as a task ecology in the precise sense needed by \citet{dallariva_2026}, and the resulting objective admits the same kind of exact excess-loss decomposition. Once that translation is in place, ecological veridicality becomes a direct statement about the standard token-level training loss.

The ingredients of the decomposition are standard information-theoretic facts: Bayes-optimal prediction under log-loss is given by conditional mixtures, and the excess above the entropy floor expands into conditional KL or Jensen--Shannon terms. What is new here is their assembly for the autoregressive world-state setting.

\begin{definition}[Vector context-task]\label{def:context-task}
A \emph{context-task} is a vector-valued function $f_c \colon W \to \R^{|V|}$ defined by a context $c \in V^*$:
\[
f_c(w) = P_w(\cdot \mid c),
\]
the next-token distribution in world state $w$.
\end{definition}

\begin{definition}[Training task ecology]\label{def:train-ecology}
Let $D$ be a distribution over context-target pairs $(c, v)$, where $c \in V^*$ is a context and $v \in V$ is the next-token target, and let $D_C$ be its marginal over contexts. The \emph{training task ecology} is the pushforward measure
\[
\mu_D = D_C \circ f^{-1}_{(\cdot)},
\]
i.e.\ the distribution over vector tasks obtained by sampling a context $c$ from $D_C$ and mapping it to the corresponding task $f_c$.
\end{definition}

For the induced task ecology and the excess-loss decomposition below, the full pair distribution $D$ matters only through its context marginal $D_C$. The next-token law is supplied separately by the world-conditioned distributions $P_w(\cdot \mid c)$.

\begin{definition}[Token log-loss of an encoding]\label{def:ce-risk}
For an encoding $p \colon W \to X$ into an abstract code space $X$, and a decoder $q \colon X \times V^* \to \Delta(V)$, define the expected next-token cross-entropy under the training distribution $D$ by
\[
L_D(p,q)
:= \E_{w \sim \pi,\, c \sim D_C,\, v \sim P_w(\cdot \mid c)}
   \bigl[-\log q(v \mid p(w),c)\bigr].
\]
Equivalently,
\[
L_D(p,q)
= \E_{w \sim \pi,\, c \sim D_C}
  \bigl[\CE(P_w(\cdot \mid c),\, q(p(w),c))\bigr],
\]
where $\CE(P,Q) = H(P) + \KL(P \| Q)$ is cross-entropy, $H(P)$ is Shannon entropy, and $\KL(P\|Q)$ is Kullback--Leibler divergence.
\end{definition}

In the entropy and mutual-information identities below, we write $(W,C,Y)$ for the random world state, context, and next token generated by
\[
W \sim \pi,
\qquad
C \sim D_C,
\qquad
Y \sim P_W(\cdot \mid C),
\]
and set $X := p(W)$.

\begin{theorem}[Optimal decoder and exact excess-loss decomposition]\label{thm:ce-decomposition}
Fix an encoding $p \colon W \to X$, let $X = p(W)$, and write $C_x := \{w \in W : p(w)=x\}$ for the cell of code~$x$. For each non-empty cell define
\[
\pi_x := \sum_{w \in C_x} \pi(w),
\qquad
\alpha_x(w) := \pi(w)/\pi_x,
\]
and the cell-average next-token distribution
\[
\bar P_x(\cdot \mid c)
:= \sum_{w \in C_x} \alpha_x(w)\, P_w(\cdot \mid c).
\]

Then:
\begin{enumerate}[label=(\alph*)]
\item The Bayes-optimal decoder for $p$ is
$q_p^*(\cdot \mid x,c) = \bar P_x(\cdot \mid c)$.

\item The optimal loss attainable with encoding $p$, denoted
$L_D^*(p) := \inf_q L_D(p,q)$, satisfies
\[
L_D^*(p) = H(Y \mid C,X),
\]
and admits the exact decomposition
\[
\begin{aligned}
L_D^*(p)
&= H(Y \mid C,W) + I(Y;W \mid C,X) \\
&= H(Y \mid C,W)
  + \E_{c \sim D_C}
    \biggl[\sum_x \pi_x\,
    \JS_{\alpha_x}\bigl(\{P_w(\cdot \mid c)\}_{w \in C_x}\bigr)\biggr].
\end{aligned}
\]
where all entropies and mutual informations are taken under the joint law induced by $\pi$, $D_C$, and the conditional token distributions $P_w(\cdot \mid c)$, and where $\JS_{\alpha_x}$ is the weighted Jensen--Shannon divergence inside cell~$C_x$, i.e.\ the $\alpha_x$-weighted average of $\KL(P_w(\cdot\mid c)\|\bar P_x(\cdot\mid c))$ over $w \in C_x$.

\item Consequently, the excess loss above the irreducible entropy floor, $L_D^*(p) - H(Y \mid C,W)$, vanishes if and only if every cell of $p$ contains only training-equivalent states, i.e.\ whenever $p(w_1)=p(w_2)$, we have
\[
P_{w_1}(\cdot \mid c) = P_{w_2}(\cdot \mid c)
\]
for $D_C$-almost every $c$. If $D_C$ separates all points, equality requires $p$ to be injective on~$W$.
\end{enumerate}
\end{theorem}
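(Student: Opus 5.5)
The plan is to reduce everything to a pointwise statement per pair $(x,c)$ and then reassemble with the chain rule for entropy. First, I will rewrite the loss by conditioning on the code and context:
\[
L_D(p,q)=\E_{c\sim D_C}\sum_x \pi_x \sum_{w\in C_x}\alpha_x(w)\,\CE\bigl(P_w(\cdot\mid c),q(\cdot\mid x,c)\bigr).
\]
Since $\CE(P,Q)$ is linear in $P$, the inner weighted sum equals $\CE(\bar P_x(\cdot\mid c),q(\cdot\mid x,c)) = H(\bar P_x(\cdot\mid c)) + \KL(\bar P_x(\cdot\mid c)\,\|\,q(\cdot\mid x,c))$. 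Gibbs' inequality then gives the pointwise minimiser $q=\bar P_x(\cdot\mid c)$, which proves (a). Because $q$ may be chosen independently for each $(x,c)$, the infimum over decoders commutes with the expectation.

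Measurability is a potential concern, but I expect it to be trivial: $X$ is finite and $V^*$ is countable, so there is no real obstacle. One should still note that the Bayes rule only needs to be specified $D_C$-a.e.

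Next comes (b). Under the joint law, $P(Y=\cdot\mid C=c,X=x)=\bar P_x(\cdot\mid c)$, because $Y$ depends on $W$ only through $P_W$ and $W\mid X=x$ has law $\alpha_x$. Hence the optimal loss is $\E\,H(\bar P_X(\cdot\mid C)) = H(Y\mid C,X)$. Since $X$ is a function of $W$, we have $H(Y\mid C,X)=H(Y\mid C,W)+I(Y;W\mid C,X)$, using $H(Y\mid C,X,W)=H(Y\mid C,W)$.

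For the Jensen--Shannon form, I will use the standard identity
\[
H\Bigl(\sum_w\alpha_w P_w\Bigr)-\sum_w\alpha_w H(P_w)=\sum_w\alpha_w\KL(P_w\|\bar P)=\JS_\alpha(\{P_w\}).
\]
This is obtained by expanding $\log\bar P$. Applying it inside each $(x,c)$ and averaging with weights $\pi_x$ and $D_C$ gives the displayed formula.

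For (c), the excess is a nonnegative combination of KL terms with weights $\pi_x\alpha_x(w)=\pi(w)>0$. It therefore vanishes iff, for $D_C$-a.e.\ $c$, every $P_w(\cdot\mid c)$ with $w\in C_x$ equals $\bar P_x(\cdot\mid c)$. That holds iff all members of each cell share the same conditional law.

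The main delicacy is the ``almost every'' quantifier. For each cell, the exceptional set is a countable union (over the finitely many $w$) of null sets, so it is null. Conversely, if two cell-mates differ on a set of positive $D_C$-mass, then the KL term is positive on that set, because strict convexity makes the JS divergence positive whenever two positively weighted components differ.

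The injectivity clause then follows. If $D_C$ separates points, meaning that for each $w_1\neq w_2$ the set of contexts distinguishing them has positive mass, then no cell can contain two states. I also need to handle possibly infinite entropies; I will assume $H(Y\mid C,W)<\infty$, which holds automatically since $V$ is finite and so entropies are bounded by $\log|V|$.
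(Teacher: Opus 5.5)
Your proposal is correct and follows essentially the same route as the paper. Both arguments optimize the decoder pointwise in each $(x,c)$, identify the optimum with $H(Y\mid C,X)$, apply the chain rule with $H(Y\mid C,X,W)=H(Y\mid C,W)$, expand the excess cell by cell into weighted Jensen--Shannon terms, and use nonnegativity (with the finite union of null sets) for (c). The only minor difference is that you first collapse each cell via linearity of cross-entropy in its first argument and then apply Gibbs' inequality, which justifies the mixture minimiser slightly more directly than the paper's unproved assertion that the weighted sum of KL divergences is minimized at the mixture.
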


\begin{proof}
For fixed $x$ and $c$, the contribution to $L_D(p,q)$ from code $x$ is
\[
\sum_{w \in C_x} \pi(w)\,
\CE\bigl(P_w(\cdot \mid c),\, q(x,c)\bigr).
\]
Using $\CE(P,Q)=H(P)+\KL(P\|Q)$, this equals
\[
\sum_{w \in C_x} \pi(w)\, H(P_w(\cdot \mid c))
+ \pi_x \sum_{w \in C_x} \alpha_x(w)\,
  \KL\bigl(P_w(\cdot \mid c)\|q(x,c)\bigr).
\]
The first term is independent of $q$, and the second is minimized at the mixture $q(x,c)=\bar P_x(\cdot \mid c)$, proving~(a). Substituting this optimizer yields
\[
L_D^*(p)
= \E_{w,c,v}\bigl[-\log P(Y=v \mid X=p(w), C=c)\bigr]
= H(Y \mid C,X),
\]
which is the first identity in~(b). The standard chain rule gives
\[
H(Y \mid C,X) = H(Y \mid C,X,W) + I(Y;W \mid C,X).
\]
Since $X=p(W)$ is a deterministic function of~$W$, conditioning on $X$ in addition to~$W$ adds no information, so $H(Y\mid C,X,W)=H(Y\mid C,W)$. Therefore
\[
H(Y \mid C,X) = H(Y \mid C,W) + I(Y;W \mid C,X).
\]
Expanding the conditional mutual information cell-by-cell gives the weighted Jensen--Shannon form in~(b):
\[
I(Y;W \mid C,X)
= \E_{c \sim D_C}
  \biggl[\sum_x \pi_x
  \sum_{w \in C_x}\alpha_x(w)\,
  \KL\bigl(P_w(\cdot \mid c)\|\bar P_x(\cdot \mid c)\bigr)\biggr].
\]
For~(c), each weighted Jensen--Shannon term is non-negative and is zero iff all distributions in that cell agree. Hence the total excess is zero iff for every $x$ and $D_C$-almost every $c$, the family $\{P_w(\cdot \mid c)\}_{w \in C_x}$ is constant. If $D_C$ separates all points, no two distinct states can satisfy this, so every zero-excess encoding must be injective.
\end{proof}

\begin{figure}[t]
\centering
\includegraphics[width=0.78\linewidth]{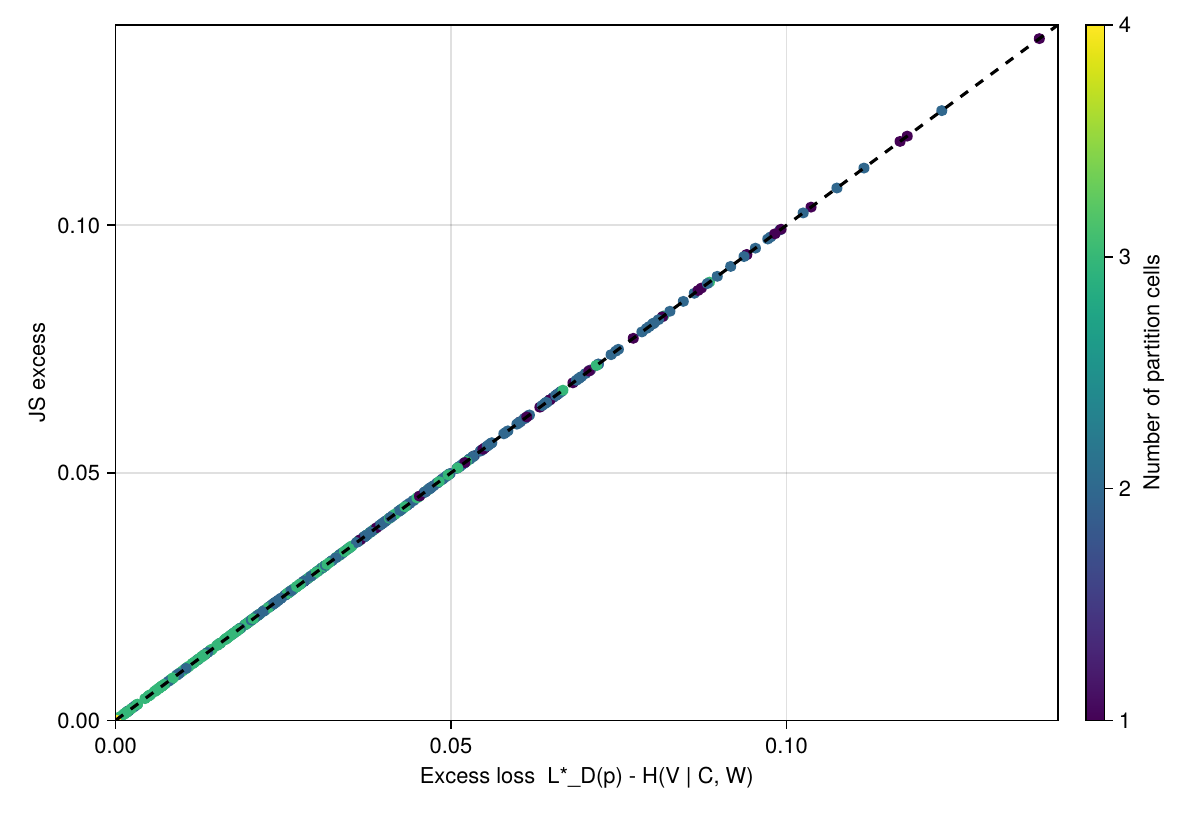}
\caption{Exact finite-ecology calibration of \thmref{thm:ce-decomposition}. Each point is a discrete ecology/partition pair from the synthetic sweep. The excess loss coincides exactly with the Jensen--Shannon excess term, giving the diagonal identity predicted by the theorem.}
\label{fig:exp0-decomposition}
\end{figure}

\begin{figure}[t]
\centering
\includegraphics[width=\linewidth]{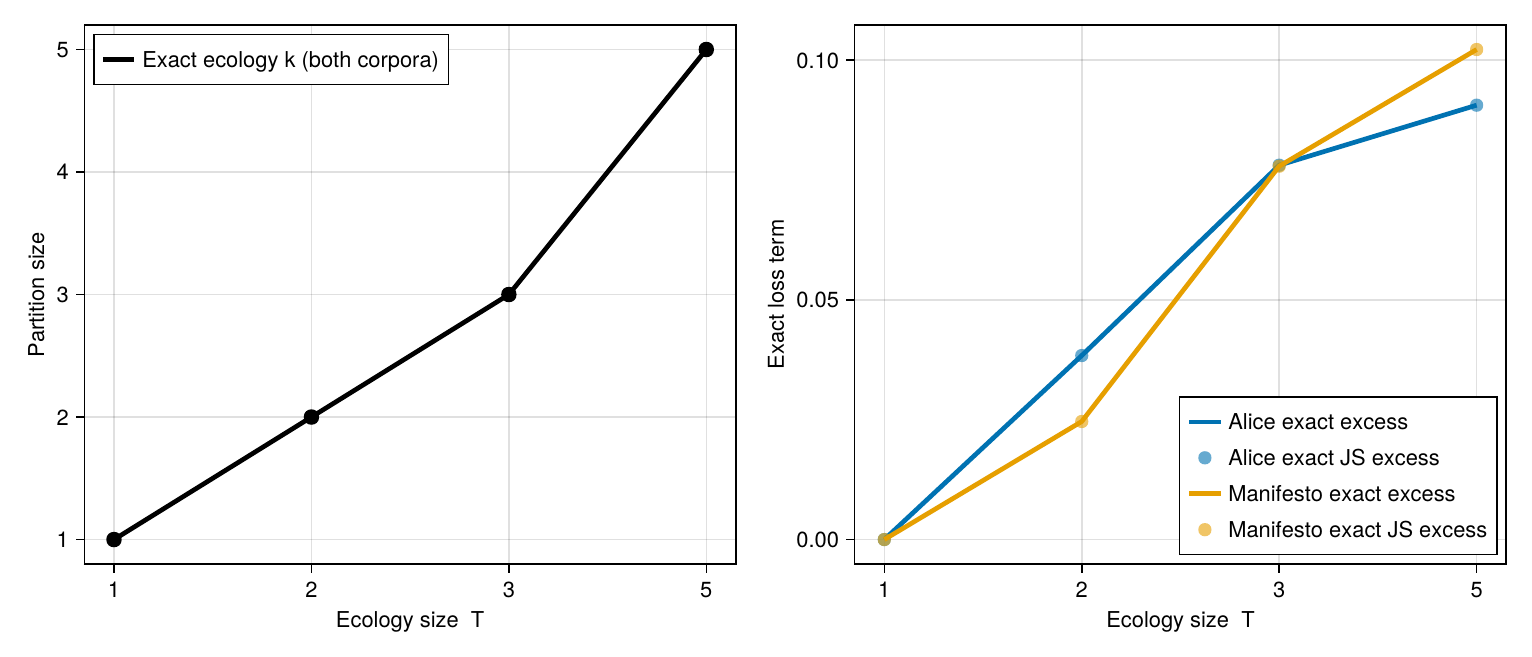}
\caption{Empirical-corpus corroboration of the static theory. Left: the exact empirical ecology built from held-out \emph{Alice} and \emph{Manifesto} corpora has partition size $k_{\mathrm{exact}}=1,2,3,5$ as the ecology expands. Right: on those same exact empirical ecologies, the measured excess loss coincides with the exact Jensen--Shannon excess term, as predicted by \thmref{thm:ce-decomposition}.}
\label{fig:exp1-static}
\end{figure}

\begin{definition}[Task distance under the training ecology]\label{def:text-distance}
The task distance under $\mu_D$ is defined via the squared Hellinger distance between the next-token distributions:
\[
\sigmasq_D(w_1, w_2)
= \E_{c \sim D_C}\bigl[\Hsq(P_{w_1}(\cdot\mid c),\, P_{w_2}(\cdot\mid c))\bigr],
\]
where $D_C$ is the marginal distribution of $D$ over contexts and $\Hsq(P, Q) = \tfrac{1}{2} \sum_v (\sqrt{P(v)} - \sqrt{Q(v)})^2$ is the squared Hellinger distance.
\end{definition}

\citet{dallariva_2026} defines task distance via expected squared difference of task values. The qualitative separation structure, i.e.\ which pairs of world states have $\sigmasq_D > 0$, depends only on whether $P_{w_1}(\cdot\mid c) \neq P_{w_2}(\cdot\mid c)$ on a set of positive $D_C$-measure, and is therefore identical under any divergence that vanishes exactly on equality. For the actual autoregressive objective, \thmref{thm:ce-decomposition} provides the exact loss statement directly: next-token cross-entropy is minimized exactly when the encoding preserves the same $D_C$-almost-everywhere equivalence classes. Hellinger is retained only as an auxiliary quantitative metric because it gives a Hilbert-space geometry (\cref{app:geometry}) and the same zero/nonzero separation structure. Thus we import the separation logic from \citet{dallariva_2026}, but recast the geometry in a Hellinger-based analogue; the main loss results below are proved directly for cross-entropy.

The squared Hellinger distance is an $\ell^2$ norm on square-root-transformed distributions: $\Hsq(P,Q) = \tfrac{1}{2}\|\sqrt{P} - \sqrt{Q}\|_2^2$. This preserves the Hilbert space structure needed for the geometric results in \cref{app:geometry} (the canonical embedding becomes $\Psi_D(w)(c) = \sqrt{P_w(\cdot\mid c)} \in L^2(D_C; \R^{|V|})$). For the present paper, the only unconditional comparison facts we use are the one-sided bound
\[
\Hsq(P,Q) \;\le\; -2\log\!\bigl(1 - \Hsq(P,Q)\bigr) \;\le\; \KL(P \| Q),
\]
where the second inequality is R\'{e}nyi monotonicity (here $D_\alpha$ denotes the R\'{e}nyi divergence of order $\alpha$, with $D_{1/2}(P\|Q) = -2\log \sum_v \sqrt{P(v)Q(v)}$ and $D_1(P\|Q)=\KL(P\|Q)$). Thus positive Hellinger separation implies positive KL separation, and small KL forces small Hellinger. The main text uses only these one-sided facts and the shared zero set $\Hsq(P,Q)=0 \Leftrightarrow P=Q \Leftrightarrow \KL(P\|Q)=0$; any stronger local equivalence is inessential here.

\begin{remark}[Scalar-coordinate version]
If one instead defines scalar tasks $f_{c,v}(w)=P_w(v\mid c)$, then $\E_{(c,v)\sim D}[(f_{c,v}(w_1)-f_{c,v}(w_2))^2] = \E_{c\sim D_C}\!\left[\sum_v D(v\mid c)\Delta_v(c)^2\right]$, with $\Delta_v(c)=P_{w_1}(v\mid c)-P_{w_2}(v\mid c)$. This equals the unweighted $\ell^2$ norm only under additional assumptions on $D(v\mid c)$. The vector-task formulation avoids this mismatch.
\end{remark}

\begin{corollary}[Separation under the training ecology]\label{cor:text-separation}
The training ecology $\mu_D$ separates $w_1$ from $w_2$ (in the sense of \citet[Definition~3.3]{dallariva_2026}) if and only if
\[
D_C\bigl(\{c \in V^* : P_{w_1}(\cdot\mid c) \neq P_{w_2}(\cdot\mid c)\}\bigr) > 0.
\]
That is, there exists a set of training contexts of positive measure under which the next-token distributions differ.
\end{corollary}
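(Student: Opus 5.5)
The plan is to reduce the statement to the zero/nonzero behaviour of the task distance $\sigmasq_D$ from \defref{def:text-distance}. In the framework of \citet{dallariva_2026}, an ecology $\mu$ separates $w_1$ from $w_2$ exactly when the task distance $\E_{f\sim\mu}[\|f(w_1)-f(w_2)\|^2]$ is strictly positive, that is, when the tasks drawn from $\mu$ fail to agree on $w_1,w_2$ on a set of positive $\mu$-measure. I will take this as the meaning of Definition~3.3 there. For the vector context-tasks of \defref{def:context-task}, I will then push the expectation over $\mu_D$ back to an expectation over $D_C$ using the pushforward structure of \defref{def:train-ecology}. This gives $\E_{f\sim\mu_D}[g(f)] = \E_{c\sim D_C}[g(f_c)]$ for any measurable $g$.

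First, I apply this change of variables with $g(f)=\|f(w_1)-f(w_2)\|_2^2$. The separation quantity becomes $\E_{c\sim D_C}\bigl[\|P_{w_1}(\cdot\mid c)-P_{w_2}(\cdot\mid c)\|_2^2\bigr]$. Second, I use the elementary fact that a nonnegative integrand has positive expectation if and only if it is positive on a set of positive measure. Because $\|P-Q\|_2^2>0$ holds exactly when $P\neq Q$, the expectation is positive if and only if $D_C(\{c: P_{w_1}(\cdot\mid c)\neq P_{w_2}(\cdot\mid c)\})>0$.

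Third, I will note, as the paragraph after \defref{def:text-distance} does, that the same argument goes through verbatim with $\Hsq$ in place of squared $\ell^2$. This works because $\Hsq(P,Q)=0$ if and only if $P=Q$. So the criterion agrees with $\sigmasq_D(w_1,w_2)>0$, and with the nonvanishing of the Jensen--Shannon cell term in \thmref{thm:ce-decomposition}(c) for the two-element cell $\{w_1,w_2\}$. This last agreement ties the corollary to the loss statement.

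The only delicate point is measurability. The set $\{c: P_{w_1}(\cdot\mid c)\neq P_{w_2}(\cdot\mid c)\}$ must be $D_C$-measurable, and the pushforward must be well-defined. Both are immediate here, since $V^*$ is countable and carries the discrete $\sigma$-algebra, so every subset is measurable. The pushforward then preserves the measure of the preimage of the event ``tasks differ at $w_1,w_2$.'' I expect the main obstacle to be fixing exactly which form of Definition~3.3 in \citet{dallariva_2026} to use, since that paper works with scalar tasks. I would handle this by appealing to the Remark on the scalar-coordinate version: the qualitative zero set is the same under any divergence that vanishes exactly on equality.
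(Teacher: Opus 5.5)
Your argument is correct and is essentially the paper's. The paper applies the same fact directly to the Hellinger-based $\sigmasq_D$ of \defref{def:text-distance}: a nonnegative integrand that vanishes exactly on equality has positive $D_C$-expectation iff it is positive on a set of positive measure. You instead pass through the pushforward with squared $\ell^2$ and then note that Hellinger behaves identically.

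One correction to your last paragraph: the scalar-coordinate Remark does not show that the zero sets agree. Its $D(v\mid c)$-weighted distance can vanish even when $P_{w_1}(\cdot\mid c)\neq P_{w_2}(\cdot\mid c)$, namely when $D(\cdot\mid c)$ charges only tokens where the two laws coincide. That is precisely why the paper adopts the vector-task formulation rather than the scalar one.
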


\begin{proof}
By the definition of task distance under the training ecology,
\[
\sigmasq_D(w_1,w_2)
= \E_{c\sim D_C}\!\left[\Hsq\bigl(P_{w_1}(\cdot\mid c),P_{w_2}(\cdot\mid c)\bigr)\right].
\]
The squared Hellinger distance is nonnegative and vanishes exactly when its two arguments are equal. Hence the expectation is strictly positive if and only if the integrand is strictly positive on a set of positive $D_C$-measure. This is equivalent to saying that
\[
P_{w_1}(\cdot\mid c)\neq P_{w_2}(\cdot\mid c)
\]
for a set of contexts $c$ of positive $D_C$-measure, which is exactly the stated condition.
\end{proof}

\begin{definition}[Textual separation margin]\label{def:text-margin}
When $\mu_D$ separates all points of $W$:
\[
\delta_D = \min_{w_1 \neq w_2} \sigmasq_D(w_1, w_2) > 0.
\]
\end{definition}

\begin{remark}[Quantitative connection to training loss]
\thmref{thm:ce-decomposition} already gives the exact zero/nonzero characterisation for the token-level cross-entropy objective. Hellinger serves only an auxiliary role: it provides a geometry on world states and a quantitative surrogate for how strongly a pair is separated. The unconditional facts we use are only that $\Hsq(P,Q) \le \KL(P \| Q)$ and that both vanish iff $P=Q$. Thus Hellinger separation implies KL separation, and small KL implies small $\Hsq$. If one also works locally away from the boundary of the simplex, then the two divergences are second-order equivalent near equality, with $\KL(P\|Q)=4\Hsq(P,Q)+o(\Hsq)$ under our convention for $\Hsq$.
\end{remark}

\subsection{Bounding the Linguistic Separation}

The previous subsection defined the training ecology induced by a corpus. We now relate that ecology to the larger space of distinctions that are in principle expressible in language at all. This matters because the training corpus can only separate pairs that are both linguistically distinguishable and actually probed by contexts of positive training measure.

\begin{proposition}[Linguistic equivalence bounds the text ecology]\label{prop:ling-bound}
For all $w_1, w_2 \in W$:
\begin{enumerate}[label=(\alph*)]
\item If $w_1 \approx_L w_2$ then $\sigmasq_D(w_1, w_2) = 0$ for every training distribution~$D$.

\item If $w_1 \not\approx_L w_2$, then there exists a training distribution $D$ such that $\sigmasq_D(w_1, w_2) > 0$.

\item Therefore, $[w]_{\mu_D} \supseteq [w]_L$ for every $D$. Equality holds whenever, for every pair $w_1 \not\approx_L w_2$, the context marginal $D_C$ assigns positive mass to at least one separating context $c$ with $P_{w_1}(\cdot\mid c) \neq P_{w_2}(\cdot\mid c)$. Since $W$ is finite, this requires positive mass only on a finite witness set of such contexts; full support on $V^*$ is a stronger idealization, not a necessity.
\end{enumerate}
\end{proposition}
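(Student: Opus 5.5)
The plan is to argue directly from the definition of $\sigmasq_D$ and the characterisation in \corref{cor:text-separation}. The one fact we need throughout is that the squared Hellinger distance is nonnegative and vanishes exactly on equal arguments. Part~(c) then follows by assembling (a) and (b) and treating the finite set of pairs one at a time.

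For~(a), assume $w_1 \approx_L w_2$. By \defref{def:ling-equiv}, $P_{w_1}(\cdot\mid c)=P_{w_2}(\cdot\mid c)$ for every $c\in V^*$. So the integrand $\Hsq(P_{w_1}(\cdot\mid c),P_{w_2}(\cdot\mid c))$ is identically zero on $V^*$. Its expectation under any context marginal $D_C$ is therefore zero, whatever $D$ is.

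For~(b), assume $w_1\not\approx_L w_2$. Then there is a witness context $c^*\in V^*$ with $P_{w_1}(\cdot\mid c^*)\neq P_{w_2}(\cdot\mid c^*)$. We construct $D$ with context marginal $D_C=\delta_{c^*}$ and any next-token target law, since only $D_C$ enters $\sigmasq_D$. This gives $\sigmasq_D(w_1,w_2)=\Hsq(P_{w_1}(\cdot\mid c^*),P_{w_2}(\cdot\mid c^*))>0$.

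For~(c), we first need to make explicit what $[w]_{\mu_D}$ is. By \corref{cor:text-separation}, $w_1\sim_{\mu_D}w_2$ iff $\sigmasq_D(w_1,w_2)=0$, i.e.\ iff the next-token distributions agree for $D_C$-almost every $c$. With this identification, the inclusion $[w]_L\subseteq[w]_{\mu_D}$ is exactly part~(a).

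For the equality clause, take $w'\in[w]_{\mu_D}$ and suppose for contradiction that $w'\not\approx_L w$. The hypothesis supplies a separating context $c$ with $D_C(\{c\})>0$. Then $\sigmasq_D(w,w')\ge D_C(\{c\})\,\Hsq(P_w(\cdot\mid c),P_{w'}(\cdot\mid c))>0$, which contradicts $w'\sim_{\mu_D}w$.

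The remark about finite witnesses is handled by counting. $W$ is finite, so there are at most $\binom{|W|}{2}$ linguistically inequivalent pairs. Choosing one witness $c_{ij}$ per pair gives a finite set $S\subset V^*$. Any $D_C$ with positive mass on every element of $S$ suffices, for example the uniform distribution on $S$ or a mixture of that with any other context law. Full support on the countable set $V^*$ is then clearly not necessary.

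I expect no serious obstacle. The only point needing care is making the equivalence $\sim_{\mu_D}$ concrete through \corref{cor:text-separation}, since the relation is imported from \citet{dallariva_2026}. I would state explicitly that it coincides with the zero set of $\sigmasq_D$, so that (a) and (b) translate directly into the statements about classes.
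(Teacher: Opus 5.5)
Your proposal is correct and follows the paper's proof essentially step for step. Part (a) comes from the identically vanishing integrand, part (b) from placing positive $D_C$-mass on a witness context $c^*$ (your choice $D_C=\delta_{c^*}$ is a special case of the paper's mass-$\varepsilon$ construction), and part (c) from combining these with a finite set of witnesses. Your equality argument is slightly tighter than the paper's: you identify $\sim_{\mu_D}$ with the zero set of $\sigmasq_D$ via \corref{cor:text-separation} and use the direct bound $\sigmasq_D(w,w')\ge D_C(\{c\})\,\Hsq(P_w(\cdot\mid c),P_{w'}(\cdot\mid c))>0$ for the \emph{given} $D$, whereas the paper appeals to part~(b), which as stated only guarantees that some separating $D$ exists.
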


\begin{proof}
(a) If $P_{w_1}(\cdot\mid c) = P_{w_2}(\cdot\mid c)$ for all $c$, every integrand vanishes.
(b) By $w_1 \not\approx_L w_2$, $\exists\, c^*$ with $P_{w_1}(\cdot\mid c^*) \neq P_{w_2}(\cdot\mid c^*)$. Let $D_C$ assign mass $\varepsilon > 0$ to $c^*$. Then $\sigmasq_D > 0$.
(c) Part~(a) implies $[w]_{\mu_D} \supseteq [w]_L$ for every $D$: linguistic equivalence forces zero ecology distance under every training distribution. For the converse under the stated witness condition, take any pair $w_1 \not\approx_L w_2$. By hypothesis there is a separating context $c^*$ with $D_C(c^*) > 0$. Then part~(b) gives $\sigmasq_D(w_1,w_2) > 0$, so $w_1$ and $w_2$ cannot lie in the same $\mu_D$-equivalence class. Hence no $\mu_D$-class can strictly contain multiple linguistic classes, and $[w]_{\mu_D} = [w]_L$. Because $W$ is finite, only finitely many non-linguistically-equivalent pairs exist, so only finitely many witness contexts are needed. Full support on $V^*$ implies this condition automatically, but is stronger than what the argument actually uses.
\end{proof}

\begin{proposition}[Ecology expansion refines equivalence]\label{prop:ecology-expansion}
Let $\mu' = (1-\alpha)\mu_D + \alpha \nu$ for some $\alpha \in (0,1]$ and any additional task distribution $\nu$. Then for all $w_1,w_2$:
\[
\sigmasq_{\mu'}(w_1,w_2) = (1-\alpha)\sigmasq_D(w_1,w_2) + \alpha\, \sigmasq_\nu(w_1,w_2).
\]
Hence $[w]_{\mu'} \subseteq [w]_{\mu_D}$: adding task families can split existing equivalence classes but cannot merge previously separated states.
\end{proposition}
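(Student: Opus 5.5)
The argument has two steps. First we show that the task distance is linear in the ecology; the refinement of equivalence classes then follows from nonnegativity. The task distance is defined as an expectation over contexts of $\Hsq(P_{w_1}(\cdot\mid c),P_{w_2}(\cdot\mid c))$. Equivalently, since $\mu_D$ is the pushforward of $D_C$ under $c\mapsto f_c$, it is the expectation over vector tasks $f\sim\mu_D$ of the functional
\[
\phi_{w_1,w_2}(f) := \Hsq\bigl(f(w_1), f(w_2)\bigr).
\]
The change-of-variables formula for pushforward measures justifies this: $\E_{c\sim D_C}[\phi(f_c)] = \E_{f\sim\mu_D}[\phi(f)]$. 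So the first step is to define $\sigmasq_\mu(w_1,w_2) := \E_{f\sim\mu}[\phi_{w_1,w_2}(f)]$ for an arbitrary task distribution $\mu$ on vector tasks. This covers both $\mu_D$ and the added family $\nu$, and also the mixture $\mu'$. We should remark that if $\nu$ is itself induced by contexts, this agrees with the context-level definition of \defref{def:text-distance}.

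The linearity identity is then immediate, because integration is linear in the measure. For the mixture $\mu' = (1-\alpha)\mu_D + \alpha\nu$ and the nonnegative measurable function $\phi_{w_1,w_2}$,
\[
\int \phi_{w_1,w_2}\, d\mu' = (1-\alpha)\int \phi_{w_1,w_2}\, d\mu_D + \alpha \int \phi_{w_1,w_2}\, d\nu .
\]
There are no integrability issues, since $0 \le \Hsq \le 1$. For the same reason the identity also holds at $\alpha = 1$.

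For the refinement claim, suppose $w_2 \in [w_1]_{\mu'}$, i.e.\ $\sigmasq_{\mu'}(w_1,w_2)=0$. Both summands on the right-hand side are nonnegative. When $\alpha<1$ the coefficient $1-\alpha$ is positive, so $\sigmasq_D(w_1,w_2)=0$ and hence $w_2\in[w_1]_{\mu_D}$. This gives $[w]_{\mu'}\subseteq[w]_{\mu_D}$.

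The main subtlety is the endpoint $\alpha=1$. There $\mu'=\nu$ entirely, and the inclusion can genuinely fail: $\nu$ might not probe contexts that $\mu_D$ separated. I would handle this by stating the refinement conclusion for $\alpha\in(0,1)$, which is the intended ``adding task families'' reading, and noting that at $\alpha=1$ only the identity survives. Alternatively, the inclusion at $\alpha=1$ holds under the extra assumption that $\nu$ dominates the separating mass of $\mu_D$.

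Finally, the equivalence classes are defined via zero task distance. As in \corref{cor:text-separation}, this is independent of the choice of divergence that vanishes exactly on equality, so the conclusion transfers to the separation relation of \citet{dallariva_2026}.
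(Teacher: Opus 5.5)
Your proof is correct and is essentially the paper's argument: linearity of the expectation in the mixing measure gives the interpolation identity, and the refinement follows because the $(1-\alpha)\sigmasq_D$ term is nonnegative and strictly positive whenever $\mu_D$ separates the pair. You state this contrapositively, and you define $\sigmasq_\nu$ via the pushforward onto $\Delta(V)$-valued vector tasks rather than the paper's generalized-ecology functional with task-specific $d_t$, but the step is the same. Your caveat at $\alpha=1$ is also right, and the paper's own proof misses it: its claim that the first term contributes $(1-\alpha)\sigmasq_D(w_1,w_2)>0$ fails when $\alpha=1$ (a $\nu$ concentrated on a non-separating task gives $\mu'=\nu$ with no separations), so the inclusion holds only for $\alpha<1$, which is exactly the range the paper later assumes in \corref{cor:post-training}(a).
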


\begin{proof}
For each pair $(w_1,w_2)$,
\[
\sigmasq_{\mu'}(w_1,w_2)
= \E_{t\sim \mu'}\,\E_{q\sim D_t}[d_t(P^t_{w_1}(\cdot\mid q),P^t_{w_2}(\cdot\mid q))^2].
\]
Since $\mu'=(1-\alpha)\mu_D+\alpha\nu$, linearity of expectation gives the displayed interpolation identity. If $\sigmasq_D(w_1,w_2)>0$, then the first term contributes $(1-\alpha)\sigmasq_D(w_1,w_2)>0$, so every pair separated by~$\mu_D$ remains separated by~$\mu'$. Hence $\mu'$ can split existing equivalence classes but cannot merge previously separated states.
\end{proof}

\begin{interpretation}
The training corpus determines which linguistically accessible distinctions are actually separated. A corpus that never includes contexts probing the difference between $w_1$ and $w_2$ leaves them merged, even if they are linguistically distinguishable. The textual separation margin $\delta_D$ is a property of the corpus, not of language in the abstract.
\end{interpretation}

\section{The Frozen Transformer as a Fixed Encoding}\label{sec:frozen}

The ecological-veridicality theorems require a single encoding held fixed across the tasks sampled from the ecology. In the present setting, the corresponding architectural question is whether a transformer deploys one task-invariant implementation whose behaviour varies only with the input context, or whether the implementation itself changes across tasks. Cognitive impenetrability plays that role below.

\subsection{Dense Transformers}

\begin{definition}[Frozen transformer implementation]\label{def:dense-encoding}
A \emph{dense transformer} with frozen weight vector $\theta \in \Theta$, where $\Theta$ denotes the parameter space of the architecture, defines a function $F_\theta \colon V^* \to \Delta(V)$ mapping every context $c$ to a distribution over the next token. Here, $\theta$ is the \emph{implementation-level parameterisation}. The representational object of interest is not $\theta$ itself, nor any particular hidden-state tensor, but the ecology-relative induced state encoding derived from the model's behaviour over world states, defined precisely below.
\end{definition}

\begin{proposition}[Cognitive impenetrability of frozen dense transformers]\label{prop:dense-impen}
The implementation $\theta$ induces a cognitively impenetrable state encoding:
\begin{enumerate}[label=(\alph*)]
\item $\theta$ is fixed across all tasks (contexts).
\item Different tasks produce different outputs only because they produce different contexts $c$, processed by the same fixed function $F_\theta$.
\item Any state distinctions available to the model are therefore induced by a single fixed map $F_\theta$, with task variation entering through $c$ alone.
\end{enumerate}
\end{proposition}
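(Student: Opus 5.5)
The argument is essentially definitional: I would unpack \defref{def:dense-encoding} and verify the three clauses in order, each resting on the previous one. The single fact doing the work is that a frozen dense transformer is a function $F_\theta \colon V^* \to \Delta(V)$. Its only argument is the context $c$, and the parameter vector $\theta$ is chosen once and enters as a fixed index. Nothing in the signature lets the weights depend on which task is being served.

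\textbf{Clause (a).} Under the identification of \defref{def:context-task}, a task is a context $c$ with task $f_c$. ``Frozen'' means that the same $\theta$ is used to evaluate $F_\theta(c)$ for every $c$ in the support of $D_C$. I would make this explicit by showing that no map $c \mapsto \theta(c)$ enters the model's computation: there is no gradient step and no adapter selected per context. Hence $\theta$ is constant across the whole ecology $\mu_D$.

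\textbf{Clause (b).} Let $c_1, c_2$ be two contexts with $F_\theta(c_1) \neq F_\theta(c_2)$. Because the map $F_\theta$ is the same in both evaluations, by (a), the contrapositive of ``$c_1 = c_2 \Rightarrow F_\theta(c_1) = F_\theta(c_2)$'' shows the difference must come from $c_1 \neq c_2$. Every dependence of behaviour on the task therefore factors through the context alone.

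\textbf{Clause (c).} This is the step needing the most care, because ``state distinctions available to the model'' has to be made precise before the later, formal definition of the induced encoding. I would phrase it behaviourally. World state $w$ reaches the model only through the texts it generates, that is, through the contexts $c$ that occur under $w$. The model's response profile on $w$ is therefore some function of $c \mapsto F_\theta(c)$ restricted to those contexts. Two states are kept apart only if this single fixed map produces different outputs on some context.

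The obstacle is to say this without presupposing the formal encoding defined later. I would state it as a factorization: every candidate state encoding the model can realize has the form $w \mapsto \Phi\bigl(F_\theta, w\bigr)$ with one $F_\theta$ shared across all tasks. This is exactly the fixed-encoding premise that \thmref{thm:ce-decomposition} requires, and I would close by noting that a per-task $\theta$ would break it.
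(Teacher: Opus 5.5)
Your proposal is correct and follows the paper's own definitional route: (a) from the frozen-weights assumption, (b) from the fact that every output is a value of the single map $F_\theta$ at some context $c$, and (c) as the corresponding representation-level restatement. Your factorization $w \mapsto \Phi(F_\theta, w)$ in (c) is a more explicit version of the paper's one-line remark and does not change the argument.
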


\begin{proof}
Part~(a) is immediate from the frozen-weights assumption: the same parameter vector~$\theta$ is used for every input context. Part~(b) then follows because the output distribution on any task is computed by the single map $F_\theta$, evaluated at different contexts~$c$. Part~(c) is just the corresponding representation-level statement: any distinguishability the model exhibits must be induced by that same fixed implementation, with task variation entering only through the context.
\end{proof}

A transformer computes intermediate hidden states $h_l(c)$ at each layer. These depend on $c$ and hence vary across inputs. They are not the ``encoding'' in the sense of \citet{dallariva_2026}, and current empirical work does not give a unique, theory-independent way of identifying \emph{the} representation of an LLM from weights or activations alone. Probes, representational-similarity methods, and interventions provide partial empirical access, but they do not eliminate the need for abstraction. For the formal theory, the relevant object is therefore the operational equivalence relation over world states induced by the model's behaviour under a probe repertoire. Hidden states are possible empirical windows onto that object, not the object itself.

The Transformer Circuits framework gives this a useful architectural reading: in a frozen transformer, attention heads and MLP blocks are additive readers and writers on a shared residual stream \citep{elhage_etal_2021_framework}. Different contexts can recruit different circuit compositions, but they still do so through one fixed implementation acting on one shared state space. That is the mechanism-level analogue of the impenetrability condition used here.

\subsection{Mixture-of-Experts Transformers}

\begin{definition}[MoE transformer]\label{def:moe}
A \emph{Mixture-of-Experts transformer} with frozen weights $\theta = (\theta_{\mathrm{shared}}, \theta_1, \ldots, \theta_E, \theta_{\mathrm{router}})$ defines a routing function $r \colon V^* \to 2^{[E]}$, where $[E] := \{1,\ldots,E\}$ and $2^{[E]}$ is its power set, with $|r(c)| = k$ for fixed $k < E$, determined by $\theta_{\mathrm{router}}$. Thus $r(c)$ is the subset of experts activated by context $c$. For each input $c$, the active parameter set is $\theta(c) = (\theta_{\mathrm{shared}}, \{\theta_e : e \in r(c)\})$.
\end{definition}

\begin{proposition}[Cognitive impenetrability of frozen MoE transformers]\label{prop:moe-impen}
A frozen MoE transformer is cognitively impenetrable: the full weight vector $\theta$ is fixed at inference, the routing function $r$ is determined by $\theta_{\mathrm{router}}$ and the input $c$ (not by a task identifier), and the mapping $F_\theta \colon V^* \to \Delta(V)$ is a single fixed function.
\end{proposition}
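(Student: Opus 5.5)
The plan mirrors the proof of \propref{prop:dense-impen}, treating the router as one more frozen component. I would check the three clauses of the statement in turn, then conclude that the composite map $F_\theta$ is a single function of the context. First, under the frozen-weights assumption the full vector $\theta = (\theta_{\mathrm{shared}}, \theta_1, \ldots, \theta_E, \theta_{\mathrm{router}})$ is the same for every input. This includes the experts that a given context does not activate, so nothing in the parameterisation depends on which task is being queried.

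Second, the routing function $r \colon V^* \to 2^{[E]}$ is, by \defref{def:moe}, determined by $\theta_{\mathrm{router}}$. Its only argument is the context $c$. No task identifier appears in its signature, so any variation in which experts fire must come from variation in $c$. It follows that the active parameter set $\theta(c) = (\theta_{\mathrm{shared}}, \{\theta_e : e \in r(c)\})$ is itself a fixed function of $c$: it is the composition of the fixed map $r$ with a fixed selection of fixed sub-vectors of $\theta$.

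Third, I would write the forward pass as $F_\theta(c) = G(\theta(c), c)$. Here $G$ denotes the architecture's computation, which is fixed by the architecture and does not depend on the task. Since $c \mapsto (\theta(c), c)$ is a fixed map and $G$ is fixed, the composite $F_\theta \colon V^* \to \Delta(V)$ is a single fixed function. Clauses (b) and (c) of \propref{prop:dense-impen} then transfer verbatim: different tasks yield different outputs only through different contexts, and every state distinction the model exhibits is induced by this one map.

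The step I expect to be the main obstacle is conceptual rather than technical. One must argue that the conditional activation of parameters, which looks like per-task adaptation of the implementation, does not break impenetrability. The point to make explicit is that routing is part of the computation of $F_\theta$, not a change to $F_\theta$. It is therefore analogous to the different circuit compositions recruited by different contexts in a dense model. This contrasts with per-task fine-tuning, where $\theta$ itself would be indexed by the task. If routing were permitted to depend on an external task label rather than on $c$, the argument would fail. The definition rules this out, and I would note this explicitly.
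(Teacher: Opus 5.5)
Your proposal is correct and takes the same route as the paper. The full parameter tuple is frozen, and the router computes $r(c)$ from the context alone via the frozen $\theta_{\mathrm{router}}$, with no task-specific update, so $F_\theta$ is a single fixed function of $c$. Your explicit factorisation $F_\theta(c)=G(\theta(c),c)$ simply spells out the composition step that the paper's proof leaves implicit.
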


\begin{proof}
The full parameter tuple $(\theta_{\mathrm{shared}},\theta_1,\ldots,\theta_E,\theta_{\mathrm{router}})$ is fixed at inference. For each input~$c$, the router computes $r(c)$ from~$c$ using the frozen parameters $\theta_{\mathrm{router}}$; there is no independent task-specific parameter update. Consequently the overall input-output map $F_\theta$ is a single fixed function of~$c$, even though different contexts activate different expert subsets.
\end{proof}

Note that MoE routing is input-dependent ($r(c)$ varies with $c$), but this is true of any non-trivial function. The relevant distinction is that $r$ does not receive a task identifier as input. Per-task fine-tuning, by contrast, changes $\theta$ itself depending on the task, which constitutes cognitive penetration (\cref{sec:penetrability}).

\subsection{In-Context Learning}

To compare transformers with the world-state encodings of \citet{dallariva_2026}, we must connect latent world states to textual evidence presented to the model. The next two definitions make that interface explicit and then define the induced equivalence relation on world states generated by the model's behaviour on those prompts.

\begin{definition}[World-text interface]
Fix an interface map $\mathrm{obs} \colon W \to V^*$ that provides textual evidence for world state $w$. For probe context $c$, the model is queried on $c \oplus \mathrm{obs}(w)$, where $\oplus$ denotes sequence concatenation.
\end{definition}

\begin{definition}[Readout repertoire]\label{def:repertoire}
For a frozen transformer implementation $\theta$, define the \emph{separation set}:
\[
S(\theta) := \{(w_1, w_2) : \exists\, c \in V^*\ \text{s.t.}\ F_\theta(c \oplus \mathrm{obs}(w_1)) \neq F_\theta(c \oplus \mathrm{obs}(w_2))\},
\]
the set of world-state pairs that $\theta$ can distinguish under some context.
\end{definition}

\begin{proposition}[ICL does not expand separation]\label{prop:icl-bound}
$S(\theta)$ is determined by $\theta$ alone. In-context learning selects which context $c$ to use, thereby selecting which element of the readout repertoire to deploy, but does not enlarge $S(\theta)$.
\end{proposition}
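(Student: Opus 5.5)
The plan is to unpack \defref{def:repertoire} and observe that the existential quantifier over contexts already absorbs every possible in-context prompt. First I will show that $S(\theta)$ is a function of $\theta$ alone. It is also a function of the fixed interface $\mathrm{obs}$, which is part of the world--text setup rather than anything chosen at inference time. The defining condition for a pair $(w_1,w_2)$ involves only the map $F_\theta$, the strings $\mathrm{obs}(w_i)$, and a quantifier $\exists\, c \in V^*$ ranging over the whole of $V^*$. No particular context and no task identifier enters the definition. By \propref{prop:dense-impen}, and in the MoE case by \propref{prop:moe-impen}, $F_\theta$ is a single fixed function $V^* \to \Delta(V)$. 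Hence $S(\theta)$ is well defined as a subset of $W \times W$ once $\theta$ is fixed.

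Next I will formalise in-context learning as a choice of context. An ICL episode supplies a prompt $c_{\mathrm{icl}} \in V^*$, which may contain demonstrations, instructions, or any prefix, and the model is then evaluated as $w \mapsto F_\theta(c_{\mathrm{icl}} \oplus \mathrm{obs}(w))$. This is exactly the readout indexed by $c_{\mathrm{icl}}$ in the repertoire $\{F_\theta(c \oplus \mathrm{obs}(\cdot))\}_{c \in V^*}$. If the ICL prompt is placed in a more complex position, for instance with a probe suffix, I will use associativity of concatenation to merge all prompt material into a single prefix $c' \in V^*$. The one case needing care is a prompt inserted after $\mathrm{obs}(w)$. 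I will remark that \defref{def:repertoire} fixes the query format $c \oplus \mathrm{obs}(w)$, so ICL is understood as operating within that format.

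With this formalisation the containment follows. Suppose some ICL prompt distinguishes $w_1$ from $w_2$, that is, $F_\theta(c_{\mathrm{icl}}\oplus\mathrm{obs}(w_1)) \neq F_\theta(c_{\mathrm{icl}}\oplus\mathrm{obs}(w_2))$. Then $c_{\mathrm{icl}}$ witnesses the existential in \defref{def:repertoire}, so $(w_1,w_2)\in S(\theta)$. The set of pairs separable under any ICL strategy is therefore a subset of $S(\theta)$. Since $\theta$ is untouched, ICL cannot produce a new implementation $\theta'$ with $S(\theta') \supsetneq S(\theta)$. The most ICL can do is select which element of the repertoire to deploy.

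The main obstacle is not technical but definitional. I must make precise that ``in-context learning'' means only varying the input while the weights stay frozen, so that nothing like a gradient step or a change in $\theta$ is allowed. I also need the interface $\mathrm{obs}$ to be held fixed, because changing it would change the world--text map and not the model. I will state both conventions explicitly at the start of the proof. After that, each step is immediate from the definitions.
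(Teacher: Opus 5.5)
Your proposal is correct and takes the same route as the paper: $S(\theta)$ is the union over all $c\in V^*$ of the pairs separated by the fixed map $F_\theta(c\oplus\mathrm{obs}(\cdot))$, so any in-context prompt is just one witness for the existential and cannot enlarge the set. The conventions you state explicitly are that $\mathrm{obs}$ is held fixed, that the query format is $c\oplus\mathrm{obs}(w)$, and that no weight updates occur. The paper's one-line proof leaves these implicit, so stating them tightens the argument without changing it.
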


\begin{proof}
$F_\theta$ is a fixed function determined by $\theta$. For any $c$, the outputs $F_\theta(c \oplus \mathrm{obs}(w_1))$ and $F_\theta(c \oplus \mathrm{obs}(w_2))$ are values of this fixed function. $S(\theta)$ is the union over all $c$ of the set of pairs distinguished by $F_\theta(c \oplus \cdot)$, which is determined by~$\theta$.
\end{proof}

In the circuit language of \citet{elhage_etal_2021_framework}, induction-style in-context learning is a concrete example of this bounded flexibility: prompts can alter which composed circuit is activated, but they do so through the same frozen QK/OV machinery. The prompt changes the readout path, not the underlying separation set available to the implementation.

\begin{corollary}[ICL and training-time veridicality]
If $(w_1,w_2) \notin S(\theta)$, then no prompting strategy can make the frozen model distinguish that pair. Whenever a deployment ecology assigns positive separation weight to such a pair, a strictly positive excess token loss is unavoidable for that frozen~$\theta$ (by \thmref{thm:ce-decomposition}(c)).
\end{corollary}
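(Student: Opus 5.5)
The proof has two parts. The first assertion, that no prompting strategy makes the frozen model distinguish a pair outside $S(\theta)$, follows directly from \defref{def:repertoire} and \propref{prop:icl-bound}. A prompting strategy is a choice of context $c \in V^*$, possibly built adaptively from demonstrations. Whatever the choice, the model's outputs on the two states are $F_\theta(c \oplus \mathrm{obs}(w_1))$ and $F_\theta(c \oplus \mathrm{obs}(w_2))$. If $(w_1,w_2) \notin S(\theta)$, then by definition these are equal for every $c$. Since $F_\theta$ is fixed (\propref{prop:dense-impen}, \propref{prop:moe-impen}), demonstrations enter only through $c$, and the two outputs coincide. This rules out distinguishability under every strategy, including multi-turn ones, because any multi-turn interaction is a single longer context in $V^*$.

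The second assertion reduces to \thmref{thm:ce-decomposition}(c). I will define the encoding induced by the frozen model as $p_\theta(w) := \bigl(F_\theta(c \oplus \mathrm{obs}(w))\bigr)_{c \in V^*}$. This is the behavioural profile of $w$, so $p_\theta(w_1) = p_\theta(w_2)$ exactly when $(w_1,w_2) \notin S(\theta)$. Any token-level predictor realised by the frozen model on world state $w$ under deployment context $c$ is $F_\theta(c \oplus \mathrm{obs}(w))$. This is a decoder $q(x,c)$ applied to $x = p_\theta(w)$, namely the $c$-coordinate of $x$. The model's deployment loss is therefore at least $L^*_{D'}(p_\theta)$, the optimal loss for the deployment distribution $D'$.

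Now suppose the deployment ecology assigns positive separation weight to the pair, meaning $D'_C$ gives positive mass to contexts where $P_{w_1}(\cdot\mid c) \neq P_{w_2}(\cdot\mid c)$ (\corref{cor:text-separation}). Then the cell of $p_\theta$ containing $w_1$ and $w_2$ is not training-equivalent under $D'$. By \thmref{thm:ce-decomposition}(c), the excess $L^*_{D'}(p_\theta) - H(Y\mid C,W)$ is strictly positive. More concretely, part (b) gives it as $\E_{c}\bigl[\sum_x \pi_x \JS_{\alpha_x}\bigr]$. The Jensen--Shannon term for that cell is positive on a set of contexts of positive measure, and the weights $\pi(w_1), \pi(w_2) > 0$ make its contribution strictly positive.

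The main obstacle is the bridge in the second step: justifying that the frozen model's behaviour is captured by a fixed encoding of world states followed by a decoder. This needs the interface convention that deployment queries about $w$ take the form $c \oplus \mathrm{obs}(w)$. I would state that convention explicitly as the reading of ``deployment ecology'' for this corollary, and note that $X$ may be an arbitrary abstract code space, which \thmref{thm:ce-decomposition} permits. Once that is fixed, the rest is a direct substitution.
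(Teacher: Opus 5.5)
Your proof is correct and follows the paper's route. The first claim is read off from \defref{def:repertoire} and \propref{prop:icl-bound}, and the second is \thmref{thm:ce-decomposition}(c) applied to an encoding that merges $w_1$ and $w_2$. The only difference is minor: the paper applies the theorem directly to the ecology-relative induced encoding $p_{\theta,D}$, which merges the pair because equality for every $c$ implies $D_C$-almost-everywhere equality. You instead use the finer full-profile encoding and make explicit that the frozen model's realized deployment loss is bounded below by the Bayes-optimal loss of that encoding, a step the paper leaves implicit.
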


\begin{proof}
By \propref{prop:icl-bound}, in-context learning can only choose among contexts already available to the fixed implementation~$\theta$; it cannot enlarge~$S(\theta)$. Thus if $(w_1,w_2)\notin S(\theta)$, then
\[
F_\theta(c\oplus \mathrm{obs}(w_1)) = F_\theta(c\oplus \mathrm{obs}(w_2))
\]
for every prompt context~$c$, so no prompting strategy can separate the pair. If a deployment ecology nevertheless assigns positive separation weight to that pair, then the induced encoding merges a deployment-separated distinction, and \thmref{thm:ce-decomposition}(c) implies strictly positive excess token loss.
\end{proof}

\begin{definition}[Operational state encoding]\label{def:model-encoding}
Relative to the world-text interface $\mathrm{obs}$ and the context marginal $D_C$ under discussion, define an equivalence relation $\sim_{\theta,D}$ on $W$ by
\[
w_1 \sim_{\theta,D} w_2
\quad\text{iff}\quad
F_\theta(c \oplus \mathrm{obs}(w_1)) = F_\theta(c \oplus \mathrm{obs}(w_2))
\quad \text{for } D_C\text{-almost every } c.
\]
Let $p_{\theta,D} \colon W \to W/{\sim_{\theta,D}}$ map each world state to its $\sim_{\theta,D}$-equivalence class under the context marginal $D_C$. This induced partition is the abstract encoding that lets us transport the separation logic of \citet{dallariva_2026} into the present cross-entropy framework.
\end{definition}

The object $p_{\theta,D}$ is defined by the model's behavior on $D_C$-almost every context, so finite probing generally cannot reveal it directly in realistic production LLMs. Only laboratory settings with exhaustively enumerable context sets, such as the microgpt experiments in our model-organism study, allow exact recovery. For production models, we can only estimate coarse proxies for the induced partition from finite prompt families and observed next-token distributions.

\begin{remark}[Separation set vs.\ ecology-relative encoding]
The full readout repertoire $S(\theta)$ from \defref{def:repertoire} records which pairs are distinguishable by \emph{some} context in~$V^*$.
The ecology-relative partition~$\sim_{\theta,D}$ is coarser: if a pair is separated only on a $D_C$-null set of contexts, then $(w_1,w_2) \in S(\theta)$ but $w_1 \sim_{\theta,D} w_2$.
Thus $S(\theta)$ can be strictly larger than what the training or deployment ecology actually exposes.
This is the model-side analogue of \corref{cor:text-separation}: zero-measure distinguishing contexts do not affect the ecology-induced partition.
\end{remark}

Chain-of-thought prompting and scratchpads can improve performance within a frozen model by generating intermediate tokens that create longer, more informative contexts \citep{wei_etal_2022_cot,nye_etal_2021_scratchpads}, but they do not enlarge the underlying separation set $S(\theta)$. The deployment decoding gap (\defref{def:decoder-gap}) formalizes this distinction: such procedures reduce the gap between the Bayes-optimal decoder and the restricted deployment class, without changing the representational term.

\begin{definition}[Ecological excess token loss of a model]\label{def:model-risk}
Define
\[
\Delta_D(\theta) := L_D^*(\theta) - H(Y \mid C,W).
\]
Then $\Delta_D(\theta)=0$ if and only if $p_{\theta,D}$ is ecologically veridical, by \thmref{thm:ce-decomposition}(c).
\end{definition}

\subsection{Partial Penetrability: Per-Task Adaptation}\label{sec:penetrability}

\begin{definition}[Per-task adaptation]
A model with per-task adaptation uses weights $\theta + \Delta\theta_\tau$ when performing task $\tau$, where $\tau$ is a task index and $\Delta\theta_\tau$ is the task-specific parameter update (LoRA adapter, prefix tuning, or full fine-tuning).
\end{definition}

\begin{proposition}[Per-task adaptation is cognitive penetration]
A model with per-task adaptation does not satisfy the cognitive-impenetrability assumption of \citet{dallariva_2026}. The implementation changes with $\tau$, so the induced state encoding need not be fixed across tasks. Hoffman's FBT applies independently to each task.
\end{proposition}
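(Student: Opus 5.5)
The plan is to negate the cognitive-impenetrability condition directly and then read off the two consequences: the induced encoding need not be fixed, and the fixed-encoding separation results do not apply across tasks. First I recall the condition as it was used in \propref{prop:dense-impen} and \propref{prop:moe-impen}. There, one parameter vector $\theta$ is shared by every task, and task variation enters only through the context $c$ fed to the single map $F_\theta$. Under per-task adaptation, task $\tau$ is computed by $F_{\theta+\Delta\theta_\tau}$. The task index $\tau$ therefore enters the implementation itself, not only the input. So the implementation is a family $\{F_{\theta+\Delta\theta_\tau}\}_\tau$ indexed by task, rather than one fixed function. This directly violates clause~(a) of the impenetrability property, and the first assertion follows.

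Second, I show that the induced encoding can genuinely vary, not merely that it might. I apply \defref{def:model-encoding} separately to each adapted implementation. This gives a task-indexed family of partitions $p_{\theta+\Delta\theta_\tau,D}$ and a family of separation sets $S(\theta+\Delta\theta_\tau)$. To show "need not be fixed," I exhibit a witness with two states and two tasks. Pick $\Delta\theta_{\tau_1}$ so that the adapted model maps $c\oplus\mathrm{obs}(w_1)$ and $c\oplus\mathrm{obs}(w_2)$ to the same output for all $c$; for instance, the update makes the output ignore the $\mathrm{obs}$ segment. Pick $\Delta\theta_{\tau_2}$ so that the outputs differ on a set of positive $D_C$-measure. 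Then $w_1\sim w_2$ under $\tau_1$ but not under $\tau_2$, so no single partition of $W$ serves every task.

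This contrasts with \propref{prop:icl-bound}. Under ICL the union over contexts is still bounded by one $S(\theta)$. Here the union $\bigcup_\tau S(\theta+\Delta\theta_\tau)$ can strictly exceed $S(\theta)$, because adaptation changes the function rather than selecting among its readouts.

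Third, I justify "Hoffman's FBT applies independently to each task." Fix $\tau$. The adapted model $\theta+\Delta\theta_\tau$ is a frozen implementation facing the single-task ecology given by $\tau$'s context marginal. \thmref{thm:ce-decomposition} applied to that one-task ecology says the zero-excess encodings are exactly those preserving that task's equivalence classes. In general these classes are coarser than the multi-task quotient, which is the single-task regime where fitness-beats-truth selection of non-veridical encodings operates. Because the encoding is no longer shared, nothing couples the tasks. The multi-task refinement of \propref{prop:ecology-expansion} therefore never acts on a common encoding: each task's pressure applies only to its own adapted encoding.

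I expect the main obstacle to be the existence claim in the second step. It requires that the adapter class be expressive enough to produce two different induced partitions. For a fully general $\Delta\theta_\tau$ (full fine-tuning) this is easy to argue. For restricted classes like LoRA I would state the claim as "need not," and rely on the existence of at least one architecture and adapter family realising both behaviours, rather than proving it for every adapter class.
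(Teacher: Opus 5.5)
Your proposal is correct and takes essentially the same route as the paper. The paper's proof just observes that task $\tau$ is computed by $\theta+\Delta\theta_\tau$, so different tasks need not share one input--output map; this breaks the fixed-encoding premise and leaves Hoffman's theorem applicable only task by task. Your two-task witness (one adapter making the output ignore the $\mathrm{obs}$ segment, another separating it on positive $D_C$-measure, which needs a pair with $\mathrm{obs}(w_1)\neq\mathrm{obs}(w_2)$) and your per-task use of \thmref{thm:ce-decomposition} elaborate that argument and make the ``need not'' claim concrete, without changing the argument itself.
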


\begin{proof}
The implementation used on task~$\tau$ is $\theta+\Delta\theta_\tau$, so different tasks need not be processed by the same input-output map. Hence the induced encoding is not fixed across tasks, violating the cognitive-impenetrability premise required by the static ecological-veridicality framework. Once the implementation itself varies with~$\tau$, Hoffman's fixed-benefit theorem applies only task by task, not to a single shared encoding.
\end{proof}

\begin{remark}[The penetrability spectrum]
This yields a formal spectrum:
\begin{enumerate}[label=(\alph*)]
\item Fully impenetrable (frozen $\theta$): the fixed-encoding premise needed for the static theorem of \citet[Theorem~4.1]{dallariva_2026} is satisfied.
\item Partially penetrable (shared $\theta$ + small $\Delta\theta_\tau$): the shared base still faces multi-task pressure, but the effective ecology seen by the model may differ from the frozen-weight idealisation. Analysing that regime requires additional assumptions not developed here.
\item Fully penetrable (independent $\theta_\tau$ per task): Hoffman's FBT regime.
\end{enumerate}
\end{remark}

\subsection{Framework Mapping}

We summarize the correspondence between the ecological-veridicality framework and the frozen-transformer setting below.

\begin{center}
\begin{tabular}{@{}ll@{}}
\toprule
Ecological-veridicality framework & Frozen Transformer \\
\midrule
World states $W$ & Latent world configurations \\
Encoding $p \colon W \to X$ & Induced state encoding $p_{\theta,D}$ from $\sim_{\theta,D}$ \\
Task $f \colon W \to \R^d$ & Context-task $f_c(w) = P_w(\cdot\mid c)$ \\
Task distribution $\mu$ & $\mu_D$ induced by $D_C$ over contexts \\
Readout $a_f \colon X \to \text{Actions}$ & Task-specific Bayes-optimal readout on $p_{\theta,D}$-cells \\
Cognitive impenetrability & Frozen weights $\theta$ at inference \\
Task distance $\sigmasq(w_1,w_2)$ & $\E_{c\sim D_C}[\Hsq(P_{w_1}(\cdot\mid c), P_{w_2}(\cdot\mid c))]$ \\
Separation margin $\dmu$ & $\delta_D = \min_{w_1\neq w_2} \sigmasq_D(w_1,w_2)$ \\
\bottomrule
\end{tabular}
\end{center}

\section{Static Optimality for LLM Encodings}\label{sec:static}

The previous section supplied the model-side object that plays the role of an encoding, namely the induced partition $p_{\theta,D}$. We can now ask the static question central to the paper: when does the actual next-token objective favor induced encodings that preserve exactly the distinctions required by the training ecology?

\begin{theorem}[Cross-entropy optimum and ecological veridicality]\label{thm:llm-veridicality}
For $\theta \in \Theta$, write
\[
L_D^*(\theta) := L_D^*(p_{\theta,D})
\]
for the Bayes-optimal next-token cross-entropy induced by $p_{\theta,D}$. Assume this objective attains its minimum on~$\Theta$, and let $\theta^* \in \argmin_{\theta \in \Theta} L_D^*(\theta)$. Then:
\begin{enumerate}[label=(\alph*)]
\item The irreducible minimum $H(Y \mid C,W)$ is attained by $\theta^*$ iff $p_{\theta^*,D}$ merges only $\mu_D$-equivalent states.

\item If $\mu_D$ separates all points of $W$ and $\Theta$ realises at least one injective encoding on $W$, then any minimizer $\theta^*$ is fully veridical (up to label symmetry).

\item If every $\theta \in \Theta$ merges at least one $\mu_D$-separated pair, then
\[
\inf_{\theta \in \Theta} L_D^*(\theta) > H(Y \mid C,W),
\]
so the model class is necessarily lossy relative to the training ecology.
\end{enumerate}
\end{theorem}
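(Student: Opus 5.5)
The plan is to reduce all three parts to \thmref{thm:ce-decomposition} applied to the induced encoding $p = p_{\theta,D}$. For every $\theta \in \Theta$, part~(b) of that theorem gives
\[
L_D^*(\theta) = H(Y \mid C,W) + I(Y;W \mid C,X_\theta), \qquad X_\theta := p_{\theta,D}(W).
\]
The first term does not depend on $\theta$, and the second is a nonnegative weighted Jensen--Shannon sum. So $H(Y\mid C,W)$ is a uniform lower bound on $L_D^*$ over $\Theta$. By \thmref{thm:ce-decomposition}(c), equality holds for a given $\theta$ exactly when every cell of $p_{\theta,D}$ contains only states whose next-token laws agree for $D_C$-almost every $c$. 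By \corref{cor:text-separation}, this is the same as the cells containing only $\mu_D$-equivalent states. Everything below rests on this pointwise characterisation.

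For (a), the argument is the one just given, specialised to $\theta^*$. If $p_{\theta^*,D}$ merges only $\mu_D$-equivalent states, the JS excess vanishes and $L_D^*(\theta^*) = H(Y\mid C,W)$. Conversely, if the floor is attained, the excess is zero, so by part~(c) of the earlier theorem no cell contains a $\mu_D$-separated pair.

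For (b), let $\theta_0$ be a parameter whose induced encoding $p_{\theta_0,D}$ is injective on $W$. An injective encoding has singleton cells, so its JS excess is trivially zero and $L_D^*(\theta_0)=H(Y\mid C,W)$. Since $\theta^*$ is a minimiser, $L_D^*(\theta^*) \le L_D^*(\theta_0)$; combined with the lower bound, $\theta^*$ attains the floor. By (a), $p_{\theta^*,D}$ merges only $\mu_D$-equivalent states. Because $\mu_D$ separates all points, those classes are singletons, so $p_{\theta^*,D}$ is injective; this is the last clause of \thmref{thm:ce-decomposition}(c). Then $W/{\sim_{\theta^*,D}}$ is the discrete partition, identified with $W$ up to relabelling of cells, which is the "label symmetry" caveat.

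For (c), every $\theta$ merges some separated pair, so each $L_D^*(\theta)$ is strictly above the floor. The obstacle is turning pointwise strictness into strictness of the infimum. The plan is to observe that $L_D^*(\theta)$ depends on $\theta$ only through the partition $p_{\theta,D}$ of the finite set $W$. There are finitely many partitions of $W$, so $\{L_D^*(\theta):\theta\in\Theta\}$ is a finite set of values. Its infimum is therefore a minimum attained by some $\theta$, and that value strictly exceeds $H(Y\mid C,W)$. More explicitly, the infimum is at least the minimum, over the finitely many partitions that merge a separated pair, of the JS excess. Each such excess is positive: the pair $w_1,w_2$ has positive weights $\pi(w_i)>0$ in its cell, and their laws differ on a $D_C$-positive set, so the JS term is positive there. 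This finiteness argument is also why (c) does not need the attainment assumption.
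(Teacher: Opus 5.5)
Your proposal is correct and follows the paper's proof essentially step for step. You apply \thmref{thm:ce-decomposition} to $p_{\theta,D}$ for (a); in (b) you use the injective witness to force every minimizer onto the entropy floor and then invoke full separation to get injectivity; in (c) you use finiteness of the set of partitions of $W$ to turn pointwise strict excess into a strictly positive infimum. Your side remark that (c) does not need the attainment assumption is correct and is implicit in the paper's argument.
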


\begin{proof}
Apply \thmref{thm:ce-decomposition} to the induced encoding $p_{\theta,D}$. Part~(a) is exactly the zero-excess characterization. For~(b), if $\mu_D$ separates all points and some $\theta$ induces an injective encoding, then \thmref{thm:ce-decomposition}(c) shows that this encoding attains the entropy floor $H(Y \mid C,W)$. Hence every minimizer $\theta^*$ must also attain that floor, and under full separation \thmref{thm:ce-decomposition}(c) again implies that only injective encodings can do so, i.e.\ every minimizer is fully veridical up to relabelling of codes. For~(c), if every $\theta$ merges a $\mu_D$-separated pair, then no induced encoding can satisfy the $D_C$-almost-everywhere equality condition inside every cell, so the Jensen--Shannon excess term in \thmref{thm:ce-decomposition}(b) is strictly positive for every~$\theta$. Since $W$ is finite, $L_D^*(\theta)$ depends on $\theta$ only through the induced partition $p_{\theta,D}$, and there are at most $B(|W|)$ such partitions. The infimum is therefore a minimum over finitely many strictly positive values, so it lies strictly above the entropy floor $H(Y \mid C,W)$.
\end{proof}

\begin{remark}[Existence of minimisers]
The non-empty argmin assumption is standard. It holds, for example, for finite hypothesis classes, or more generally when $\Theta$ is compact and $\theta \mapsto L_D^*(\theta)$ is lower semicontinuous.
\end{remark}

\begin{remark}[Bell-number bound]
Here $B(|W|)$ denotes the Bell number, i.e.\ the number of set partitions of~$W$.
\end{remark}

\subsection{Finite-Class Generalization Guarantee}

The static theorem above characterizes the Bayes-optimal token-loss target under the training ecology, but it does not yet say when finite data and approximate empirical optimisation recover a veridical induced encoding. The next result provides a deliberately conservative learning-theoretic bridge: under a finite induced encoding class and bounded token losses, near-optimal empirical token loss for an oracle decoder objective is enough to force ecological veridicality whenever the veridicality gap is strictly positive.

This is a standard finite-class uniform-convergence argument specialized to the induced-encoding family: the proof is just Hoeffding concentration plus a union bound, applied to the token-loss gap defined by the ecological-veridicality criterion.

\begin{definition}[Empirical token log-loss]\label{def:emp-risk}
Draw iid triples $(w_1,c_1,v_1),\ldots,(w_n,c_n,v_n)$ from the joint distribution
\[
w \sim \pi,
\qquad
c \sim D_C,
\qquad
v \sim P_w(\cdot \mid c).
\]
For $\theta \in \Theta$, let $q_{\theta}^* := q_{p_{\theta,D}}^*$ be the Bayes-optimal decoder from \thmref{thm:ce-decomposition}. Define
\[
\bar L_n(\theta)
:= \frac{1}{n}\sum_{t=1}^n
   \bigl[-\log q_{\theta}^*(v_t \mid p_{\theta,D}(w_t),c_t)\bigr].
\]
\end{definition}

\begin{definition}[Technical assumption: finite induced encoding class]\label{ass:finite-class}
Let
\[
\mathcal{P}_\Theta := \{p_{\theta,D} : \theta \in \Theta\},
\]
and assume $M_\Theta := |\mathcal{P}_\Theta| < \infty$.
\end{definition}

\begin{definition}[Technical assumption: bounded per-task risk]\label{ass:bounded-risk}
Assume there exists $\tau \in (0,1)$ such that for every $\theta \in \Theta$ and every triple $(w,c,v)$ with positive sampling probability:
\[
q_{\theta}^*(v \mid p_{\theta,D}(w),c) \ge \tau.
\]
Then each token loss is bounded:
\[
0 \le -\log q_{\theta}^*(v \mid p_{\theta,D}(w),c) \le C_\tau,
\qquad
C_\tau := \log(1/\tau).
\]
\end{definition}

The next theorem is a finite-class concentration result over \emph{induced encodings paired with their Bayes-optimal decoders}. It is therefore not a theorem about SGD in transformer parameter space or about the trajectory of a single training run. More narrowly, it states when near-optimal empirical token loss for the oracle objective $\bar L_n$ certifies that the induced encoding is ecologically veridical.

\begin{theorem}[Finite-class certification from near-optimal token loss]\label{thm:sgd-conditional}
Assume:
\begin{enumerate}[label=(\roman*)]
\item There exists $\theta^v \in \Theta$ with $L_D^*(\theta^v)=H(Y \mid C,W)$ (equivalently: $p_{\theta^v,D}$ is ecologically veridical).

\item The learner outputs $\hat\theta$ with empirical optimisation error
\[
\bar L_n(\hat\theta) \le \inf_{\theta\in\Theta}\bar L_n(\theta) + \varepsilon_{\mathrm{opt}}.
\]

\item Technical assumptions~\ref{ass:finite-class} and~\ref{ass:bounded-risk} hold.
\end{enumerate}
Let $\rho$ be any probability distribution on $\mathcal{P}_\Theta$, fixed independently of the training sample, and write $p^v := p_{\theta^v,D}$. For each induced encoding $p \in \mathcal{P}_\Theta$, define the concentration radius
\[
\eta_\rho(p)
:= C_\tau\sqrt{\frac{\log(1/\rho(p))+\log(2/\alpha)}{2n}}.
\]
Define the smallest positive excess over non-veridical induced encodings by
\[
\gamma_D^{\CE}
:= \min_{p \in \mathcal{P}_\Theta:\, p\ \text{non-veridical}}
    \bigl(L_D^*(p)-H(Y \mid C,W)\bigr).
\]
For each non-veridical $p \in \mathcal{P}_\Theta$, write
\[
\Delta_D(p) := L_D^*(p)-H(Y \mid C,W),
\]
so $\Delta_D(p)\ge \gamma_D^{\CE}$.
For the veridical encoding, write
\[
N_v := \log(1/\rho(p^v))+\log(2/\alpha).
\]
For each non-veridical $p$, write
\[
N_p := \bigl(\log(1/\rho(p))+\log(2/\alpha)\bigr)
       \bigl(\gamma_D^{\CE}/\Delta_D(p)\bigr)^2.
\]
If $\varepsilon_{\mathrm{opt}} < \gamma_D^{\CE}$, then with probability at least $1-\alpha$:
\[
p_{\hat\theta,D} \text{ is ecologically veridical},
\]
provided
\[
n \ge \frac{2C_\tau^2}{(\gamma_D^{\CE}-\varepsilon_{\mathrm{opt}})^2}
      \max\!\Bigl\{N_v,\;
      \max_{p\ \text{non-veridical}} N_p\Bigr\}.
\]
\end{theorem}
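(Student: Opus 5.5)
We follow the standard finite-class route: weighted Hoeffding concentration with a union bound, then a margin comparison between the veridical encoding and every non-veridical one.

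\textbf{Step 1: concentration for each fixed encoding.} For each $p\in\mathcal{P}_\Theta$, the population loss of the pair $(p,q_p^*)$ is $L_D^*(p)$ by \thmref{thm:ce-decomposition}(a)--(b). The empirical average $\bar L_n$ of that pair is a mean of $n$ iid terms. By Technical assumption~\ref{ass:bounded-risk}, each term lies in $[0,C_\tau]$. The pair $(p,q_p^*)$ is a population object fixed independently of the sample, so Hoeffding's inequality applies. For any $t>0$ it gives
\[
\Pr\bigl(|\bar L_n(p)-L_D^*(p)|>t\bigr)\le 2\exp(-2nt^2/C_\tau^2).
\]
Taking $t=\eta_\rho(p)$ makes the right-hand side $\alpha\rho(p)$. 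Since $\rho$ is a probability distribution on the finite set $\mathcal{P}_\Theta$ (Technical assumption~\ref{ass:finite-class}), summing over $p$ gives total failure probability at most $\alpha$. So with probability at least $1-\alpha$ we have $|\bar L_n(p)-L_D^*(p)|\le\eta_\rho(p)$ simultaneously for all $p$. Note that $\bar L_n(\theta)$ depends on $\theta$ only through $p_{\theta,D}$, so the optimisation condition (ii) can be read as a statement on $\mathcal{P}_\Theta$.

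\textbf{Step 2: the margin argument on the good event.} Suppose for contradiction that $\hat p:=p_{\hat\theta,D}$ is non-veridical. Using (ii), then concentration at $p^v$, then (i), we get
\[
\bar L_n(\hat p)\le \bar L_n(p^v)+\varepsilon_{\mathrm{opt}}\le H(Y\mid C,W)+\eta_\rho(p^v)+\varepsilon_{\mathrm{opt}}.
\]
Concentration at $\hat p$ gives the lower bound
\[
\bar L_n(\hat p)\ge H(Y\mid C,W)+\Delta_D(\hat p)-\eta_\rho(\hat p).
\]
Combining the two, a contradiction follows as soon as
\[
\Delta_D(\hat p)-\eta_\rho(\hat p)-\eta_\rho(p^v)>\varepsilon_{\mathrm{opt}}.
\]
It therefore suffices to show that for every non-veridical $p$,
\[
\eta_\rho(p^v)+\eta_\rho(p)<\Delta_D(p)-\varepsilon_{\mathrm{opt}}.
\]

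\textbf{Step 3: translating the sample-size condition.} This is the step I expect to be the main obstacle. The stated sample size controls each radius individually by $\gamma_D^{\CE}-\varepsilon_{\mathrm{opt}}$, rescaled by $\gamma_D^{\CE}/\Delta_D(p)$. Directly, it yields
\[
\eta_\rho(p^v)\le(\gamma_D^{\CE}-\varepsilon_{\mathrm{opt}})/2,\qquad
\eta_\rho(p)\le(\gamma_D^{\CE}-\varepsilon_{\mathrm{opt}})\,\frac{\Delta_D(p)}{2\gamma_D^{\CE}}.
\]
Two facts close the gap. First, $\Delta_D(p)\ge\gamma_D^{\CE}$. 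Second,
\[
(\gamma-\varepsilon)\frac{\Delta}{\gamma}\le\Delta-\varepsilon\quad\text{whenever } \Delta\ge\gamma.
\]
Together these give
\[
\eta_\rho(p^v)+\eta_\rho(p)\le\tfrac12(\gamma_D^{\CE}-\varepsilon_{\mathrm{opt}})+\tfrac12(\Delta_D(p)-\varepsilon_{\mathrm{opt}})\le\Delta_D(p)-\varepsilon_{\mathrm{opt}}.
\]
This is the required inequality, but with $\le$ rather than $<$. To get strictness I would either enlarge $n$ slightly or note that $\rho(p)<1$ for some encodings; failing that, treat the boundary case with a non-strict convention. I would check the factor $2$ bookkeeping carefully, since the stated constant $2C_\tau^2$ is exactly tight for the half-split used here. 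Once the inequality holds, Step 2 yields the contradiction, so $\hat p$ is veridical on the good event, which has probability at least $1-\alpha$.
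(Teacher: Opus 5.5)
Your proposal is correct and follows the paper's proof essentially step for step. It uses the same ingredients: weighted Hoeffding with radius $\eta_\rho(p)$, a union bound of total mass $\alpha$, and the same half-split margin comparison. Your sample-size translation reproduces exactly the paper's two requirements, $\eta_\rho(p^v)\le(\gamma_D^{\CE}-\varepsilon_{\mathrm{opt}})/2$ and $\eta_\rho(p)\le\tfrac{\gamma_D^{\CE}-\varepsilon_{\mathrm{opt}}}{2}\cdot\tfrac{\Delta_D(p)}{\gamma_D^{\CE}}$.

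The strictness worry in Step 3 does not require enlarging $n$, which would change the statement. Hoeffding bounds $\Pr(|\bar L_n(p)-L_D^*(p)|\ge t)$, not only the $>t$ event, so you can define the good event with strict inequalities $|\bar L_n(p)-L_D^*(p)|<\eta_\rho(p)$. On that event, your non-strict margin $\eta_\rho(p^v)+\eta_\rho(p)\le\Delta_D(p)-\varepsilon_{\mathrm{opt}}$ already gives a strict contradiction. This is exactly how the paper closes the argument.
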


\begin{proof}
For fixed $p \in \mathcal{P}_\Theta$, Hoeffding with range $[0,C_\tau]$ gives
\[
P\bigl(|\bar L_n(p)-L_D^*(p)|\ge\eta\bigr) \le 2\exp(-2n\eta^2/C_\tau^2).
\]
Setting $\eta=\eta_\rho(p)$ yields
\[
P\bigl(|\bar L_n(p)-L_D^*(p)|\ge\eta_\rho(p)\bigr)
\le \alpha\,\rho(p).
\]
Summing over $p \in \mathcal{P}_\Theta$ gives
\[
P\bigl(\exists\, p \in \mathcal{P}_\Theta:
|\bar L_n(p)-L_D^*(p)|\ge\eta_\rho(p)\bigr)
\le \alpha.
\]
Let $E_\rho$ denote the complementary event. On $E_\rho$, for the veridical partition $p^v$:
\[
\bar L_n(p^v) < H(Y \mid C,W) + \eta_\rho(p^v).
\]
For any non-veridical $p$:
\[
\bar L_n(p) > H(Y \mid C,W) + \Delta_D(p) - \eta_\rho(p).
\]
Therefore no non-veridical partition can satisfy the empirical near-optimality condition in~(ii) provided
\[
\Delta_D(p) - \eta_\rho(p) > \eta_\rho(p^v) + \varepsilon_{\mathrm{opt}}
\qquad
\text{for all non-veridical } p.
\]
It is enough to require
\[
\eta_\rho(p^v) \le (\gamma_D^{\CE}-\varepsilon_{\mathrm{opt}})/2
\]
and, for each non-veridical $p$,
\[
\eta_\rho(p)
\le \frac{\gamma_D^{\CE}-\varepsilon_{\mathrm{opt}}}{2}
\frac{\Delta_D(p)}{\gamma_D^{\CE}}.
\]
The first inequality is exactly the first term in the displayed sample-size bound. The second is exactly the second term. Under those two inequalities,
\[
\eta_\rho(p^v) + \varepsilon_{\mathrm{opt}}
\le (\gamma_D^{\CE}+\varepsilon_{\mathrm{opt}})/2
\]
and
\[
\eta_\rho(p)
\le \Delta_D(p) - (\gamma_D^{\CE}+\varepsilon_{\mathrm{opt}})/2,
\]
because $\Delta_D(p)\ge \gamma_D^{\CE}$. Hence
\[
\Delta_D(p) - \eta_\rho(p)
\ge (\gamma_D^{\CE}+\varepsilon_{\mathrm{opt}})/2
\ge \eta_\rho(p^v)+\varepsilon_{\mathrm{opt}},
\]
with strict inequality coming from the strict concentration inequalities on $E_\rho$. Thus $\hat\theta$ must induce a veridical partition on $E_\rho$, which has probability at least $1-\alpha$.
\end{proof}

\begin{corollary}[Uniform prior recovers the finite-class bound]\label{cor:sgd-uniform}
If $\rho(p)=1/M_\Theta$ for every $p \in \mathcal{P}_\Theta$, all concentration radii in \thmref{thm:sgd-conditional} become equal and the per-partition conditions collapse to a single bound. Since $\Delta_D(p)\ge \gamma_D^{\CE}$ for every non-veridical~$p$, the sample-size requirement reduces to
\[
n \ge \frac{2C_\tau^2}{(\gamma_D^{\CE}-\varepsilon_{\mathrm{opt}})^2}
    \bigl(\log(2M_\Theta)+\log(1/\alpha)\bigr).
\]
\end{corollary}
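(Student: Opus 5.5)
The plan is to specialize \thmref{thm:sgd-conditional} directly to the uniform choice $\rho(p)=1/M_\Theta$, which is legitimate because this $\rho$ is a probability distribution on $\mathcal{P}_\Theta$ fixed independently of the sample. Substituting it, every concentration radius becomes
\[
\eta_\rho(p)=C_\tau\sqrt{\frac{\log M_\Theta+\log(2/\alpha)}{2n}},
\]
the same for all $p$. I would also note that $\log M_\Theta+\log(2/\alpha)=\log(2M_\Theta)+\log(1/\alpha)$, which is the expression appearing in the claimed bound.

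Next I would evaluate the two kinds of terms inside the maximum in the sample-size condition of \thmref{thm:sgd-conditional}. The veridical term is $N_v=\log(2M_\Theta)+\log(1/\alpha)$. Each non-veridical term is
\[
N_p=\bigl(\log(2M_\Theta)+\log(1/\alpha)\bigr)\bigl(\gamma_D^{\CE}/\Delta_D(p)\bigr)^2 .
\]
Since $\Delta_D(p)\ge\gamma_D^{\CE}>0$ by the definition of $\gamma_D^{\CE}$ as a minimum over non-veridical encodings, each ratio $(\gamma_D^{\CE}/\Delta_D(p))^2$ is at most $1$. Hence $N_p\le N_v$ for every non-veridical $p$, the maximum equals $N_v$, and the condition of the theorem collapses to the displayed single bound. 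When there is no non-veridical $p$ at all, the inner maximum is vacuous and the conclusion holds trivially.

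The conclusion (veridicality of $p_{\hat\theta,D}$ with probability at least $1-\alpha$, under $\varepsilon_{\mathrm{opt}}<\gamma_D^{\CE}$) is then inherited verbatim from \thmref{thm:sgd-conditional}. There is no real obstacle. The only points needing care are that $\gamma_D^{\CE}>0$, which holds because $\mathcal{P}_\Theta$ is finite and each non-veridical excess is strictly positive by \thmref{thm:ce-decomposition}(c), and that the logarithm bookkeeping is correct.
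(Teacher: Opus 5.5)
Your proposal is correct and follows the same route as the paper: specialize \thmref{thm:sgd-conditional} to $\rho(p)=1/M_\Theta$, so that $\log(1/\rho(p))+\log(2/\alpha)=\log(2M_\Theta)+\log(1/\alpha)$ for every $p$, and then use $\Delta_D(p)\ge\gamma_D^{\CE}$ to show that the non-veridical conditions are dominated. Your bookkeeping is slightly more faithful than the paper's: you work with the theorem's actual maximum $\max\{N_v,\max_p N_p\}$ and show $N_p\le N_v$, whereas the paper's proof restates the per-partition condition with $(\Delta_D(p)-\varepsilon_{\mathrm{opt}})^2$ in the denominator and leaves the $N_v$ term implicit.
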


\begin{proof}
Under the uniform prior, $\log(1/\rho(p))=\log M_\Theta$ for every induced partition~$p$. The sample-size condition in \thmref{thm:sgd-conditional} therefore becomes
\[
n \ge \frac{2C_\tau^2}{(\Delta_D(p)-\varepsilon_{\mathrm{opt}})^2}
\bigl(\log 2+\log M_\Theta+\log(1/\alpha)\bigr)
\]
for every non-veridical~$p$. Since $\Delta_D(p)\ge \gamma_D^{\CE}$ on that set by definition of the ecological veridicality gap, it is enough to impose the displayed lower bound with $\Delta_D(p)$ replaced by~$\gamma_D^{\CE}$.
\end{proof}

\begin{corollary}[Conditional near-optimality in token loss]\label{cor:sgd-near}
Under assumptions (ii)--(iii) of \thmref{thm:sgd-conditional}, for any $\eta>0$, with probability at least $1-2M_\Theta\exp(-2n\eta^2/C_\tau^2)$:
\[
L_D^*(\hat\theta) \le \inf_{\theta\in\Theta} L_D^*(\theta) + \varepsilon_{\mathrm{opt}} + 2\eta.
\]
\end{corollary}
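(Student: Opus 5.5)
The plan is the standard uniform-convergence-plus-sandwich argument, reusing the concentration step from the proof of \thmref{thm:sgd-conditional} with a uniform radius in place of the prior-weighted radii $\eta_\rho(p)$. Since $L_D^*(\theta)$ and $\bar L_n(\theta)$ depend on $\theta$ only through the induced encoding $p_{\theta,D}$ and its Bayes-optimal decoder $q_\theta^*$, I will work throughout over the finite class $\mathcal{P}_\Theta$ (\defref{ass:finite-class}) and write $L_D^*(p)$, $\bar L_n(p)$. For fixed $p$, the summands $-\log q_p^*(v_t\mid p(w_t),c_t)$ are iid and lie in $[0,C_\tau]$ by \defref{ass:bounded-risk}. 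By \thmref{thm:ce-decomposition}(b) their mean is exactly $L_D^*(p)=H(Y\mid C,X)$. Hoeffding therefore gives $P(|\bar L_n(p)-L_D^*(p)|\ge\eta)\le 2\exp(-2n\eta^2/C_\tau^2)$, and a union bound over the $M_\Theta$ elements yields an event $E$ of probability at least $1-2M_\Theta\exp(-2n\eta^2/C_\tau^2)$. On $E$ we have $|\bar L_n(p)-L_D^*(p)|<\eta$ simultaneously for all $p\in\mathcal{P}_\Theta$.

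On $E$ I will run the usual chain. First, $L_D^*(\hat\theta)<\bar L_n(\hat\theta)+\eta$. Second, assumption (ii) gives $\bar L_n(\hat\theta)\le\inf_\theta\bar L_n(\theta)+\varepsilon_{\mathrm{opt}}$. Third, I compare with the population optimum: pick $\theta'$ whose induced encoding attains $\min_{p\in\mathcal{P}_\Theta}L_D^*(p)$. This minimum exists because the class is finite, and it equals $\inf_{\theta\in\Theta}L_D^*(\theta)$, so no separate argmin assumption is needed. Then $\inf_\theta\bar L_n(\theta)\le\bar L_n(\theta')<L_D^*(\theta')+\eta=\inf_\theta L_D^*(\theta)+\eta$. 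Concatenating the three steps gives $L_D^*(\hat\theta)\le\inf_\theta L_D^*(\theta)+\varepsilon_{\mathrm{opt}}+2\eta$, with strict inequality in fact.

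The only points needing care are bookkeeping rather than genuine obstacles. One must check that $\bar L_n$ is unbiased for $L_D^*$ with the oracle decoder $q_\theta^*$ fixed independently of the sample; this holds because $q_\theta^*$ is defined from the population law, not from the data. One must also check that the union bound runs over encodings rather than the possibly infinite $\Theta$, which is exactly why finiteness of $\mathcal{P}_\Theta$ and the reduction of $L_D^*$, $\bar L_n$ to functions of $p_{\theta,D}$ are invoked first. Assumption (i) of \thmref{thm:sgd-conditional} is not needed here, consistent with the corollary citing only (ii)--(iii).
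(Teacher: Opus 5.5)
Your proposal is correct and follows essentially the same route as the paper: a Hoeffding bound at uniform radius $\eta$ with a union bound over the $M_\Theta$ induced encodings in $\mathcal{P}_\Theta$, then the sandwich $L_D^*(\hat\theta)\le\bar L_n(\hat\theta)+\eta\le\inf_\theta\bar L_n(\theta)+\varepsilon_{\mathrm{opt}}+\eta\le\inf_\theta L_D^*(\theta)+\varepsilon_{\mathrm{opt}}+2\eta$. The only cosmetic difference is that the paper bounds $\inf_\theta\bar L_n(\theta)\le\inf_\theta\bigl(L_D^*(\theta)+\eta\bigr)$ directly rather than selecting a minimizer $\theta'$; your explicit construction of the uniform-radius event is, if anything, more careful than the paper's reference to an event $E_\eta$, which the theorem only defines in its prior-weighted form $E_\rho$.
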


\begin{proof}
On $E_\eta$,
$L_D^*(\hat\theta)\le \bar L_n(\hat\theta)+\eta
\le \inf_{\theta}\bar L_n(\theta)+\varepsilon_{\mathrm{opt}}+\eta
\le \inf_{\theta}\bigl(L_D^*(\theta)+\eta\bigr)+\varepsilon_{\mathrm{opt}}+\eta
= \inf_{\theta}L_D^*(\theta)+\varepsilon_{\mathrm{opt}}+2\eta$.
The probability bound is exactly the concentration bound defining~$E_\eta$ in \thmref{thm:sgd-conditional}.
\end{proof}

An informative choice of $\rho$ is the entropic prior
\[
\rho_{\beta_0}(p)
\propto \exp\bigl(-\beta_0 H(p(W))\bigr),
\qquad
\beta_0 > 0.
\]
Then
\[
\log(1/\rho_{\beta_0}(p))
= \beta_0 H(p(W)) + \log Z_{\beta_0},
\]
where $Z_{\beta_0}$ is the normalizing constant. Under that choice, low-complexity induced partitions receive larger mass and therefore tighter concentration radii. If the model class contains a minimum-complexity veridical partition, its contribution is governed by $\beta_0 I^*(\mu_D)+\log Z_{\beta_0}$ from \thmref{thm:min-complexity}. The exact sample bound depends on the full maximum over non-veridical partitions and cannot in general be reduced to the gap-achieving partition alone without extra structure relating $\Delta_D(p)$ to $H(p(W))$.

The theorem is a uniform-convergence result over induced encodings, not a statement about SGD on transformer parameter space. Unlike the earlier ecological-risk formulation, the objective is the actual token-level log-loss, but each induced encoding $p_{\theta,D}$ is paired with its Bayes-optimal decoder $q_{\theta}^*$. The decomposition therefore separates representation choice from decoder optimality, and within those, optimisation error $\varepsilon_{\mathrm{opt}}$ from statistical error $\eta$. The gap between the Bayes-optimal decoder and the decoder a trained transformer actually implements is absorbed into the optimisation idealisation; \defref{def:decoder-gap} below isolates that term explicitly. The finite induced class assumption holds automatically since $W$ is finite, but $M_\Theta$ can reach the Bell number $B(|W|)$, which grows super-exponentially ($B(20) \approx 5.2 \times 10^{13}$). The bound is therefore mainly conceptual unless the effective induced class is far smaller than the worst-case partition count and $\gamma_D^{\CE}$ is not too small. The entropy $H(p(W))$ plays three roles in the framework: it is the minimum-complexity target (\thmref{thm:min-complexity}), the explicit simplicity term in $J_{D,\beta}$ below, and, under an entropic prior, the statistical price of certifying a partition from finite data.

\begin{definition}[Deployment decoder class and decoding gap]\label{def:decoder-gap}
Fix a nonempty class $\mathcal{Q}_{\mathrm{dep}}$ of admissible deployment-time decoders
\[
q \colon X \times V^* \to \Delta(V).
\]
For $\theta \in \Theta$, define the best deployment-realizable token loss by
\[
L_D^{\mathcal{Q}_{\mathrm{dep}}}(\theta)
:= \inf_{q \in \mathcal{Q}_{\mathrm{dep}}}
   L_D(p_{\theta,D},q),
\]
and the corresponding deployment decoding gap by
\[
\Gamma_D^{\mathcal{Q}_{\mathrm{dep}}}(\theta)
:= L_D^{\mathcal{Q}_{\mathrm{dep}}}(\theta) - L_D^*(\theta).
\]
\end{definition}

\begin{proposition}[Representational excess plus deployment decoding gap]\label{prop:decoder-gap}
For every $\theta \in \Theta$ and every nonempty deployment decoder class $\mathcal{Q}_{\mathrm{dep}}$:
\begin{enumerate}[label=(\alph*)]
\item The deployment decoding gap is nonnegative:
\[
\Gamma_D^{\mathcal{Q}_{\mathrm{dep}}}(\theta)\ge 0.
\]

\item The best deployment-realizable loss decomposes as
\[
L_D^{\mathcal{Q}_{\mathrm{dep}}}(\theta)
= H(Y \mid C,W)
  + \Delta_D(\theta)
  + \Gamma_D^{\mathcal{Q}_{\mathrm{dep}}}(\theta).
\]

\item If $\mathcal{Q}_{\mathrm{dep}}$ contains a Bayes-optimal decoder for $p_{\theta,D}$, then
\[
\Gamma_D^{\mathcal{Q}_{\mathrm{dep}}}(\theta)=0.
\]
\end{enumerate}
\end{proposition}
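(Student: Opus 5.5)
The argument is essentially definitional, so the plan is to unwind \defref{def:decoder-gap} against \thmref{thm:ce-decomposition} and \defref{def:model-risk}, treating the three parts in order.

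For part~(a), I will compare infima over nested sets. By \thmref{thm:ce-decomposition}(b) applied to the induced encoding $p_{\theta,D}$, we have $L_D^*(\theta)=\inf_q L_D(p_{\theta,D},q)$, where the infimum runs over \emph{all} decoders $q \colon X\times V^*\to\Delta(V)$. Since $\mathcal{Q}_{\mathrm{dep}}$ is a nonempty subclass of that set, its infimum can be no smaller. Hence $L_D^{\mathcal{Q}_{\mathrm{dep}}}(\theta)\ge L_D^*(\theta)$, which gives $\Gamma_D^{\mathcal{Q}_{\mathrm{dep}}}(\theta)\ge 0$.

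For part~(b), I will write the deployment loss as a telescoping sum,
\[
L_D^{\mathcal{Q}_{\mathrm{dep}}}(\theta)=H(Y\mid C,W)+\bigl(L_D^*(\theta)-H(Y\mid C,W)\bigr)+\bigl(L_D^{\mathcal{Q}_{\mathrm{dep}}}(\theta)-L_D^*(\theta)\bigr).
\]
The middle bracket is $\Delta_D(\theta)$ by \defref{def:model-risk}, and the last bracket is $\Gamma_D^{\mathcal{Q}_{\mathrm{dep}}}(\theta)$ by definition. The only thing to check is that every term is finite, so that the subtraction is legitimate. $H(Y\mid C,W)$ is finite because $V$ is finite, and $L_D^*(\theta)=H(Y\mid C,X)\le\log|V|$. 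I also need $L_D^{\mathcal{Q}_{\mathrm{dep}}}(\theta)<\infty$, and this is the one real subtlety. If every admissible decoder has infinite loss, I will either read (b) as an identity in $[0,\infty]$, with $\Gamma=+\infty$, or add a finiteness remark. Through \thmref{thm:ce-decomposition}(b) the decomposition also yields the Jensen--Shannon form of $\Delta_D(\theta)$, which I will note in passing.

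For part~(c), suppose $q^*_{p_{\theta,D}}\in\mathcal{Q}_{\mathrm{dep}}$, which by \thmref{thm:ce-decomposition}(a) is the cell-average decoder $\bar P_x(\cdot\mid c)$. Then
\[
L_D^{\mathcal{Q}_{\mathrm{dep}}}(\theta)\le L_D(p_{\theta,D},q^*)=L_D^*(\theta).
\]
Combining this with part~(a) forces $\Gamma_D^{\mathcal{Q}_{\mathrm{dep}}}(\theta)=0$.

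I expect no real obstacle. The only care needed is the finiteness bookkeeping in~(b), together with making sure that the decoder in~(c) is defined on the code space $X=W/{\sim_{\theta,D}}$ of the induced encoding.
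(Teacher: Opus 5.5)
Your proposal is correct and follows the paper's proof essentially step for step. Part (a) compares nested infima, (b) adds and subtracts $L_D^*(\theta)$ and uses the definition of $\Delta_D(\theta)$, and (c) plugs in the Bayes-optimal decoder and combines with (a); your remark on finiteness in (b), reading the identity in $[0,\infty]$ when every admissible decoder has infinite loss, is bookkeeping the paper leaves implicit.
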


\begin{proof}
Because $\mathcal{Q}_{\mathrm{dep}}$ is a subset of the class of all decoders, we have
\[
L_D^{\mathcal{Q}_{\mathrm{dep}}}(\theta)
\ge L_D^*(\theta),
\]
which gives~(a). Part~(b) follows by adding and subtracting $L_D^*(\theta)$ and then using the definition
\[
\Delta_D(\theta)=L_D^*(\theta)-H(Y \mid C,W).
\]
For~(c), if $q_{\theta}^* \in \mathcal{Q}_{\mathrm{dep}}$ attains $L_D^*(\theta)$, then
\[
L_D^{\mathcal{Q}_{\mathrm{dep}}}(\theta)
\le L_D(p_{\theta,D},q_{\theta}^*)
= L_D^*(\theta).
\]
Combined with~(a), this yields equality and hence $\Gamma_D^{\mathcal{Q}_{\mathrm{dep}}}(\theta)=0$.
\end{proof}

This isolates the missing computational term cleanly. The finite-class theorem above controls the representational term $\Delta_D(\theta)$ and the statistical error of the oracle objective, but it says nothing about $\Gamma_D^{\mathcal{Q}_{\mathrm{dep}}}(\theta)$ for realistic deployment inference classes. Bounding that term for concrete transformer inference regimes is the joint ecology-computation problem left open here. \Cref{app:decoder-gap} records only the basic monotonicity facts needed for that separation.

\subsection{Capacity Criterion}\label{sec:capacity}

Define ecological complexity $k_D := |W/{\sim_{\mu_D}}|$.

\begin{proposition}[Capacity criterion for non-lossy versus lossy]\label{prop:capacity-criterion}
For a model class $\Theta$ with induced encodings $\{p_{\theta,D} : \theta \in \Theta\}$:
\begin{enumerate}[label=(\alph*)]
\item If there exists $\theta$ such that $p_{\theta,D}$ assigns distinct codes to distinct $\mu_D$-equivalence classes (equivalently: ${\sim_{\theta,D}}$ refines ${\sim_{\mu_D}}$), then the non-lossy regime is feasible and the entropy floor $H(Y \mid C,W)$ is attainable.

\item If no $\theta \in \Theta$ separates the $\mu_D$-equivalence classes in that sense, then the problem is necessarily lossy and $\inf_{\theta \in \Theta} L_D^*(\theta) > H(Y \mid C,W)$.
\end{enumerate}
\end{proposition}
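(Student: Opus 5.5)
The plan is to reduce both parts to \thmref{thm:ce-decomposition}(c) applied to induced encodings, exactly as in the proof of \thmref{thm:llm-veridicality}. The key bridging observation is the equivalence stated in~(a): $p_{\theta,D}$ assigns distinct codes to distinct $\mu_D$-classes exactly when ${\sim_{\theta,D}}$ refines ${\sim_{\mu_D}}$. In that case every cell of $p_{\theta,D}$ lies inside a single $\mu_D$-equivalence class. By \corref{cor:text-separation}, this means that for any two states $w_1,w_2$ in the same cell, $P_{w_1}(\cdot\mid c)=P_{w_2}(\cdot\mid c)$ for $D_C$-almost every $c$. I will state this translation first, since the wording ``separates the $\mu_D$-equivalence classes'' is informal and has to be pinned down as ``merges no $\mu_D$-separated pair.''

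For~(a), fix the witness $\theta$. Each cell of $p_{\theta,D}$ contains only training-equivalent states, so each weighted Jensen--Shannon term in \thmref{thm:ce-decomposition}(b) vanishes for $D_C$-almost every $c$. Hence $\Delta_D(\theta)=0$ and $L_D^*(\theta)=H(Y\mid C,W)$. This is the floor, because \thmref{thm:ce-decomposition}(b) shows $L_D^*(p)\ge H(Y\mid C,W)$ for every encoding. So the floor is attained and the non-lossy regime is feasible.

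For~(b), the negation of the hypothesis says that for every $\theta$, ${\sim_{\theta,D}}$ fails to refine ${\sim_{\mu_D}}$. Concretely, there are $w_1\sim_{\theta,D}w_2$ that are $\mu_D$-separated, so their next-token laws differ on a set of positive $D_C$-measure. Then \thmref{thm:ce-decomposition}(c) gives $\Delta_D(\theta)>0$ for each individual $\theta$. This is precisely the hypothesis of \thmref{thm:llm-veridicality}(c), and I would invoke it directly.

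The only step needing care is passing from pointwise positivity to a strictly positive infimum, since $\Theta$ may be infinite. I will repeat the finiteness argument used there. $L_D^*(\theta)$ depends on $\theta$ only through the partition $p_{\theta,D}$ of the finite set $W$, so it takes at most $B(|W|)$ values, each strictly above $H(Y\mid C,W)$. The infimum is therefore a minimum over finitely many values, and the strict inequality $\inf_{\theta} L_D^*(\theta)>H(Y\mid C,W)$ follows.
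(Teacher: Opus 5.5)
Your proposal is correct and follows the paper's proof. Part~(a) is read off from \thmref{thm:ce-decomposition}(c) applied to the witness encoding, and part~(b) is reduced to \thmref{thm:llm-veridicality}(c); your re-statement of the finite-partition (Bell-number) argument for the strict infimum just spells out what the paper leaves inside that theorem's proof.
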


\begin{proof}
Part~(a) follows from \thmref{thm:ce-decomposition}(c): separating the $\mu_D$-equivalence classes is exactly what is required for zero excess. For~(b), if no $\theta \in \Theta$ separates the $\mu_D$-equivalence classes, then every induced encoding $p_{\theta,D}$ merges at least one $\mu_D$-separated pair. \thmref{thm:llm-veridicality}(c) then gives $\inf_{\theta \in \Theta} L_D^*(\theta) > H(Y \mid C,W)$.
\end{proof}

Scaling can improve two different objects: (a)~realisability, in that larger classes $\Theta$ may realise finer partitions $p_{\theta,D}$; and (b)~ecology, in that broader data can increase $k_D$ by separating more pairs. Hence non-lossy behaviour is an empirical question about the pair $(\Theta, \mu_D)$, not a universal consequence of parameter count alone.

Representational capacity is not the only bottleneck. Even when the non-lossy regime is feasible and a fixed model achieves $\Delta_D(\theta)=0$, realized deployment loss can still remain above the entropy floor through a positive deployment decoding gap $\Gamma_D^{\mathcal{Q}_{\mathrm{dep}}}(\theta)$ from \defref{def:decoder-gap}. Chain-of-thought prompting and scratchpads are relevant on that axis: for fixed weights they can reduce the decoding gap by making an available distinction easier to exploit at readout time \citep{wei_etal_2022_cot,nye_etal_2021_scratchpads}, but they do not change the representational term $\Delta_D(\theta)$. Ecology injection or broader training data are needed when the distinction is absent from the frozen encoding itself. The present framework proves statements about the representational side of that divide; bounding the decoding gap for realistic transformer inference regimes remains open.

\section{The LLM Ecosystem as an Evolutionary System}\label{sec:evolution}

\subsection{Units of Selection}

The relevant level distinction is the same as in \citet{dallariva_2026}. SGD within the training run of a single model is \emph{developmental optimisation}, not the population process analysed by Price's equation or quasispecies theory. The relevant evolutionary entities are \emph{whole trained models and model lineages}: frozen artefacts that are copied, modified, deployed, retained, distilled into successors, or abandoned. Selection across such lineages is the population-level process.

\begin{table}[ht]
\centering
\begin{tabular}{@{}ll@{}}
\toprule
Ecological-veridicality framework & LLM ecosystem \\
\midrule
Organism & A trained model (full weights, frozen at deployment) \\
Population & The set of extant models and variants \\
Encoding $p$ & Induced world-state encoding $p_{\theta,D}$ \\
Fitness $\mathcal{F}(p)$ & Multi-task benchmark performance \\
Reproduction & Fine-tuning, distillation, next-generation training \\
Mutation & Architecture changes, data mix, RLHF \\
Horizontal transfer & Attention, MoE, RLHF spreading across labs \\
\bottomrule
\end{tabular}
\end{table}

\begin{definition}[Model-lineage population]
Fix a time horizon over which the deployment ecology $\mu$ remains approximately stationary. A \emph{model-lineage population} is a finite set of deployed or developmentally active lineages $\{\theta_1,\ldots,\theta_K\}$, where each lineage carries a frozen deployment encoding $p_{\theta_i,D}$ during evaluation, abbreviated $p_i$ below, may serve as a parent for successor lineages, and may generate descendants by checkpoint inheritance, distillation, fine-tuning, or retraining with modified architecture/data/objective.
\end{definition}

\begin{proposition}[Darwinian conditions hold at the inter-model level]\label{prop:darwinian-llm}
Suppose over a fixed horizon that:
\begin{enumerate}[label=(\alph*)]
\item descendant models inherit substantial structure from parent models (weights, architecture, tokenizer, training recipe, or dataset);
\item lineages vary in their induced encodings $p_\theta$ through such inherited modifications;
\item the probability that a lineage is copied, retained, fine-tuned, distilled, or used as the base for further training is increasing in its expected deployment success;
\item deployment success is evaluated on the performance of the whole trained model across the relevant task ecology.
\end{enumerate}
Then the model ecosystem instantiates heredity, variation, and differential reproduction at the level of whole trained models. In the sense relevant to the population theory of \citet{dallariva_2026}, it is therefore a Darwinian population of encodings.
\end{proposition}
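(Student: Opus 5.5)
The proof will be a definitional verification: match hypotheses (a)--(d) one-to-one with the three Darwinian conditions (heredity, variation, differential reproduction), and then check that the unit carrying these properties is the whole trained model together with its induced encoding $p_{\theta,D}$, as the population theory of \citet{dallariva_2026} requires. First I fix the objects. The population is the finite lineage set $\{\theta_1,\ldots,\theta_K\}$, each lineage carrying its frozen deployment encoding $p_i = p_{\theta_i,D}$. The ecology $\mu$ is held stationary over the horizon. Fitness is taken to be an increasing transform of expected deployment success; this is what \cref{sec:evolution}'s table calls $\mathcal{F}(p)$.

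\textbf{Heredity.} Hypothesis~(a) says descendants inherit weights, architecture, tokenizer, recipe or data. Because $p_{\theta,D}$ is a deterministic function of $\theta$ (via $F_\theta$, $\mathrm{obs}$ and $D_C$, \defref{def:model-encoding}), inherited implementation structure is transmitted to the induced encoding. More precisely, it transmits a conditional law of offspring encodings given the parent encoding that is correlated with the parent. This is heredity at the level of encodings and not merely of parameters. \textbf{Variation.} Hypothesis~(b) directly supplies nontrivial variation in $p_\theta$ across lineages. I will note that variation in $\theta$ alone would not suffice, since many $\theta$ induce the same partition; this is exactly why (b) is stated for induced encodings. \textbf{Differential reproduction.} Hypothesis~(c) makes reproduction probability monotone in expected deployment success. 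Hypothesis~(d) makes that success a property of the whole frozen model evaluated across the ecology. By \propref{prop:dense-impen} and \propref{prop:moe-impen}, that model acts through a single fixed encoding. So fitness is a function of the organism-level encoding (plus decoder), not of a per-task adaptation. I will also remark that this is what excludes the penetrable regime of \cref{sec:penetrability}.

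The final step assembles these: the triple (inheritance kernel, variation, fitness-monotone reproduction), acting on the finite set of encodings, is precisely the reproduction--selection--mutation structure assumed in \citet{dallariva_2026}. I also stress the level distinction: within-run SGD is developmental and plays no role. The main obstacle is the heredity step, namely justifying that inheritance of \emph{implementation} structure yields inheritance of the \emph{induced partition}, since small weight changes can move $p_{\theta,D}$ discontinuously. I would handle it by reading ``substantial structure'' in (a) as implying positive parent--offspring correlation of encodings (or a Markov mutation kernel on partitions with non-negligible mass near the parent). I would state this reading explicitly as the content of (a) rather than try to derive it from weight-space continuity.
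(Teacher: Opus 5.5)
Your proposal is correct and follows the paper's proof. Both match (a) with heredity, (b) with variation, and (c)--(d) with differential reproduction on whole-model performance, take the induced encoding as the heritable trait, and treat within-run SGD as part of the developmental map rather than the population law. Your explicit reading of (a) as supplying parent--offspring dependence of the induced partitions, instead of deriving it from weight-space continuity, makes visible an assumption the paper's three-line proof leaves implicit; just do not claim that (a)--(d) alone already yield the full Wright--Fisher structure, which the paper obtains only under the extra assumptions of \propref{prop:llm-wf}.
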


\begin{proof}
Condition~(a) gives heredity, (b) gives variation, and (c)--(d) give differential reproduction on whole-model performance. The heritable trait under selection is the induced encoding $p_\theta$ carried by the lineage. SGD updates within a lineage are part of the developmental map from parent lineage to offspring lineage, not the population law itself.
\end{proof}

\subsection{Conditions for Importing the Ecological-Veridicality Population Dynamics}

\begin{proposition}[Selection dynamics across model lineages]\label{prop:llm-wf}
Consider a population of model lineages over a time window on which:
\begin{enumerate}[label=(\alph*)]
\item each active lineage $i$ carries a frozen deployment encoding $p_i$;
\item expected fitness is frequency-independent and depends on the encoding only through deployment performance, e.g.\
$\mathcal{F}(p_i) = C - \Delta_D(p_i)$ or any strictly decreasing transform of~$\Delta_D(p_i)$, where $\Delta_D(p_i):=L_D^*(p_i)-H(Y \mid C,W)$;
\item parent lineages are chosen with probability proportional to $\mathcal{F}(p_i)$;
\item offspring lineages inherit their parent's encoding up to a mutation kernel $Q$ on induced encodings (architecture changes, data changes, distillation noise, fine-tuning updates);
\item the mutation/reproduction process is Markovian on the induced encoding space over the chosen horizon.
\end{enumerate}
Then the population dynamics reduce to the same Wright--Fisher / replicator-mutator form analysed by \citet{dallariva_2026} on the induced encoding space $\mathcal{P}_\Theta$. Consequently, the same population model, together with its Price-equation and quasispecies consequences at the expectation/asymptotic level, applies conditionally to model populations, with the same caveat that convergence is only to the best mutation-accessible asymptotic regime unless stronger connectivity assumptions hold.
\end{proposition}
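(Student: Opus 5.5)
The argument is a reduction: I will show that, once each lineage is identified with its induced encoding, the population process on lineages is exactly a Wright--Fisher process on $\mathcal{P}_\Theta$. The results of \citet{dallariva_2026} then apply unchanged.

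\emph{Step 1 (state space).} By (a), each active lineage carries a frozen $p_i \in \mathcal{P}_\Theta$. By \propref{prop:dense-impen} and \propref{prop:moe-impen}, this is a single fixed encoding across the tasks of the ecology, so it is a legitimate organism-level trait. Since $W$ is finite, $\mathcal{P}_\Theta$ is finite, with at most $B(|W|)$ elements. A population of $K$ lineages is therefore summarised by its type-count vector $n \in \{0,\dots,K\}^{\mathcal{P}_\Theta}$. The first thing to check is that this projection loses nothing the dynamics depend on.
\begin{itemize}
\item By (b), fitness depends on $\theta_i$ only through $\Delta_D(p_i)$. By \thmref{thm:llm-veridicality}, $\Delta_D(p_i)$ is a function of the partition alone.
\item By (d)--(e), the offspring encoding is drawn from $Q(p_i,\cdot)$, which depends only on the parent's encoding.
\end{itemize}
Consequently, lineages with the same encoding are exchangeable. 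This lumpability, in the Markov-chain sense, is the step that makes the projected process well defined.

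\emph{Step 2 (transition law).} By (c) and frequency independence, one generation is the composition of two steps. First, draw $K$ parents multinomially with probabilities $n_p\mathcal{F}(p)/\sum_{p'} n_{p'}\mathcal{F}(p')$. Second, replace each parent's type via $Q$. This is the Wright--Fisher replicator-mutator chain of \citet{dallariva_2026}, with their fitness on encodings replaced by a strictly decreasing transform of $\Delta_D$. The transform must be positive for the sampling probabilities to make sense. I will also note that their ordering results use only the monotone relation between fitness and excess loss, so any strictly decreasing transform is admissible. The expected-frequency map is then $x'_q=\sum_p x_p\mathcal{F}(p)Q(p,q)/\bar{\mathcal{F}}$.

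\emph{Step 3 (import consequences).} Apply Price's equation to this expected-frequency map with trait $\Delta_D$. The selection term is $-\Cov(\mathcal{F},\Delta_D)/\bar{\mathcal{F}} \le 0$, so selection alone never increases mean excess loss. Mutation enters as a separate transmission term. For the quasispecies/asymptotic statements, apply Perron--Frobenius to the finite matrix $\operatorname{diag}(\mathcal{F})Q$. The limit concentrates on the mutation-accessible class with the best dominant eigenvalue. It reaches veridical encodings, i.e.\ those with $\Delta_D=0$ by \thmref{thm:llm-veridicality}(a), only if $Q$ connects them, which gives the caveat in the statement.

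\emph{Main obstacle.} The hard part is Step 1's lumpability together with the Markov assumption. Real lineages inherit more than $p_i$, including weights, architecture, and tokenizer, and these could make mutation outcomes depend on hidden state. I would handle this by reading (d)--(e) as asserting that $Q$ is a kernel on induced encodings. Then dependence on the underlying $\theta$ is excluded by hypothesis, and the proposition is explicitly conditional on it.
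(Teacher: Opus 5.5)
Your reduction is essentially the paper's own argument. The paper's proof is likewise a conditional relabelling: (a)--(e) make lineages discrete heritable carriers of encodings in $\mathcal{P}_\Theta$, with fitness attached to the encoding, weighted parent choice, and a mutation kernel $Q$, so the population theorems of \citet{dallariva_2026} transfer directly; you simply spell out the lumpability, the Wright--Fisher transition law, and the Price and Perron--Frobenius steps that the paper leaves implicit. One small slip in Step~3: with trait $\Delta_D$ the Price selection term is $\Cov(\mathcal{F},\Delta_D)/\bar{\mathcal{F}}$, which is $\le 0$ because $\mathcal{F}$ is decreasing in $\Delta_D$, so your extra minus sign reverses the inequality; your verbal conclusion that selection never increases mean excess loss is still correct.
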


\begin{proof}
Under~(a)--(e), lineages are discrete heritable units carrying encodings, fitness is attached to those encodings, selection acts by weighted parent choice, and inherited modifications are represented by a mutation kernel $Q$. This is exactly the structure assumed by the population-level process model of \citet{dallariva_2026}, with organisms replaced by model lineages and perceptual encodings replaced by induced deployment encodings $p_i$. The conclusion is therefore a conditional structural reduction: once those assumptions hold, the same population theorems apply on the relabelled state space.
\end{proof}

If a common deployment decoder class $\mathcal{Q}_{\mathrm{dep}}$ is fixed across lineages and realized deployment performance rather than the oracle objective drives selection, the same formulation can instead use the realized excess
\[
\Delta_D(p_i) + \Gamma_D^{\mathcal{Q}_{\mathrm{dep}}}(p_i)
= L_D^{\mathcal{Q}_{\mathrm{dep}}}(p_i) - H(Y \mid C,W)
\]
in place of $\Delta_D(p_i)$. We retain the oracle form in the main text because the proved results in this paper characterize $\Delta_D$ directly, while $\Gamma_D^{\mathcal{Q}_{\mathrm{dep}}}$ is only structurally constrained.

\noindent\textbf{Consequences if these conditions hold.}
Over any window on which \propref{prop:llm-wf} is a good approximation, inter-model selection creates expectation-level pressure toward lower ecological excess loss and therefore toward more ecologically veridical induced encodings. The static theorems identify the target partition; the dynamic theorems of \citet{dallariva_2026} describe the conditional route by which a population of model lineages can move toward it. The conclusion remains conditional: convergence is only to the best mutation-accessible asymptotic regime.

\propref{prop:llm-wf} does not imply that SGD within a single training run obeys Price's equation or quasispecies theory. The proposition applies at the lineage level: once whole trained models are treated as the replicating entities, the inter-model process can satisfy the assumptions of the population theory. Some departures from the idealisation are \emph{benign}: performance-aligned reuse, distillation, architecture borrowing, directed engineering, and horizontal transfer across lineages can all accelerate search toward the same ecology-defined target without changing it, and mild frequency dependence need not destroy a local fixed-fitness approximation. The \emph{serious} failures are the target-changing ones: strong frequency dependence that reorders effective fitness by population composition, rapid non-stationarity of the deployment ecology, or engineering interventions that change the effective objective rather than merely the speed of search. The result is accordingly best read as a conditional framework for hypothesis generation and controlled experiments, not as a claim that the current production-model ecosystem literally satisfies the required assumptions. Those assumptions are more plausible in controlled microgpt populations than in commercial LLM markets.

\begin{figure}[t]
\centering
\includegraphics[width=0.92\linewidth]{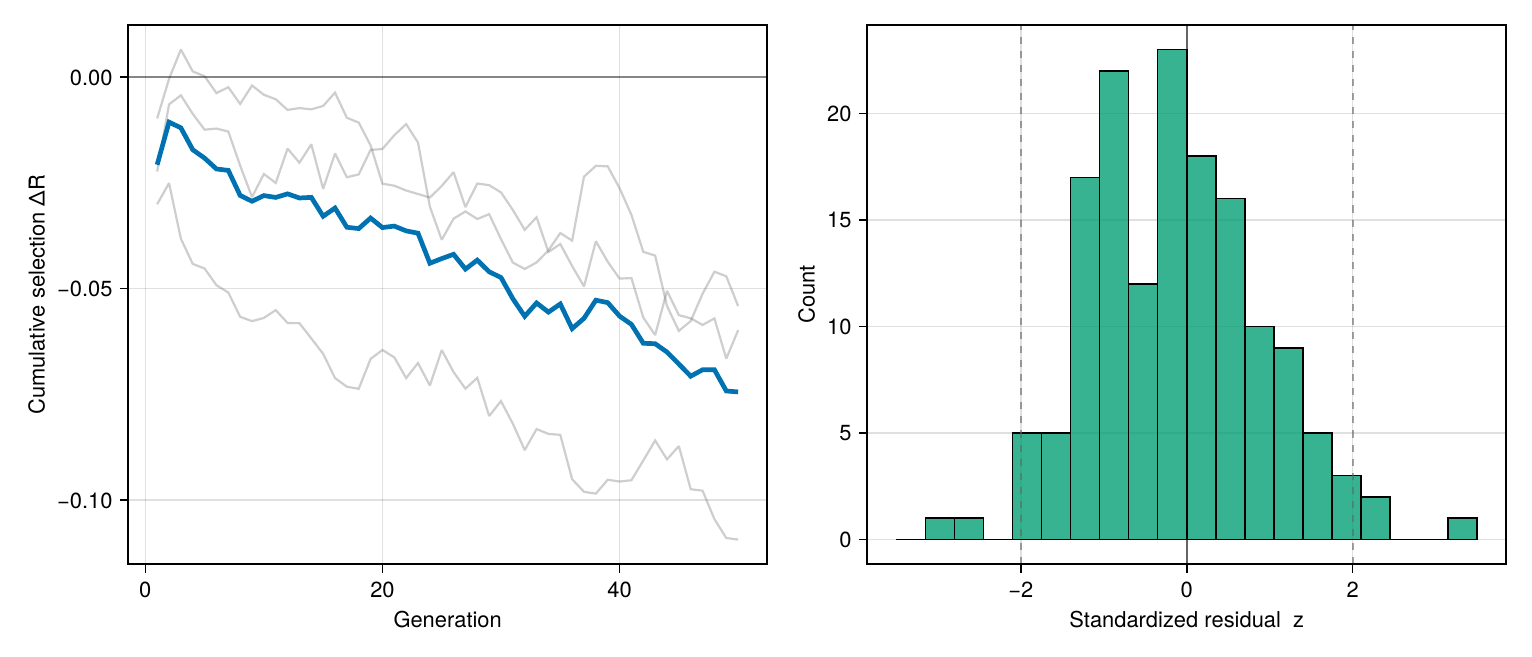}
\caption{Selection-stage diagnostics in the microgpt Wright--Fisher experiment. Left: cumulative selection-stage change in mean risk across generations. The downward trend shows a weak but persistent net selection pressure toward lower risk. Right: histogram of standardized residuals $z := (\Delta \bar R_{\mathrm{sel}} - \E[\Delta \bar R_{\mathrm{sel}} \mid \text{risk, fitness}])/\mathrm{sd}(\Delta \bar R_{\mathrm{sel}} \mid \text{risk, fitness})$. The residuals are centered and mostly fall inside $\pm 2$, with rms $z=1.05$ and exact 95\%-band coverage $0.96$, consistent with the Wright--Fisher conditional sampling law.}
\label{fig:exp3-dynamics}
\end{figure}

\subsection{Token and Evaluation Ecologies}\label{sec:two-ecology}

The static theorems above characterise optimality under a single ecology~$\mu$. Here we add the cases in which real LLM development is shaped by a second ecology beyond the base next-token objective, without replacing that one-ecology result. In the LLM setting, token-prediction training follows one ecology, while lineage retention and post-training may follow another; their interaction is naturally read as a Baldwin effect \citep{baldwin_1896,hinton_nowlan_1987}. The \emph{token ecology}~$\mu_{\mathrm{tok}}$ defines the single-run training target through next-token prediction. The \emph{evaluation ecology}~$\mu_{\mathrm{eval}}$ defines which model lineages are retained, invested in, fine-tuned, distilled into successors, and used as starting points for next-generation training through benchmarks, deployment, and user preferences.

These two ecologies have overlapping but generally non-nested separation sets. Many important world-state distinctions (mathematical validity, code correctness, long-range logical consistency) have only weak local next-token signatures, so $\sigmasq_{\mathrm{tok}}(w_1,w_2)$ may be small even when the evaluation ecology separates the pair strongly. Conversely, fine-grained orthographic patterns may be token-separated but evaluation-invisible.

The point is not that every global or structurally extended property requires a second ecology. Some such properties already have strong token-level signatures. \citet{gurnee_etal_2025_linebreaks}, for example, show that a next-token transformer can learn a low-dimensional ``character count manifold'' that tracks cumulative line length and supports line-break prediction from language modeling alone. Bracket balance can also be partly learned this way, as the experiments below illustrate. The two-ecology argument is needed for distinctions whose token-level signatures are too weak relative to simplicity pressure or competing variation in the training signal: not ``all nonlocal structure,'' but the gap cases for which $\sigmasq_{\mathrm{tok}}$ is small while $\sigmasq_{\mathrm{eval}}$ remains large.

To state that relationship precisely, we treat both ecologies as instances of the same formal object.

\begin{definition}[Generalized task ecology]\label{def:gen-ecology}
A \emph{generalized task ecology}~$\eta$ on a finite latent state space~$W$ with prior~$\pi$ consists of a probability measure over tasks~$t$, where each task has a query space~$Q_t$, a target space~$Y_t$, a query distribution~$D_t$, conditional target laws $P^t_w(\cdot\mid q)$ for each $w\in W$ and $q\in Q_t$, and a loss~$\ell_t$. For an encoding $p\colon W\to\X$ and a Bayes-optimal decoder family under~$\eta$, define the ecology-relative excess
\[
\Delta_\eta(p) := L^*_\eta(p) - L^*_\eta(\mathrm{id}_W),
\]
where $\mathrm{id}_W$ denotes the unreduced encoding.
\end{definition}

The token ecology instantiates this object with next-token prediction tasks under log loss. The evaluation ecology instantiates it with benchmark or deployment evaluations and their associated losses. Here we use the generalized object only to state separation sets and evaluation-relative excess; we do not invoke a full generalized analogue of \thmref{thm:ce-decomposition}. The pairwise separation functional under~$\eta$ is
\[
\sigmasq_\eta(w_1,w_2) := \E_{t\sim\eta}\,\E_{q\sim D_t}\bigl[d_t\bigl(P^t_{w_1}(\cdot\mid q),\,P^t_{w_2}(\cdot\mid q)\bigr)^2\bigr],
\]
where $d_t$ is a divergence on target laws that vanishes exactly on equality. Write $S_\eta := \{(w_1,w_2) : \sigmasq_\eta(w_1,w_2)>0\}$ for the separation set.

\begin{proposition}[Two-ecology scope]\label{prop:two-ecology-scope}
Let $\mu_{\mathrm{tok}}$ and $\mu_{\mathrm{eval}}$ be two ecologies on the same latent state space~$W$.
\begin{enumerate}[label=(\alph*)]
\item \textbf{Static scope.} If an encoding~$p$ satisfies $\Delta_{\mu_{\mathrm{tok}}}(p)=0$, then~$p$ preserves all and only the $\mu_{\mathrm{tok}}$-equivalence classes. Zero-excess token-ecology optimality constrains the partition of~$W$ only through~${\sim_{\mu_{\mathrm{tok}}}}$.

\item \textbf{Dynamic scope.} Suppose a model-lineage population satisfies the assumptions of \propref{prop:llm-wf}, and suppose expected lineage fitness has the form $\mathcal{F}(p)=\varphi(\Delta_{\mu_{\mathrm{eval}}}(p))$ for some strictly decreasing~$\varphi$. Then the same population dynamics apply with $\mu_{\mathrm{eval}}$ in place of~$\mu_{\mathrm{tok}}$: at the expectation level, selection pushes the population toward lower evaluation excess.

\item \textbf{Non-implication.} In general, $\Delta_{\mu_{\mathrm{tok}}}(p)=0$ does not imply $\Delta_{\mu_{\mathrm{eval}}}(p)=0$. A lineage process can be driven by evaluation-ecology fitness even on pairs for which $\mu_{\mathrm{tok}}$ gives only weak or vanishing separation.
\end{enumerate}
\end{proposition}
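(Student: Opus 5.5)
Part~(a) is a direct consequence of \thmref{thm:ce-decomposition}, once the token ecology is identified with the training ecology $\mu_D$. Under log loss the unreduced encoding $\mathrm{id}_W$ has singleton cells, so every Jensen--Shannon term vanishes. Hence $L^*_{\mu_{\mathrm{tok}}}(\mathrm{id}_W)=H(Y\mid C,W)$, and $\Delta_{\mu_{\mathrm{tok}}}(p)$ in \defref{def:gen-ecology} coincides with the excess $L_D^*(p)-H(Y\mid C,W)$ of \thmref{thm:ce-decomposition}(b).

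Part~(c) of that theorem then shows that zero excess holds iff every cell of $p$ lies inside a $\sim_{\mu_{\mathrm{tok}}}$-class. This is the sense in which $p$ merges only token-equivalent states, so it preserves every $\mu_{\mathrm{tok}}$-separated distinction. I read ``all and only'' as follows: the zero-excess condition imposes no requirement on pairs that are $\mu_{\mathrm{tok}}$-equivalent, which may be merged or split freely. The condition is therefore a constraint on the partition that refers to $\sim_{\mu_{\mathrm{tok}}}$ and to nothing else. I will note explicitly that ``only'' is meant in this constraint sense. The minimal such partition, the quotient itself, is deferred to \thmref{thm:min-complexity}.

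For part~(b), I will restate the hypotheses of \propref{prop:llm-wf} with fitness $\mathcal{F}(p)=\varphi(\Delta_{\mu_{\mathrm{eval}}}(p))$. The proof of \propref{prop:llm-wf} used only that fitness is a frequency-independent, strictly decreasing transform of some excess attached to the encoding. It never used the cross-entropy form of $\Delta_D$. So the same structural reduction to the replicator--mutator and Wright--Fisher model on $\mathcal{P}_\Theta$ goes through verbatim with $\Delta_{\mu_{\mathrm{eval}}}$ substituted. The expectation-level consequence is the Price-equation statement from \citet{dallariva_2026}: the selection component of the change in mean fitness equals $\Cov(\mathcal{F},\mathcal{F})/\bar{\mathcal{F}}\ge 0$. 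Because $\varphi$ is strictly decreasing, this translates into non-increasing expected evaluation excess under selection. The caveat about mutation accessibility carries over unchanged.

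For part~(c), a counterexample suffices. Take $W=\{w_1,w_2\}$ with a token ecology whose $D_C$ gives equal next-token laws for $w_1$ and $w_2$ on every context, so $w_1\sim_{\mu_{\mathrm{tok}}}w_2$. Take an evaluation ecology with one task whose target law differs between $w_1$ and $w_2$ (say a binary correctness label) under log loss. Let $p$ be the constant encoding. Then $\Delta_{\mu_{\mathrm{tok}}}(p)=0$ by part~(a). The evaluation excess is a positive Jensen--Shannon term, computed by the same cell-averaging argument as in \thmref{thm:ce-decomposition}(a)--(b) applied to that task. The second sentence of part~(c) then follows by combining this example with part~(b).

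I expect the main obstacle to be part~(b): making precise that the import from \propref{prop:llm-wf} does not depend on $\Delta$ being the token cross-entropy excess, and checking that $\Delta_{\mu_{\mathrm{eval}}}$ is well defined and nonnegative for general losses $\ell_t$. I would handle this by assuming Bayes-optimal decoders exist. The unreduced encoding can then simulate any coarser encoding's decoder, which gives $\Delta_\eta\ge 0$ directly.
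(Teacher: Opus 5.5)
Your proposal is correct and follows the paper's proof in all three parts: (a) via \thmref{thm:ce-decomposition}(c); (b) by noting that \propref{prop:llm-wf} uses only a frequency-independent, strictly decreasing fitness of an encoding-level excess; and (c) via a pair separated by $\mu_{\mathrm{eval}}$ but merged by $\mu_{\mathrm{tok}}$. Your identification $L^*_{\mu_{\mathrm{tok}}}(\mathrm{id}_W)=H(Y\mid C,W)$, your refinement (rather than equality) reading of ``all and only,'' and your explicit log-loss counterexample are sound, and somewhat more careful than the paper's wording.

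One small repair is needed in (b). An increase in mean fitness does not by itself imply a decrease in mean excess when $\varphi$ is nonlinear. Instead, apply Price's equation with $\Delta_{\mu_{\mathrm{eval}}}$ itself as the trait. Its selection component is $\Cov(\varphi(\Delta_{\mu_{\mathrm{eval}}}),\Delta_{\mu_{\mathrm{eval}}})/\bar{\mathcal{F}}\le 0$, by the independent-copy argument in the proof of \propref{prop:recipe-trait}.
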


\begin{proof}
For part~(a), apply \thmref{thm:ce-decomposition}(c) to the token ecology~$\mu_{\mathrm{tok}}$: zero excess under that ecology is equivalent to preserving exactly the $\mu_{\mathrm{tok}}$-equivalence classes, so the static theorem constrains only that partition.

For part~(b), \propref{prop:llm-wf} requires only that expected fitness be a strictly decreasing function of the relevant ecology-relative excess. Replacing $\Delta_D$ there by $\Delta_{\mu_{\mathrm{eval}}}$ therefore leaves the structural reduction unchanged: parent choice is still weighted by fitness, offspring inherit encodings up to a mutation kernel, and the same Wright--Fisher / replicator-mutator conclusions apply on the induced-encoding space.

For part~(c), the two excess terms are tied to different separation structures. If $\mu_{\mathrm{eval}}$ separates a pair that $\mu_{\mathrm{tok}}$ leaves merged, then an encoding can have $\Delta_{\mu_{\mathrm{tok}}}(p)=0$ while still merging an evaluation-relevant distinction, which forces $\Delta_{\mu_{\mathrm{eval}}}(p)>0$. Hence token-optimality does not in general imply evaluation-optimality.
\end{proof}

This proposition makes explicit that the static optimality theorem and the evolutionary population theorem may be talking about different ecologies. Post-training provides a concrete mechanism for partially injecting the evaluation ecology into the token-prediction process. The next result formalises that mechanism.

\begin{proposition}[Ecology injection threshold]\label{prop:ecology-injection}
Let $\mu_0$ and~$\nu$ be two ecologies on the same latent state space~$W$, and for $\alpha\in[0,1]$ define the mixed ecology $\mu_\alpha := (1-\alpha)\mu_0 + \alpha\nu$. Then for every pair $(w_1,w_2)$:
\begin{enumerate}[label=(\alph*)]
\item \textbf{Exact interpolation.}
$\sigmasq_{\mu_\alpha}(w_1,w_2) = (1-\alpha)\,\sigmasq_{\mu_0}(w_1,w_2) + \alpha\,\sigmasq_\nu(w_1,w_2)$.

\item \textbf{Monotonicity.} If $\sigmasq_\nu(w_1,w_2) \ge \sigmasq_{\mu_0}(w_1,w_2)$, then $\sigmasq_{\mu_\alpha}(w_1,w_2)$ is nondecreasing in~$\alpha$; if the inequality is strict, it is strictly increasing.

\item \textbf{Threshold.} Fix an effective separation threshold~$\varepsilon>0$. If $\sigmasq_{\mu_0}(w_1,w_2)\le\varepsilon < \sigmasq_\nu(w_1,w_2)$, then the pair becomes effectively resolved under~$\mu_\alpha$ exactly when $\alpha > \alpha^*(w_1,w_2)$, where
\[
\alpha^*(w_1,w_2) := \frac{\varepsilon - \sigmasq_{\mu_0}(w_1,w_2)}{\sigmasq_\nu(w_1,w_2) - \sigmasq_{\mu_0}(w_1,w_2)}.
\]
\end{enumerate}
\end{proposition}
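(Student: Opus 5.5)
For part~(a) I would follow the same route as \propref{prop:ecology-expansion}. Write the separation functional of the mixture as a nested expectation, $\sigmasq_{\mu_\alpha}(w_1,w_2)=\E_{t\sim\mu_\alpha}\,\E_{q\sim D_t}[d_t(P^t_{w_1}(\cdot\mid q),P^t_{w_2}(\cdot\mid q))^2]$. The inner quantity $g(t):=\E_{q\sim D_t}[d_t(\cdots)^2]$ is a fixed nonnegative function of the task $t$, and it does not depend on which ecology $t$ was drawn from. Integrating it against $\mu_\alpha=(1-\alpha)\mu_0+\alpha\nu$ is linear in the measure, so the interpolation identity follows at once. The one point to check is that $g$ is measurable and nonnegative, possibly $+\infty$; nonnegativity lets us split the integral with no integrability conditions. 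If the token ecology is involved, I would take $d_t$ to be the Hellinger distance so that $\sigmasq_{\mu_0}$ agrees with $\sigmasq_D$ from \defref{def:text-distance}.

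For part~(b) I would rewrite (a) as an affine function of $\alpha$: $\sigmasq_{\mu_\alpha}=\sigmasq_{\mu_0}+\alpha\bigl(\sigmasq_\nu-\sigmasq_{\mu_0}\bigr)$. Its slope is $\sigmasq_\nu-\sigmasq_{\mu_0}$, which is $\ge0$ under the hypothesis, so the map is nondecreasing in $\alpha$. When the inequality is strict the slope is $>0$, so the map is strictly increasing. Both endpoint values need to be finite for this to go through, so I will assume, or note, that $\sigmasq_{\mu_0}$ and $\sigmasq_\nu$ are finite. That holds automatically for Hellinger, which is bounded by $1$.

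For part~(c) I would read ``effectively resolved'' as meaning $\sigmasq_{\mu_\alpha}(w_1,w_2)>\varepsilon$. Under the hypothesis $\sigmasq_{\mu_0}\le\varepsilon<\sigmasq_\nu$ the slope is strictly positive, so I can solve the strict inequality $\sigmasq_{\mu_0}+\alpha(\sigmasq_\nu-\sigmasq_{\mu_0})>\varepsilon$ by dividing by the slope. This gives exactly $\alpha>\alpha^*$. I would then check that $\alpha^*\in[0,1)$: the numerator $\varepsilon-\sigmasq_{\mu_0}$ is $\ge0$, and it is strictly smaller than the denominator $\sigmasq_\nu-\sigmasq_{\mu_0}$. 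So the threshold is attained inside the admissible range and the ``exactly when'' is a genuine biconditional on $[0,1]$.

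The main obstacle is not computational. It is pinning down the phrase ``effectively resolved,'' which the statement leaves implicit, so I would state the convention ($>\varepsilon$ strictly) before starting. A secondary point is making the generalized-ecology integral in (a) well defined.
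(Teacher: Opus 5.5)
Your proposal is correct and follows the paper's proof. For (a) you use linearity of the task-level expectation under the mixture measure. For (b) you use the affine dependence on $\alpha$ with slope $\sigmasq_\nu-\sigmasq_{\mu_0}$. For (c) you solve $(1-\alpha)\sigmasq_{\mu_0}+\alpha\sigmasq_\nu>\varepsilon$, reading ``effectively resolved'' as strictly exceeding $\varepsilon$. Your extra checks are sensible refinements that the paper leaves implicit: nonnegativity and measurability of the inner integrand, finiteness of the endpoint values, and $\alpha^*\in[0,1)$.
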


\begin{proof}
Part~(a): by linearity of expectation under the mixed measure,
\[
\sigmasq_{\mu_\alpha}(w_1,w_2) = (1-\alpha)\,\E_{t\sim\mu_0}[Z_t] + \alpha\,\E_{t\sim\nu}[Z_t],
\]
where $Z_t := \E_{q\sim D_t}[d_t(P^t_{w_1},P^t_{w_2})^2]$. Part~(b): the derivative with respect to~$\alpha$ is $\sigmasq_\nu - \sigmasq_{\mu_0}$. Part~(c): solve $(1-\alpha)\sigmasq_{\mu_0} + \alpha\sigmasq_\nu > \varepsilon$ for~$\alpha$.
\end{proof}

\begin{corollary}[Post-training refines token-ecology resolution]\label{cor:post-training}
Let $\mu_0$ be a base token ecology and~$\nu$ a post-training task family. Define $\mu_{\mathrm{tok}}^{(\alpha)} := (1-\alpha)\mu_0 + \alpha\nu$ for $\alpha\in[0,1]$.
\begin{enumerate}[label=(\alph*)]
\item For every $\alpha\in[0,1)$, the induced partition satisfies $[w]_{\mu_{\mathrm{tok}}^{(\alpha)}} \subseteq [w]_{\mu_0}$: post-training can split existing equivalence classes but cannot coarsen them.

\item If $(w_1,w_2)$ is a gap pair with $\sigmasq_{\mu_0}(w_1,w_2)\le\varepsilon$ and $\sigmasq_\nu(w_1,w_2)>\varepsilon$, then for every $\alpha>\alpha^*(w_1,w_2)$ from \propref{prop:ecology-injection}, the pair is resolved under~$\mu_{\mathrm{tok}}^{(\alpha)}$.

\item The rescued set $R_\alpha := \{(w_1,w_2)\in G_\varepsilon : \sigmasq_{\mu_{\mathrm{tok}}^{(\alpha)}}(w_1,w_2)>\varepsilon\}$ is nondecreasing in~$\alpha$ whenever $\sigmasq_\nu\ge\sigmasq_{\mu_0}$ pairwise on~$G_\varepsilon$.
\end{enumerate}
\end{corollary}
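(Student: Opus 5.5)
All three parts will follow from \propref{prop:ecology-injection}(a), the exact interpolation
\[
\sigmasq_{\mu_{\mathrm{tok}}^{(\alpha)}}(w_1,w_2)=(1-\alpha)\,\sigmasq_{\mu_0}(w_1,w_2)+\alpha\,\sigmasq_\nu(w_1,w_2),
\]
together with the nonnegativity of both $\sigmasq_{\mu_0}$ and $\sigmasq_\nu$. I take $G_\varepsilon$ to denote the set of gap pairs, meaning those with $\sigmasq_{\mu_0}\le\varepsilon<\sigmasq_\nu$.

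For part~(a), fix $\alpha\in[0,1)$ and suppose $w_1\notin[w_2]_{\mu_0}$, so that $\sigmasq_{\mu_0}(w_1,w_2)>0$. The interpolation gives $\sigmasq_{\mu_{\mathrm{tok}}^{(\alpha)}}(w_1,w_2)\ge(1-\alpha)\,\sigmasq_{\mu_0}(w_1,w_2)>0$, using $1-\alpha>0$ and $\sigmasq_\nu\ge 0$. Taking the contrapositive, any two states equivalent under $\mu_{\mathrm{tok}}^{(\alpha)}$ are already equivalent under $\mu_0$, which gives the claimed class inclusion. The restriction $\alpha<1$ matters here: at $\alpha=1$ the base ecology is discarded and coarsening can occur. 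This argument is just \propref{prop:ecology-expansion} applied with $\alpha<1$, and I would cite it directly.

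For part~(b), a gap pair satisfies $\sigmasq_{\mu_0}\le\varepsilon<\sigmasq_\nu$, which is exactly the hypothesis of \propref{prop:ecology-injection}(c). That result says the pair is resolved, i.e.\ $\sigmasq_{\mu_{\mathrm{tok}}^{(\alpha)}}>\varepsilon$, exactly when $\alpha>\alpha^*$, so part~(b) follows at once. The denominator $\sigmasq_\nu-\sigmasq_{\mu_0}$ is positive for such pairs, so $\alpha^*$ is well defined and lies in $[0,1)$.

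For part~(c), take $\alpha\le\alpha'$ and a pair in $R_\alpha$. Under the hypothesis $\sigmasq_\nu\ge\sigmasq_{\mu_0}$ on $G_\varepsilon$, \propref{prop:ecology-injection}(b) makes $\alpha\mapsto\sigmasq_{\mu_{\mathrm{tok}}^{(\alpha)}}(w_1,w_2)$ nondecreasing. Hence the value at $\alpha'$ is still above $\varepsilon$, and the pair lies in $R_{\alpha'}$, so $R_\alpha\subseteq R_{\alpha'}$. On $G_\varepsilon$ this hypothesis is in fact automatic, since $\sigmasq_\nu>\varepsilon\ge\sigmasq_{\mu_0}$ there.

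No step is a real obstacle. The only care needed is bookkeeping: I will exclude $\alpha=1$ in part~(a), and I will state how $G_\varepsilon$ is being read, since the corollary uses it without defining it.
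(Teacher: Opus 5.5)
Your proof is correct and follows the paper's argument step for step: (a) comes from the bound $\sigmasq_{\mu_{\mathrm{tok}}^{(\alpha)}}(w_1,w_2)\ge(1-\alpha)\,\sigmasq_{\mu_0}(w_1,w_2)>0$, obtained from \propref{prop:ecology-injection}(a); (b) is a direct appeal to \propref{prop:ecology-injection}(c); and (c) follows from the monotonicity in \propref{prop:ecology-injection}(b). Your additional remarks are all correct refinements that the paper does not state: that $\alpha^*\in[0,1)$, that the hypothesis in (c) holds automatically on $G_\varepsilon$ under your reading of that set, and that $\alpha=1$ must be excluded in (a).
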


\begin{proof}
For~(a), if $\sigmasq_{\mu_0}(w_1,w_2)>0$ and $\alpha<1$, then \propref{prop:ecology-injection}(a) gives $\sigmasq_{\mu_\alpha}(w_1,w_2)\ge(1-\alpha)\sigmasq_{\mu_0}(w_1,w_2)>0$, so every pair separated by~$\mu_0$ remains separated. For~(b), apply \propref{prop:ecology-injection}(c). For~(c), each pairwise score is nondecreasing in~$\alpha$ by \propref{prop:ecology-injection}(b), so once a pair enters~$R_\alpha$ it remains for all larger~$\alpha$.
\end{proof}

The two-ecology picture refines the failure predictions of \cref{sec:conclusion}. Models should fail on distinctions where \emph{both} $\sigmasq_{\mathrm{tok}}\approx 0$ and $\sigmasq_{\mathrm{eval}}\approx 0$. On distinctions where $\sigmasq_{\mathrm{tok}}\approx 0$ but $\sigmasq_{\mathrm{eval}}\gg 0$, the evolutionary dynamics provide pressure through lineage selection, and post-training injects the evaluation signal into the token-prediction process with an explicit threshold. The rate of improvement on such gap pairs is controlled by the efficiency of ecology injection.

\paragraph{Model-organism checks.}
Two microgpt experiments test the two-ecology mechanism on bracket balance in real Lisp source code (from \emph{Practical Common Lisp}). Both use the same design: a recipe trait $\alpha\in[0,1]$ controls ecology injection, a static sweep measures the effect of varying~$\alpha$, and a Wright--Fisher population selects on evaluation fitness. The experiments are named by what the evaluation ecology tests, not by what the model is trained to do (which is always next-token prediction).

In the \emph{balance checking} task, the world states are balanced versus unbalanced Lisp chunks, with a summary token appended to indicate bracket balance. The token ecology trains on the chunks without the summary; post-training at level~$\alpha$ mixes in the labeled version. The underlying global property, bracket nesting, is structural and capacity-limited: on held-out evaluation, summary cross-entropy falls gradually from $25.6$ at $\alpha=0$ to $0.19$ at $\alpha=1.0$, while selection raises $\bar{\alpha}$ from $0.46$ to $0.92$. This experiment demonstrates ecology injection on a genuinely hard structural task, but the evaluation signal leaks into training through the summary token itself.

The \emph{minimal code validation} task removes that leakage. The recipe trait~$\alpha$ controls only the fraction of bracket-containing Lisp code in the next-token training corpus; at $\alpha=0$ the model trains on the same code with brackets scrubbed out. No balance labels or summary tokens appear during training. Evaluation measures the held-out NLL gap between balanced and bracket-permuted chunks: a model that has learned bracket structure from real Lisp should find valid code more predictable than structurally scrambled code. At $\alpha=0$ the model is blind to bracket balance (discrimination $-0.002$); at $\alpha=0.1$ discrimination rises to $0.46$, and it increases steadily to $0.78$ at $\alpha=1.0$. The transfer is indirect: bracket exposure during next-token prediction develops sensitivity to a structural distinction that is never directly supervised. Population selection shows a noisy but clearly upward trajectory (from $\bar{\alpha}=0.47$ to $0.89$ over 25~generations), with $\bar{\alpha}_{\mathrm{eval}} \ge \bar{\alpha}$ in nearly every generation, leaving the final population concentrated on bracket-rich recipes. \Cref{fig:exp67-two-ecology} summarizes the static and population-level patterns.

\begin{figure}[!htbp]
\centering
\includegraphics[width=0.94\linewidth]{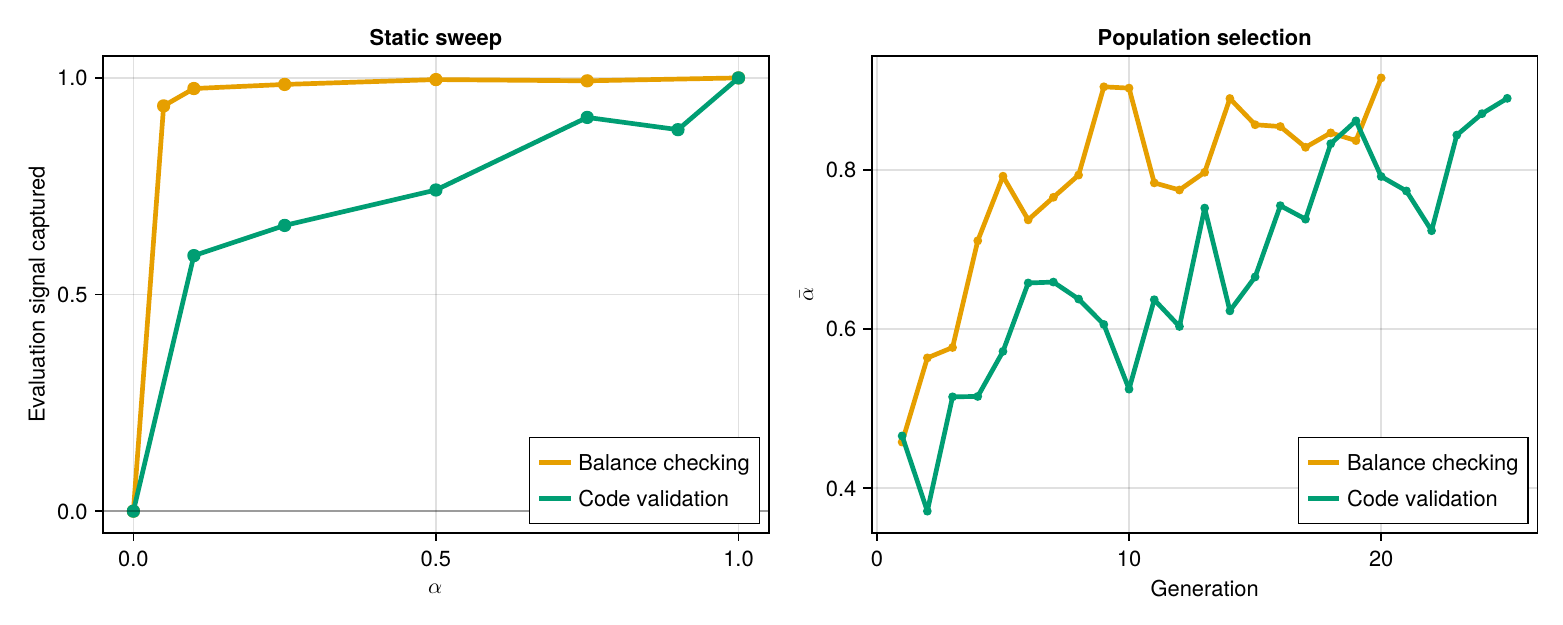}
\caption{Neural validation of the two-ecology mechanism on bracket balance in Lisp source code. Left: static sweep showing the fraction of maximum evaluation signal captured as a function of the recipe trait~$\alpha$. Both tasks start at zero signal when $\alpha=0$. Balance checking (direct supervision via summary token) saturates quickly; code validation (no balance labels, held-out NLL discrimination only) rises more steadily, confirming that the transfer is indirect. Right: population selection drives~$\bar{\alpha}$ upward in both tasks, concentrating the recipe distribution on bracket-rich training.}
\label{fig:exp67-two-ecology}
\end{figure}
\FloatBarrier

\paragraph{Non-stationarity and directed variation.}
Real LLM ``mutations'' are directed (Lamarckian): engineers observe failures and design improvements. Architectural innovations, training practices, and weight sharing spread across labs by horizontal transfer. These features can all accelerate convergence toward the same ecology-defined target without breaking the framework, so long as they do not change the effective objective. The task ecology~$\mu$ does shift over time (new benchmarks, new user demands), creating Red Queen dynamics where the population must track a moving optimum. Within any window of approximate stationarity, however, the static theorems identify the target partition and the population dynamics describe the conditional path toward it. With that dynamic bridge in place, the next question is what target such pressure selects when ecological veridicality is achievable.

\section{Minimum-Complexity Ecological Veridicality}\label{sec:mincomplex}

The static theorem identifies when zero excess is achievable, but not which zero-excess encoding should be preferred when several are available. In this section, we add a simplicity refinement on top of that static result: among all ecologically veridical encodings, which one preserves only the task-relevant distinctions and no more? The results below are stated for a generic ecology~$\mu$; they apply equally to the token ecology~$\mu_{\mathrm{tok}}$, the evaluation ecology~$\mu_{\mathrm{eval}}$, or any mixture.

\subsection{The Minimum-Complexity Theorem}

\begin{definition}[Representational complexity]
For an encoding $p \colon W \to X$ with prior $\pi$, the representational complexity is $I(p) = I(W; p(W)) = H(p(W))$, since $p$ is deterministic.
\end{definition}

\begin{theorem}[Minimum-complexity veridicality]\label{thm:min-complexity}
Among all encodings with
\[
L_D^*(p) = H(Y \mid C,W)
\]
(equivalently: among all ecologically veridical encodings under the training ecology):
\begin{enumerate}[label=(\alph*)]
\item The minimum representational complexity is:
\[
I^*(\mu) = H(W/{\sim_\mu}) = -\sum_{[w] \in W/{\sim_\mu}} \pi([w]) \log \pi([w]),
\]
where
\[
\pi([w]) := \sum_{u \in [w]} \pi(u)
\]
is the total prior mass of the $\mu$-equivalence class $[w]$;

\item This minimum is achieved by encodings whose partition is exactly $W/{\sim_\mu}$, no finer and no coarser.

\item Any strictly finer encoding (e.g.\ fully veridical when some $|[w]| > 1$) has $I(p) > H(W/{\sim_\mu})$. For the fully veridical encoding, $I(p) = H(W)$, so the maximal excess complexity is $H(W) - H(W/{\sim_\mu}) = H(W \mid W/{\sim_\mu})$, the within-class entropy.
\end{enumerate}
\end{theorem}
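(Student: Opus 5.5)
The plan is to reduce everything to a comparison of partitions and then use the fact that coarsening a partition cannot increase the entropy of its cell label. By \thmref{thm:ce-decomposition}(c), an encoding $p$ satisfies $L_D^*(p)=H(Y\mid C,W)$ exactly when every cell of $p$ lies inside a single $\sim_\mu$-class. Equivalently, the partition induced by $p$ refines $W/{\sim_\mu}$, so there is a well-defined map $g$ from codes to classes with $g(p(w))=[w]$. First I would record this characterisation. The feasible set is then exactly the set of encodings whose induced partition refines the quotient, and the quotient map $w\mapsto[w]$ is itself feasible. Complexity depends only on the induced partition, since $I(p)=H(p(W))$ is invariant under relabelling codes.

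The core inequality comes next. For any feasible $p$, the class label $[W]=g(p(W))$ is a deterministic function of $p(W)$. The chain rule therefore gives
\[
H(p(W)) = H([W]) + H(p(W)\mid [W]) \ge H(W/{\sim_\mu}).
\]
This lower bound is attained by the quotient encoding, which proves (a). For (b), equality holds iff $H(p(W)\mid[W])=0$, i.e.\ iff $p(W)$ is a.s.\ determined by $[W]$. Because $\pi(w)>0$ for all $w$, "a.s." means "everywhere": $p$ must be constant on each class. Combined with refinement, this forces the partition of $p$ to equal $W/{\sim_\mu}$, neither finer nor coarser. A coarser partition is infeasible, since it is not a refinement and so by \thmref{thm:ce-decomposition}(c) has positive excess.

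For (c), a strictly finer feasible $p$ splits some class $[w]$ into at least two cells, each of positive prior mass. Conditional on $[W]=[w]$, the label $p(W)$ is then non-degenerate, so $H(p(W)\mid[W])\ge \pi([w])\,H(p(W)\mid [W]=[w])>0$, which gives the strict inequality. For the fully veridical (injective) encoding, $H(p(W))=H(W)$. Applying the chain-rule identity to $p=\mathrm{id}_W$ gives $H(W)-H(W/{\sim_\mu})=H(W\mid W/{\sim_\mu})$. This is also the maximum excess complexity, because every feasible $p$ satisfies $H(p(W))\le H(W)$, as $p(W)$ is a function of $W$.

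The main obstacle is not the entropy algebra but the bookkeeping in the first step. One must identify "$\mu$-equivalence" (defined through the Hellinger task distance and \corref{cor:text-separation}) with the $D_C$-a.e.\ equality condition in \thmref{thm:ce-decomposition}(c), so that the feasible set is exactly the set of refinements of $W/{\sim_\mu}$. I would handle this by noting that both conditions reduce to $P_{w_1}(\cdot\mid c)=P_{w_2}(\cdot\mid c)$ for $D_C$-a.e.\ $c$, because $\Hsq$ vanishes exactly on equality. The strict-positivity of $\pi$ must also be invoked explicitly in the uniqueness and strictness claims.
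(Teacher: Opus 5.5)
Your proposal is correct and follows essentially the same route as the paper: feasibility reduces to refinement of $W/{\sim_\mu}$ via \thmref{thm:ce-decomposition}(c), the grouping (chain-rule) identity $H(p(W)) = H(W/{\sim_\mu}) + H(p(W)\mid W/{\sim_\mu})$ gives the bound, and (a)--(c) follow from when the conditional term vanishes. You also fill in small steps the paper leaves implicit: the explicit use of $\pi(w)>0$ in the equality and strictness cases, and the observation that $H(p(W))\le H(W)$ justifies calling $H(W\mid W/{\sim_\mu})$ the \emph{maximal} excess complexity.
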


\begin{proof}
By \thmref{thm:ce-decomposition}(c), attaining $L_D^*(p)=H(Y \mid C,W)$ is equivalent to each cell containing only $\mu$-equivalent states. The partition induced by $p$ must therefore refine the quotient partition $W/{\sim_\mu}$. Let $\mathcal{Q} := W/{\sim_\mu}$ and let $\mathcal{P} := p(W)$. Because $\mathcal{P}$ refines $\mathcal{Q}$, the grouping identity gives
\[
H(\mathcal{P}) = H(\mathcal{Q}) + H(\mathcal{P} \mid \mathcal{Q}) \ge H(\mathcal{Q}),
\]
with equality iff $H(\mathcal{P} \mid \mathcal{Q})=0$, i.e.\ iff $\mathcal{P}=\mathcal{Q}$. This proves~(a): the minimum possible representational complexity among zero-excess encodings is $H(\mathcal{Q})=H(W/{\sim_\mu})$. It also proves~(b): the minimizers are exactly the encodings whose partition is $W/{\sim_\mu}$ itself, neither finer nor coarser. For~(c), any strictly finer zero-excess encoding has $H(\mathcal{P} \mid \mathcal{Q})>0$, hence $I(p)=H(\mathcal{P})>H(\mathcal{Q})=H(W/{\sim_\mu})$. The fully veridical encoding corresponds to the identity partition on~$W$, so its complexity is $H(W)$, and the excess complexity relative to the minimum is
\[
H(W) - H(W/{\sim_\mu}) = H(W \mid W/{\sim_\mu}).
\]
\end{proof}

\begin{interpretation}
The minimum-complexity ecologically veridical encoding carries exactly the task-relevant information and nothing else. This gives a precise entropy-based benchmark for what a simplicity preference would have to select: among all zero-excess representations, the coarsest partition compatible with the ecology. Any extra resolution within $\mu$-equivalence classes carries additional information cost without improving Bayes-optimal token loss.
\end{interpretation}

\begin{corollary}[Topological convergence of optima]\label{cor:topological-convergence}
If two models $\theta_1, \theta_2$ both attain the training optimum
\[
L_D^*(\theta_i) = H(Y \mid C,W)
\]
and both have minimum representational complexity under the same $\mu_D$, then $p_{\theta_1,D}$ and $p_{\theta_2,D}$ induce exactly the same partition $W/{\sim_{\mu_D}}$. Consequently they agree on the zero/nonzero separation pattern, and any kernel built from the $k_D := |W/{\sim_{\mu_D}}|$ distinct class codes has rank at most $k_D-1$, with equality only under non-degenerate geometry.
\end{corollary}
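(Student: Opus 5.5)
The argument is a direct application of \thmref{thm:min-complexity}. Its only delicate point is the rank statement about kernels.

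\emph{Same partition.} Each $\theta_i$ attains $L_D^*(\theta_i)=H(Y\mid C,W)$, so by \thmref{thm:llm-veridicality}(a), equivalently \thmref{thm:ce-decomposition}(c), the induced encoding $p_{\theta_i,D}$ is ecologically veridical. Its partition therefore refines $W/{\sim_{\mu_D}}$. It also has minimum representational complexity among zero-excess encodings, so \thmref{thm:min-complexity}(b) applies. That part is proved via the grouping identity $H(\mathcal{P})=H(\mathcal{Q})+H(\mathcal{P}\mid\mathcal{Q})$, and $H(\mathcal{P}\mid\mathcal{Q})=0$ forces $\mathcal{P}=\mathcal{Q}$ because $\pi(w)>0$ for all $w$. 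Hence both partitions equal $W/{\sim_{\mu_D}}$ exactly. Two encodings with the same partition differ only by a relabelling of codes, so they agree on which pairs are merged. This is the zero/nonzero separation pattern: $p_{\theta_1,D}(w_1)\neq p_{\theta_1,D}(w_2)$ iff $p_{\theta_2,D}(w_1)\neq p_{\theta_2,D}(w_2)$ iff $w_1\not\sim_{\mu_D} w_2$.

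\emph{Rank bound.} Here I must fix what ``kernel built from the class codes'' means. I will read it as a centred Gram or distance-derived kernel on the $k_D$ class representatives, for example the doubly centred matrix $-\tfrac12 J D^{(2)} J$ with $J=I-\tfrac1{k_D}\mathbf{1}\mathbf{1}^\top$, or the Gram matrix of the centred code embeddings $x_j-\bar x$. Such a kernel is a $k_D\times k_D$ matrix whose rows sum to zero, since $\mathbf{1}$ lies in its null space by centring. Its rank is therefore at most $k_D-1$. If the kernel is instead indexed by $W$, it is constant on blocks of each class and reduces to the same $k_D\times k_D$ matrix, so the bound still holds. Equality holds iff the $k_D$ centred codes span a $(k_D-1)$-dimensional space, that is, iff the codes are affinely independent. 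This is the ``non-degenerate geometry'' condition, and collinear or coplanar code placements give strict inequality.

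The main obstacle is this interpretive step. The statement does not specify the kernel, so the proof has to state the centring convention explicitly. An uncentred Gram matrix of $k_D$ codes can have rank $k_D$, so the claim is only correct for the centred form, and I would make that assumption explicit, consistent with the Hilbert-space embedding of \cref{app:geometry}. The partition-identity part is routine given \thmref{thm:min-complexity}.
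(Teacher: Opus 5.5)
Your proposal is correct and follows the paper's own proof: \thmref{thm:min-complexity}(b) forces both induced partitions to coincide with $W/{\sim_{\mu_D}}$, and centring places the $k_D$ class codes in an affine subspace of dimension at most $k_D-1$, which gives the rank bound, with equality exactly under affine independence. Your caveat that the claim needs a centred kernel (an uncentred Gram matrix can have rank $k_D$) matches the paper's implicit restriction to centred Gram matrices; the one small imprecision is that a $W$-indexed kernel centred over $W$ reduces to a class-size-weighted centring, not literally ``the same'' $k_D\times k_D$ matrix, though the bound still holds because that centre also lies in the affine hull of the codes.
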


\begin{proof}
By \thmref{thm:min-complexity}(b), every minimum-complexity training-optimal encoding induces the quotient partition $W/{\sim_{\mu_D}}$ and no finer one. Therefore $p_{\theta_1,D}$ and $p_{\theta_2,D}$ identify exactly the same world-state pairs, namely the $\mu_D$-equivalent pairs, so they agree on the full zero/nonzero separation pattern.

For the rank statement, both encodings realize exactly $k_D:=|W/{\sim_{\mu_D}}|$ distinct class codes. After centering, those class representatives lie in an affine subspace of dimension at most $k_D-1$, because the centered representatives sum to zero. Any centered Gram matrix built from them therefore has rank at most $k_D-1$, with equality only when the $k_D$ class representatives are in affine general position.
\end{proof}

\subsection{The Rate-Distortion Curve}

The minimum-complexity theorem identifies the first zero-excess point. It is also useful to phrase the same fact as a rate-distortion statement: how much representational complexity is required before zero excess becomes achievable at all? The next corollary makes that threshold explicit.

\begin{corollary}[Rate-distortion characterisation]
Define the excess-loss distortion
\[
R(p) := L_D^*(p) - H(Y \mid C,W),
\]
and the induced rate-distortion function
\[
R(I) := \min_{p:\, I(W;p(W)) \le I} R(p).
\]
Then $R(I) = 0$ for $I \ge I^*(\mu)$ and $R(I) > 0$ for $I < I^*(\mu)$. The critical rate $I^*(\mu)$ is the phase transition point from strictly positive excess loss to zero excess loss.
\end{corollary}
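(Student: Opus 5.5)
The plan is to read both claims off \thmref{thm:min-complexity} and \thmref{thm:ce-decomposition}(c). The one subtlety is that the minimum defining $R(I)$ ranges over encodings $p$ of a finite $W$. Since $L_D^*(p)$ and $I(W;p(W))=H(p(W))$ both depend on $p$ only through its induced partition, the feasible set is effectively a finite set of partitions (at most $B(|W|)$ of them). For every $I \ge 0$ this set is nonempty, because the constant encoding has rate $0$. So the minimum is attained, and $R(I)$ is well defined and nonincreasing in $I$.

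\emph{Step 1 ($I \ge I^*(\mu)$).} I take $p$ to be the quotient encoding $w \mapsto [w]$ onto $W/{\sim_\mu}$. By \thmref{thm:min-complexity}(a,b) it has $I(W;p(W)) = H(W/{\sim_\mu}) = I^*(\mu) \le I$, so it is feasible. Every cell consists of $\mu$-equivalent states, so \thmref{thm:ce-decomposition}(c) gives $R(p)=0$. Since $R(p)\ge 0$ for every $p$ (the Jensen--Shannon excess is nonnegative), this yields $R(I)=0$.

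\emph{Step 2 ($I < I^*(\mu)$).} Take any feasible $p$ with $H(p(W)) \le I < I^*(\mu)$. If $R(p)=0$, then $p$ would be ecologically veridical, and \thmref{thm:min-complexity}(a) would force $H(p(W)) \ge I^*(\mu)$, a contradiction. Hence $R(p)>0$ for every feasible $p$. Because the minimum runs over finitely many partitions, each with strictly positive excess, $R(I)>0$. The phase-transition reading is just these two steps combined with monotonicity of $R(I)$.

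The only real obstacle is the attainment/strict-positivity point in Step 2: one must use that the minimum is over a finite set of partitions, not an infimum over a continuum. I would state this explicitly, as in the proof of \thmref{thm:llm-veridicality}(c). I would also make sure that ``$\mu$-equivalence'' in \thmref{thm:min-complexity} is read as $D_C$-almost-everywhere equality of $P_w(\cdot\mid c)$, so that the characterisation in \thmref{thm:ce-decomposition}(c) applies verbatim.
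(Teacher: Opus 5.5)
Your proposal is correct and follows the paper's proof: the quotient encoding onto $W/{\sim_\mu}$ witnesses $R(I)=0$ for $I\ge I^*(\mu)$, and \thmref{thm:min-complexity} rules out zero excess at any rate below $I^*(\mu)$. Your remark that the minimum ranges over finitely many partitions makes explicit what the paper leaves implicit in ``and so does their minimum'': strict positivity of each feasible excess carries over to the minimum.
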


\begin{proof}
By \thmref{thm:min-complexity}, zero excess is achievable exactly for encodings whose complexity is at least the minimum zero-excess complexity~$I^*(\mu)$. Hence if $I\ge I^*(\mu)$, the feasible set in the definition of $R(I)$ contains a zero-excess encoding, so $R(I)=0$. If $I<I^*(\mu)$, then no encoding with complexity at most~$I$ can attain zero excess, again by \thmref{thm:min-complexity}; therefore every feasible encoding has strictly positive distortion, and so does their minimum.
\end{proof}

$I^*(\mu)$ is determined by the task ecology, not the model. Scaling the model does not change $I^*(\mu)$; scaling the data changes $\mu$ and hence $I^*(\mu)$. If optimisation has a simplicity preference, $I^*(\mu)$ is the lower bound it would favour among zero-excess encodings. Whether SGD exhibits such a preference strongly enough to drive $p_{\theta,D}$ near this bound is an additional empirical and theoretical question.

\subsection{Local Split Criterion under Simplicity Pressure}

The minimum-complexity result is global: it compares entire zero-excess encodings. To derive concrete failure predictions, we also want a local criterion saying when a distinction is worth preserving under an explicit simplicity pressure. The next setup isolates a single candidate split and computes the exact gain from resolving it.

\begin{definition}[Complexity-regularized token objective]\label{def:regularized-objective}
For $\beta \ge 0$ and encoding $p \colon W \to X$, define
\[
J_{D,\beta}(p) := L_D^*(p) + \beta\, I(W; p(W))
            = L_D^*(p) + \beta\, H(p(W)).
\]
This objective is not identified with the exact SGD objective. It serves instead as an explicit model of a simplicity pressure that trades predictive performance against representational complexity.
\end{definition}

\begin{definition}[One-cell refinement]\label{def:one-cell-refinement}
Let $p$ be an encoding and let $S \subseteq W$ be one of its cells with $\pi_S := \sum_{w \in S}\pi(w) > 0$. Partition $S$ into two non-empty subcells $A$ and $B$, write
\[
\pi_A := \sum_{w \in A}\pi(w),
\qquad
\pi_B := \sum_{w \in B}\pi(w),
\qquad
\lambda := \pi_A/\pi_S,
\]
and let $p^{A|B}$ be the refinement obtained by replacing cell $S$ with the two cells $A$ and $B$ and leaving all other cells unchanged.

For each context $c$, define the subcell-average next-token distributions
\[
\bar P_A(\cdot \mid c)
:= \sum_{w \in A} \frac{\pi(w)}{\pi_A}\, P_w(\cdot \mid c),
\qquad
\bar P_B(\cdot \mid c)
:= \sum_{w \in B} \frac{\pi(w)}{\pi_B}\, P_w(\cdot \mid c).
\]
\end{definition}

\begin{theorem}[Split-versus-merge threshold]\label{thm:split-threshold}
In the setup of \defref{def:one-cell-refinement},
\[
J_{D,\beta}(p) - J_{D,\beta}(p^{A|B})
= \pi_S \Bigl(
    \E_{c \sim D_C}\bigl[
      \JS_{\lambda}\bigl(\bar P_A(\cdot \mid c),\, \bar P_B(\cdot \mid c)\bigr)
    \bigr]
    - \beta\, h(\lambda)
  \Bigr),
\]
where
\[
h(\lambda) := -\lambda \log \lambda - (1-\lambda)\log(1-\lambda)
\]
is the binary entropy.

Consequently:
\begin{enumerate}[label=(\alph*)]
\item the refinement $p^{A|B}$ is preferred to $p$ under $J_{D,\beta}$ iff
\[
\E_{c \sim D_C}\bigl[
  \JS_{\lambda}\bigl(\bar P_A(\cdot \mid c),\, \bar P_B(\cdot \mid c)\bigr)
\bigr]
> \beta\, h(\lambda);
\]

\item the merge is preferred iff the opposite inequality holds;

\item when $\beta > 0$, distinctions with sufficiently small predictive Jensen--Shannon gain are optimally merged.
\end{enumerate}
\end{theorem}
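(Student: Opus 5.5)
The identity splits into two independent differences, one for each term of $J_{D,\beta}$:
\[
J_{D,\beta}(p) - J_{D,\beta}(p^{A|B})
= \bigl(L_D^*(p) - L_D^*(p^{A|B})\bigr) + \beta\bigl(H(p(W)) - H(p^{A|B}(W))\bigr).
\]
I will compute each difference exactly, assemble them, and then read off (a)--(c).

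\textbf{Complexity term.} The two encodings differ only in that cell $S$ becomes cells $A$ and $B$. By the grouping identity for entropy (the one used in the proof of \thmref{thm:min-complexity}),
\[
H(p^{A|B}(W)) - H(p(W))
= -\pi_A\log\pi_A - \pi_B\log\pi_B + \pi_S\log\pi_S
= \pi_S\, h(\lambda).
\]
The complexity term therefore contributes $-\beta\,\pi_S h(\lambda)$ to the difference.

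\textbf{Loss term.} I will use the cell-by-cell form of \thmref{thm:ce-decomposition}(b). Write
\[
L_D^*(p) = H(Y\mid C,W) + \E_c\Bigl[\sum_x \pi_x \JS_{\alpha_x}\bigl(\{P_w(\cdot\mid c)\}_{w\in C_x}\bigr)\Bigr].
\]
All cells other than $S$ appear identically for $p$ and $p^{A|B}$, so they cancel. The loss difference is then $\E_c$ of the quantity
\[
\pi_S \JS_{\alpha_S} - \pi_A \JS_{\alpha_A} - \pi_B \JS_{\alpha_B}.
\]
Each $\JS$ here is an $\alpha$-weighted average of $\KL(P_w\,\|\,\text{cell mean})$.

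To simplify it, fix $c$ and apply the compensation identity
\[
\sum_w \alpha(w)\,\KL(P_w\|Q) = \sum_w \alpha(w)\,\KL(P_w\|\bar P) + \KL(\bar P\|Q)
\]
to the states in $A$, using the reference distribution $Q=\bar P_S$. This gives
\[
\sum_{w\in A}\tfrac{\pi(w)}{\pi_A}\,\KL(P_w\|\bar P_S) = \JS_A + \KL(\bar P_A\|\bar P_S),
\]
and the analogous identity holds for $B$.

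Weighting the $A$ identity by $\pi_A$ and the $B$ identity by $\pi_B$ and adding, the left-hand sides sum to $\pi_S\JS_{\alpha_S}$. The difference thus reduces to
\[
\pi_A\KL(\bar P_A\|\bar P_S) + \pi_B\KL(\bar P_B\|\bar P_S).
\]
Since $\bar P_S = \lambda\bar P_A + (1-\lambda)\bar P_B$, this equals $\pi_S\,\JS_\lambda(\bar P_A,\bar P_B)$. Taking $\E_{c\sim D_C}$ and combining with the complexity term yields the displayed formula.

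The only step requiring care is this compensation (Pythagorean) identity for KL; it follows from expanding logarithms and using $\sum_w \alpha(w) P_w = \bar P$. An equivalent route is the chain rule: $I(Y;W\mid C,p)-I(Y;W\mid C,p^{A|B}) = I(Y;p^{A|B}\mid C,p)$, and this conditional mutual information is nonzero only on cell $S$, where it equals the two-component JS divergence.

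\textbf{Consequences.} Since $\pi_S>0$, the sign of the difference is the sign of $\E_c[\JS_\lambda] - \beta h(\lambda)$. Refinement is preferred exactly when this is positive, which gives (a). Merging is preferred exactly when it is negative, which gives (b). For (c), $A$ and $B$ are non-empty, so $\lambda\in(0,1)$ and $h(\lambda)>0$. When $\beta>0$, any split whose expected JS gain is below the positive threshold $\beta h(\lambda)$ is therefore optimally merged.
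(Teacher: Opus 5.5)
Your proof is correct. It handles the complexity term (the grouping identity giving $\pi_S h(\lambda)$) and the reading-off of (a)--(c) exactly as the paper does, but your main argument for the loss term takes a different route.

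The paper does not use the cell-wise Jensen--Shannon form of \thmref{thm:ce-decomposition}(b). It observes that $X=p(W)$ is a deterministic function of $X'=p^{A|B}(W)$, so $L_D^*(p)-L_D^*(p^{A|B}) = H(Y\mid C,X)-H(Y\mid C,X') = I(Y;X'\mid C,X)$. Only the cell $S$ contributes, and there $X'$ is a binary label $Z$ with weights $(\lambda,1-\lambda)$ whose conditional mutual information with $Y$ is $\JS_\lambda(\bar P_A,\bar P_B)$. This is precisely the ``equivalent route'' you mention in your closing sentence on the loss term.

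Your primary route instead works pointwise in $c$ with distributions only. You cancel the untouched cells and use the KL compensation identity to obtain $\pi_S\JS_{\alpha_S} = \pi_A\JS_{\alpha_A}+\pi_B\JS_{\alpha_B}+\pi_S\JS_\lambda(\bar P_A,\bar P_B)$, where $\JS_{\alpha_A},\JS_{\alpha_B}$ are the within-subcell divergences.

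The two routes buy different things:
\begin{itemize}
\item The chain-rule version is shorter and makes the meaning of the gain transparent: it is the extra information the finer code carries about the next token. It also gives $L_D^*(p)-L_D^*(p')=I(Y;p'(W)\mid C,p(W))\ge 0$ at once for any refinement $p'$. Its one implicit step is that $C$ is independent of $W$, so conditioning on $X=S$ leaves both the context law $D_C$ and the split weights $(\lambda,1-\lambda)$ unchanged.
\item Your version avoids that probabilistic bookkeeping. As a by-product, it proves the hierarchical Jensen--Shannon grouping identity that the paper later asserts without derivation in the proofs of \thmref{thm:gen-vs-spec}(b) and \propref{prop:off-ecology-error}.
\end{itemize}
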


\begin{proof}
Let $X := p(W)$ and $X' := p^{A|B}(W)$. Since $X$ is a deterministic function of $X'$, the loss difference is
\[
L_D^*(p) - L_D^*(p^{A|B})
= H(Y \mid C, X) - H(Y \mid C, X')
= I(Y; X' \mid C, X).
\]
Only the split cell contributes. More explicitly, if $X=x$ with $x \neq S$, then $X'=x$ deterministically as well, so $I(Y;X' \mid C, X=x)=0$. Outside cell $S$, $X'$ therefore carries no extra information beyond $X$. On the original cell $S$, the refinement amounts to a binary label $Z \in \{A,B\}$ with $P(Z=A \mid X=S)=\lambda$ and $P(Z=B \mid X=S)=1-\lambda$. Therefore
\[
I(Y; X' \mid C, X)
= \pi_S\, I(Y; Z \mid C, X=S)
= \pi_S\, \E_{c \sim D_C}\bigl[
	    \JS_{\lambda}\bigl(\bar P_A(\cdot \mid c),\, \bar P_B(\cdot \mid c)\bigr)
	  \bigr].
\]
For the complexity term, splitting one cell of mass $\pi_S$ into masses $\pi_A$ and $\pi_B$ increases entropy by the grouping identity:
\[
H(X') - H(X)
= -\pi_A \log \pi_A - \pi_B \log \pi_B + \pi_S \log \pi_S
= \pi_S\, h(\lambda).
\]
Subtracting $\beta(H(X')-H(X))$ from the loss improvement gives the stated formula for $J_{D,\beta}(p) - J_{D,\beta}(p^{A|B})$. Parts~(a)--(c) are immediate.
\end{proof}

\begin{interpretation}
The quantity
\[
\Delta_{\mathrm{pred}}(A,B)
:= \E_{c \sim D_C}\bigl[
     \JS_{\lambda}\bigl(\bar P_A(\cdot \mid c),\, \bar P_B(\cdot \mid c)\bigr)
   \bigr]
\]
is the predictive value of resolving the distinction $A$ versus $B$ under the ecology in question. The complexity cost of doing so is the binary entropy term $h(\lambda)$. Under the explicit encoding-level objective $J_{D,\beta}$, distinctions whose predictive gain is too small relative to that cost are locally preferred merge candidates. What is proved here is a local comparison between $p$ and one refinement $p^{A|B}$ under that explicit objective; this theorem by itself does not identify the exact SGD objective, nor does it imply that ordinary parameter-space regularizers such as weight decay generate a globally monotone merge path.

\citet{elhage_etal_2022_superposition} suggest a plausible implementation-level picture for this threshold in actual transformers: under capacity pressure, weak features need not disappear discretely, but can be stored in superposition, with noisier downstream readout than strongly useful features. We use that only as a mechanistic interpretation of how weak distinctions may become fragile under simplicity pressure, not as a derivation of \thmref{thm:split-threshold}.

The present theorem is purely representational: it favors lower-entropy partitions among zero-excess encodings. Under restricted deployment inference classes there may be a second, computational analogue of simplicity pressure, favoring encodings whose preserved distinctions are easier to exploit and therefore induce smaller decoding gaps $\Gamma_D^{\mathcal{Q}_{\mathrm{dep}}}$. A joint theory of representational and computational simplicity remains open.

The split-threshold criterion applies to any ecology, not only the token ecology. In the two-ecology setting of \cref{sec:two-ecology}, a distinction that is a merge candidate under $\mu_{\mathrm{tok}}$ alone (because $\Delta_{\mathrm{pred}}(A,B)$ is small under token prediction) may nevertheless be preserved if ecology injection raises the effective separation above the threshold: once $\alpha > \alpha^*(w_1,w_2)$ from \propref{prop:ecology-injection}, the injected ecology contributes enough predictive gain that simplicity pressure no longer favours the merge.
\end{interpretation}

\begin{definition}[One-step partition neighborhood]\label{def:partition-neighborhood}
Identify an encoding $p$ with its induced partition of $W$ into non-empty cells. Define:
\begin{align*}
N_{\mathrm{split}}(p)
&:= \{p' : p' \text{ is obtained from } p \text{ by splitting one cell into two non-empty subcells}\},\\
N_{\mathrm{merge}}(p)
&:= \{p' : p' \text{ is obtained from } p \text{ by merging two distinct cells}\},\\
N(p) &:= N_{\mathrm{split}}(p) \cup N_{\mathrm{merge}}(p).
\end{align*}
We call $p$ a \emph{local minimum} of $J_{D,\beta}$ on the partition lattice if
\[
J_{D,\beta}(p) \le J_{D,\beta}(p')
\qquad
\text{for every } p' \in N(p).
\]
\end{definition}

\begin{proposition}[Local minima on the partition lattice]\label{prop:partition-local-min}
An encoding $p$ is a local minimum of $J_{D,\beta}$ if and only if both of the following conditions hold:
\begin{enumerate}[label=(\alph*)]
\item \textbf{Split stability.} For every cell $S$ of $p$ and every non-trivial bipartition $S=A \sqcup B$ with $\lambda=\pi(A)/\pi(S)$,
\[
\E_{c \sim D_C}\bigl[
  \JS_{\lambda}\bigl(\bar P_A(\cdot \mid c),\, \bar P_B(\cdot \mid c)\bigr)
\bigr]
\le \beta\, h(\lambda).
\]

\item \textbf{Merge stability.} For every pair of distinct cells $C_1,C_2$ of $p$, with $\pi_{C_1\cup C_2}=\pi(C_1)+\pi(C_2)$ and $\lambda=\pi(C_1)/\pi_{C_1\cup C_2}$,
\[
\E_{c \sim D_C}\bigl[
  \JS_{\lambda}\bigl(\bar P_{C_1}(\cdot \mid c),\, \bar P_{C_2}(\cdot \mid c)\bigr)
\bigr]
\ge \beta\, h(\lambda).
\]
\end{enumerate}
\end{proposition}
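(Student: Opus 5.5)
The plan is to reduce the local-minimum condition to the two one-step comparisons already computed in \thmref{thm:split-threshold}. By \defref{def:partition-neighborhood}, $p$ is a local minimum iff $J_{D,\beta}(p)\le J_{D,\beta}(p')$ for every $p'\in N_{\mathrm{split}}(p)$ and for every $p'\in N_{\mathrm{merge}}(p)$. Since $N(p)$ is the union of these two sets, the proof should show that the split inequalities are equivalent to condition~(a) and the merge inequalities are equivalent to condition~(b). The conjunction is then exactly the stated characterisation.

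\textbf{Split step.} Every $p'\in N_{\mathrm{split}}(p)$ is obtained by choosing a cell $S$ of $p$ and a non-trivial bipartition $S=A\sqcup B$, so $p'=p^{A|B}$ in the notation of \defref{def:one-cell-refinement}. Conversely, every such choice yields an element of $N_{\mathrm{split}}(p)$. \thmref{thm:split-threshold} gives
\[
J_{D,\beta}(p)-J_{D,\beta}(p^{A|B})=\pi_S\bigl(\E_{c\sim D_C}[\JS_\lambda(\bar P_A,\bar P_B)]-\beta h(\lambda)\bigr).
\]
Because $\pi(w)>0$ for all $w$, we have $\pi_S>0$. Hence $J_{D,\beta}(p)\le J_{D,\beta}(p^{A|B})$ holds iff the bracket is $\le 0$, which is exactly~(a).

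\textbf{Merge step.} Let $p''\in N_{\mathrm{merge}}(p)$ be obtained by merging distinct cells $C_1,C_2$ into $S:=C_1\cup C_2$. Then $p$ is precisely the one-cell refinement of $p''$ that splits $S$ into $A=C_1$ and $B=C_2$, so $p=(p'')^{C_1|C_2}$. The bijection between merges of $p$ and pairs of distinct cells is immediate. Applying \thmref{thm:split-threshold} with base encoding $p''$ gives
\[
J_{D,\beta}(p'')-J_{D,\beta}(p)=\pi_{C_1\cup C_2}\bigl(\E_c[\JS_\lambda(\bar P_{C_1},\bar P_{C_2})]-\beta h(\lambda)\bigr),
\]
with $\lambda=\pi(C_1)/\pi_{C_1\cup C_2}$. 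The subcell averages $\bar P_{C_i}$ depend only on the cells $C_i$ themselves, so they coincide with the quantities in~(b). Positivity of $\pi_{C_1\cup C_2}$ then turns $J_{D,\beta}(p)\le J_{D,\beta}(p'')$ into~(b).

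The only point needing care is this merge step. \thmref{thm:split-threshold} must be applied with the merged encoding as the base, not with $p$. One should also note that its proof never used anything about the cells of the base encoding outside $S$, so that application is legitimate. Everything else is bookkeeping of the neighborhoods together with strict positivity of cell masses.
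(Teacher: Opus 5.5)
Your proposal is correct and follows essentially the same route as the paper: split stability comes from applying \thmref{thm:split-threshold} to $p$ itself, and merge stability comes from viewing $p$ as the one-cell refinement of the merged partition and applying the same theorem with that merged partition as the base. The one difference is that you work directly with the exact identity and the positivity of $\pi_S$, where the paper cites the theorem's ``preferred iff'' clauses; this handles the tie cases more transparently, but the argument is otherwise identical.
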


\begin{proof}
By \thmref{thm:split-threshold}(a), a one-step split lowers $J_{D,\beta}$ exactly when the corresponding Jensen--Shannon gain exceeds $\beta h(\lambda)$. Hence condition~(a) is equivalent to $J_{D,\beta}(p)\le J_{D,\beta}(p')$ for every $p' \in N_{\mathrm{split}}(p)$.

For a one-step merge of two cells $C_1,C_2$, let $p'$ denote the merged partition and view $p$ as the refinement of $p'$ obtained by splitting $C_1 \cup C_2$ back into $C_1$ and $C_2$. Applying \thmref{thm:split-threshold}(b) to that split shows that the merge lowers $J_{D,\beta}$ exactly when the same Jensen--Shannon gain is $< \beta h(\lambda)$. Thus condition~(b) is equivalent to $J_{D,\beta}(p)\le J_{D,\beta}(p')$ for every $p' \in N_{\mathrm{merge}}(p)$.

Combining the two equivalences and using $N(p)=N_{\mathrm{split}}(p)\cup N_{\mathrm{merge}}(p)$ proves the claim.
\end{proof}

\begin{corollary}[Local stability of the minimum-complexity veridical partition]\label{cor:veridical-local-min}
Let $p^\star$ be a minimum-complexity zero-excess encoding, so that its partition is exactly $W/{\sim_{\mu_D}}$ by \thmref{thm:min-complexity}. Then $p^\star$ is split-stable for every $\beta \ge 0$, and it is a local minimum of $J_{D,\beta}$ if and only if
\[
\beta \le \beta_{\min},
\]
where
\[
\beta_{\min}
:=
\min_{\substack{C_1,C_2 \in W/{\sim_{\mu_D}}\\ C_1 \neq C_2}}
\frac{
\E_{c \sim D_C}\bigl[
  \JS_{\lambda}\bigl(\bar P_{C_1}(\cdot \mid c),\, \bar P_{C_2}(\cdot \mid c)\bigr)
\bigr]
}{
h(\lambda)
},
\qquad
\lambda := \frac{\pi(C_1)}{\pi(C_1)+\pi(C_2)}.
\]
\end{corollary}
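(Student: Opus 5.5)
The plan is to reduce both claims to \propref{prop:partition-local-min}. Local minimality of $p^\star$ is equivalent to split stability together with merge stability, so I would check each condition separately on the quotient partition $W/{\sim_{\mu_D}}$.

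\emph{Split stability.} Take any cell $S$ of $p^\star$. It is a single $\mu_D$-equivalence class, so every $w\in S$ has the same next-token law $P_w(\cdot\mid c)$ for $D_C$-almost every $c$. For any bipartition $S=A\sqcup B$, the subcell averages $\bar P_A(\cdot\mid c)$ and $\bar P_B(\cdot\mid c)$ are then convex combinations of a single distribution. Hence they coincide almost surely and $\JS_\lambda(\bar P_A,\bar P_B)=0$ $D_C$-a.e. The left side of the split-stability inequality is therefore $0$. Since $\beta\ge 0$ and $h(\lambda)\ge 0$, we get $0\le \beta h(\lambda)$ for every $\beta\ge0$.

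\emph{Merge stability.} For distinct classes $C_1,C_2$ with $\lambda=\pi(C_1)/(\pi(C_1)+\pi(C_2))$, the merge condition of \propref{prop:partition-local-min}(b) reads $\E_c[\JS_\lambda(\bar P_{C_1},\bar P_{C_2})]\ge \beta h(\lambda)$. Both cells are nonempty and $\pi>0$, so $\lambda\in(0,1)$ and $h(\lambda)>0$. We may therefore divide: the condition is equivalent to $\beta\le G(C_1,C_2)/h(\lambda)$, where $G$ denotes the expected JS gain. Requiring this for all pairs simultaneously is exactly $\beta\le\beta_{\min}$, and the minimum exists because $W$, and hence the set of class pairs, is finite. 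Combining the two parts gives the iff.

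The only delicate point is the division by $h(\lambda)$, which needs $\lambda\in(0,1)$. This is guaranteed by nonempty cells and the positive prior. Everything else is a direct application of \propref{prop:partition-local-min}, with no real obstacle.

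I would also note in passing that $\beta_{\min}>0$. Distinct classes are $\mu_D$-separated, so some $w\in C_1$ and $w'\in C_2$ differ on a set of contexts of positive $D_C$-measure. Within each class the laws are a.e. constant, so the class averages are a.e. equal to the common class laws. It follows that the class averages differ on a positive-measure set, and $G>0$. This is not needed for the equivalence itself.
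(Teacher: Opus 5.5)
Your proof is correct and follows the paper's argument. Within a class the subcell averages coincide $D_C$-a.e., so split stability is automatic; merge stability then follows from \propref{prop:partition-local-min}(b) after taking the minimum over class pairs, and your explicit check that $h(\lambda)>0$ (nonempty cells, positive prior) justifies the division that the paper leaves implicit.
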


\begin{proof}
If $A,B$ lie inside a single $\mu_D$-equivalence class, then $P_w(\cdot \mid c)$ is the same for all $w \in A \cup B$ for $D_C$-almost every $c$. Hence $\bar P_A(\cdot \mid c)=\bar P_B(\cdot \mid c)$ almost everywhere and the split-gain term in \propref{prop:partition-local-min}(a) is zero. So every within-class split is neutral or disfavored, which proves split stability.

For merges between distinct $\mu_D$-classes, \propref{prop:partition-local-min}(b) shows that local stability is equivalent to requiring the Jensen--Shannon gain of every class pair to be at least $\beta h(\lambda)$. Taking the minimum over class pairs gives the threshold $\beta_{\min}$.
\end{proof}

\begin{remark}[Limits of the local criterion]\label{rem:beta-path-caution}
At $\beta=0$, every zero-excess partition is a local minimum of $J_{D,0}=L_D^*$, and \thmref{thm:min-complexity} selects $p^\star$ as the coarsest such partition. As $\beta$ increases past $\beta_{\min}$, \corref{cor:veridical-local-min} identifies exactly which distinction first becomes locally unstable: the class pair with the smallest Jensen--Shannon-gain-to-entropy-cost ratio.

Beyond that first threshold, however, the local criterion must be recomputed on the updated partition. Once two cells merge, both the Jensen--Shannon gains and the weights $\lambda$ change, so later transitions are determined by the current partition rather than by the original pairwise ordering alone. The theorem therefore gives an exact characterization of one-step local stability, but not a complete global merge path through partition space or parameter space.
\end{remark}

\begin{figure}[t]
\centering
\includegraphics[width=0.78\linewidth]{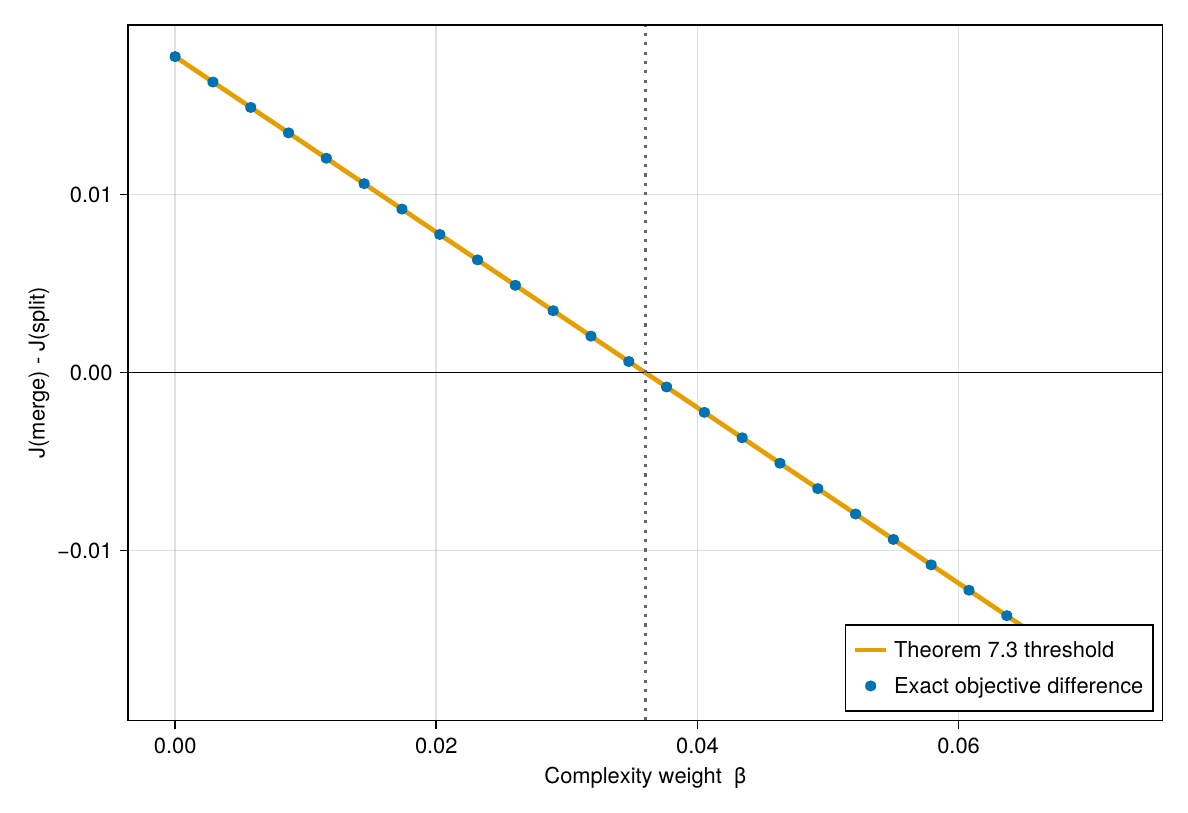}
\caption{Exact finite-ecology calibration of \thmref{thm:split-threshold}. The plotted quantity is the objective difference $J_{D,\beta}(p_{\mathrm{merge}})-J_{D,\beta}(p_{\mathrm{split}})$ for an illustrative local split. The crossing occurs at the theorem's threshold $\beta^\ast$: below it the split is preferred, above it the merge is preferred.}
\label{fig:exp0-split-threshold}
\end{figure}

\begin{figure}[t]
\centering
\includegraphics[width=\linewidth]{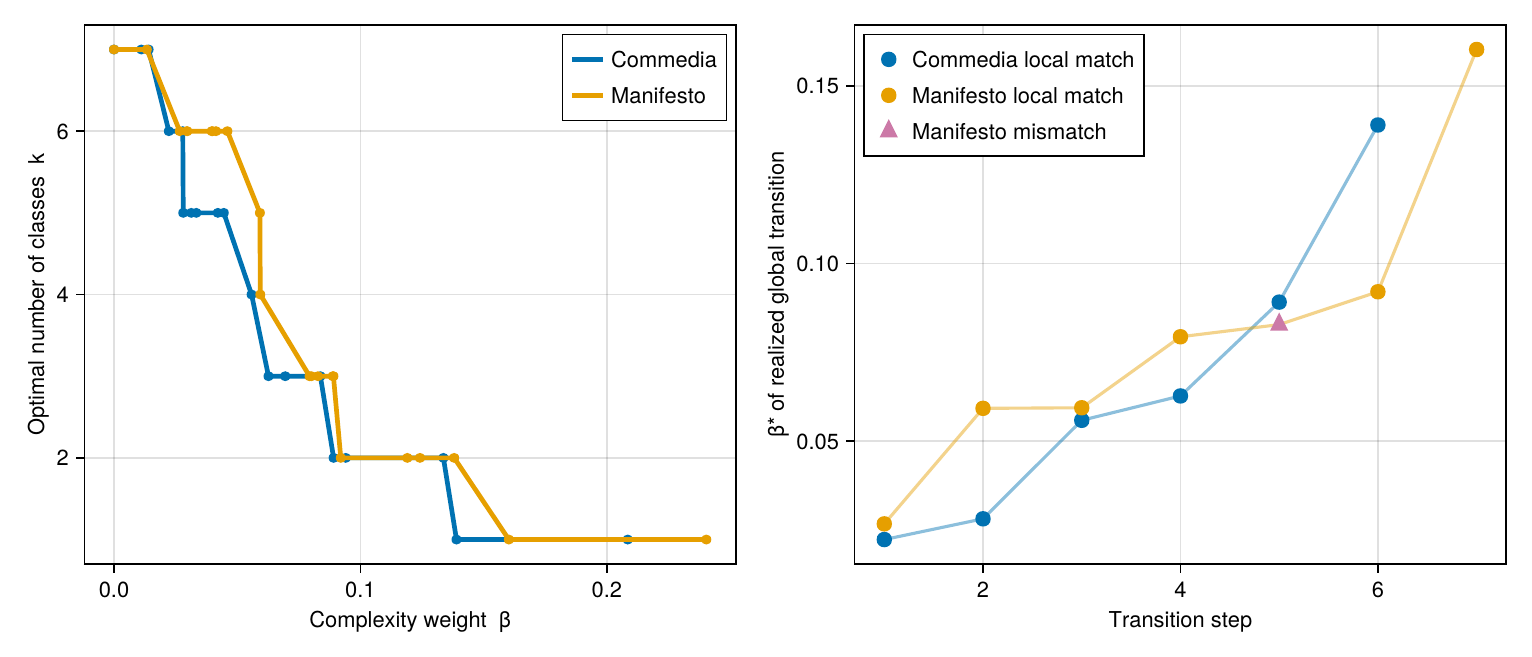}
\caption{Exact corpus-induced test of \thmref{thm:split-threshold}. Left: the exact global optimum path under $J_{D,\beta}$ for the \emph{Commedia} and \emph{Manifesto} ecologies as $\beta$ increases. Right: the realized global transitions are compared step-by-step to the theorem's local split-threshold prediction. The alignment is exact on \emph{Commedia} and has a single nonlocal deviation on \emph{Manifesto}.}
\label{fig:exp2-split-merge}
\end{figure}
 
\section{Predictions, Limits, and Conclusion}\label{sec:conclusion}

Together, the decomposition theorem, the minimum-complexity result, and the two-ecology framework identify where representational failure should occur. The logic requires no new propositions beyond those already proved. \Cref{app:off-ecology} adds quantitative lower bounds on off-ecology excess and a constructive non-identifiability witness.

\paragraph{Merged distinctions incur excess.}
If an encoding merges a pair $(w_1,w_2)$ that a probe ecology separates, \thmref{thm:ce-decomposition}(b) immediately gives positive excess under that ecology. By \thmref{thm:min-complexity}, a minimum-complexity zero-excess encoding for ecology~$\mu$ merges exactly the $\mu$-equivalent pairs. Any probe ecology that refines~$\mu$ therefore exposes positive excess on the newly separated pairs.

\paragraph{Simplicity pressure sheds low-gain distinctions first.}
Under the regularized objective~$J_{D,\beta}$, \thmref{thm:split-threshold} shows that distinctions whose predictive Jensen--Shannon gain is smaller than $\beta\,h(\lambda)$ are locally preferred merge candidates.
The pairs with the smallest gain-to-cost ratio are the first to become unstable as $\beta$ increases~(\corref{cor:veridical-local-min}).

\paragraph{Token and evaluation ecologies may disagree.}
The two-ecology framework (\cref{sec:two-ecology}) identifies the gap set: pairs where $\sigmasq_{\mathrm{tok}}(w_1,w_2)\approx 0$ but $\sigmasq_{\mathrm{eval}}(w_1,w_2)\gg 0$. On such pairs the token ecology provides little pressure to preserve the distinction, but the evaluation ecology rewards it. Ecology injection (\propref{prop:ecology-injection}) can rescue gap pairs, with the required injection level given by the explicit threshold~$\alpha^*$.

\begin{figure}[t]
\centering
\includegraphics[width=\linewidth]{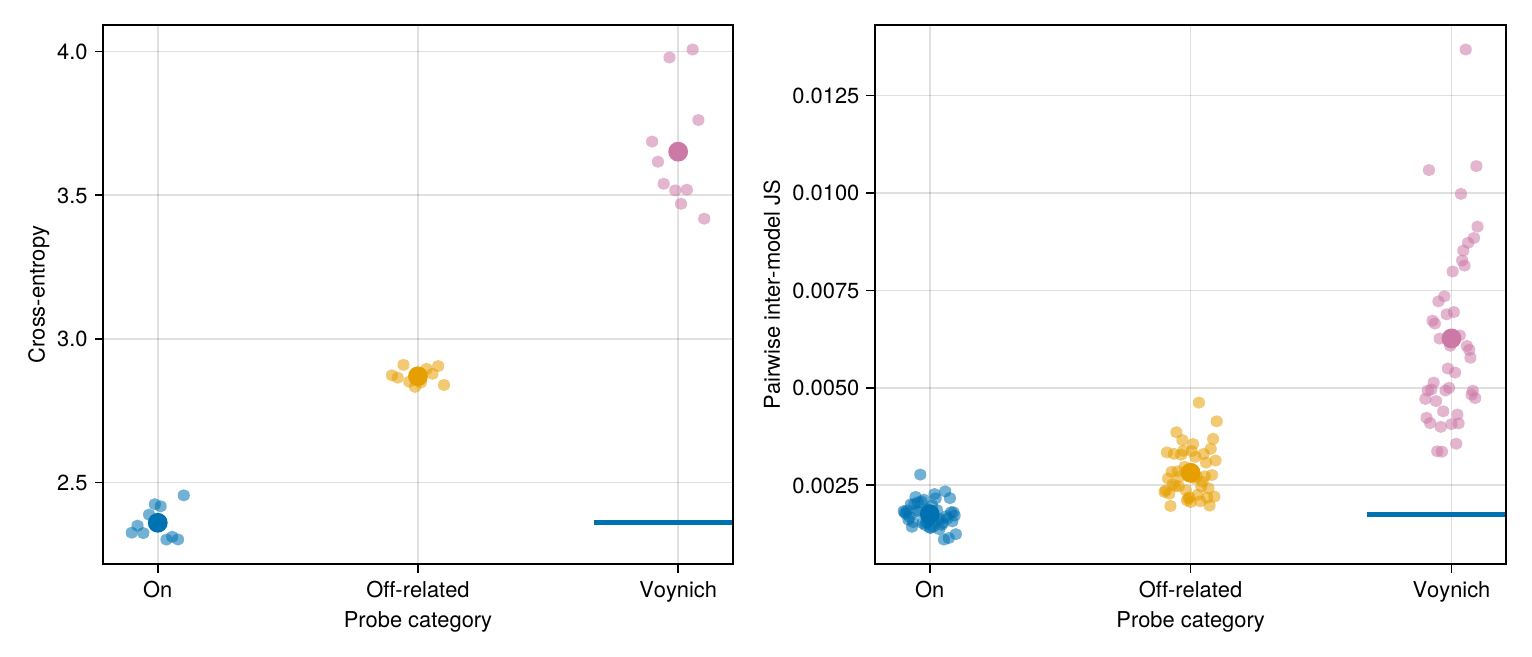}
\caption{Off-ecology failure in the microgpt model organism. Left: per-model cross-entropy rises from the training ecology (English, French, German) to related unseen languages (Italian, Finnish) and rises again on Voynich. Right: pairwise inter-model Jensen--Shannon disagreement shows the same ordering. This is the empirical signature predicted by the decomposition theorem: off-ecology probes incur larger excess and leave more room for divergent model behaviour.}
\label{fig:exp4-off-ecology}
\end{figure}

\paragraph{Predictions for production models.}
The model-organism experiments validate the framework in a regime where every quantity is observable. For production-scale models, the same logic yields only proxy-level predictions: holding deployment query type fixed, error should be highest on distinctions with low predictive split gain; models trained on comparable ecologies should agree on strongly separated distinctions and diverge on weakly separated or off-ecology ones; adding a modality should expand~$\mu$ and resolve previously fused equivalence classes (\propref{prop:ecology-expansion}); and a generalist whose encoding achieves zero excess on a unified ecology should match or outperform specialists on each sub-ecology (\thmref{thm:gen-vs-spec}).

\paragraph{Why the model-organism approach matters.}
The framework matters scientifically before it matters engineering-wise. It lets us ask what representational pressure autoregressive training, simplicity bias, and inter-model selection create in language-model populations, and test those claims where the relevant quantities are observable rather than hidden. The resulting predictions for larger systems are therefore conditional and comparative, not direct measurements.

\paragraph{Representational geometry.}
The framework determines which distinctions optimal representations must preserve (their \emph{topology}) but not how far apart the distinguished states lie in representation space (their \emph{geometry}). The ecology induces a canonical Hilbertian target geometry through the task-distance kernel~$K_\mu$ (\cref{app:geometry}), but our theorems propagate that geometry to learned encodings only in the Gaussian-linear case (\thmref{thm:geometry-gaussian}). Mechanistic work provides suggestive empirical analogues without closing that gap: \citet{gurnee_etal_2025_linebreaks} exhibit low-dimensional manifolds tracking structural task variables in a next-token transformer, while \citet{elhage_etal_2022_superposition} provide a plausible mechanism by which weak distinctions become noisy under capacity pressure. This resolves the tension between the Platonic Representation Hypothesis \citep{huh_etal_2024} and the Aristotelian refinement of \citet{groeger_etal_2026}: topological convergence (shared partition) is proved, but global geometric convergence is not established.

\paragraph{Limits.}
Ecological veridicality is a claim about representational adequacy relative to a task ecology, not about honesty, calibration, or understanding in any thicker sense. A model may preserve all task-separated distinctions and still mislead at the level of surface behaviour. Some such failures may also be computational rather than representational: even when $\Delta_D(\theta)=0$, a restricted deployment inference class can leave a positive decoding gap $\Gamma_D^{\mathcal{Q}_{\mathrm{dep}}}(\theta)$, and extended inference may reduce that term without changing the frozen encoding. The finite-class learning guarantees (\cref{sec:static}) control the oracle objective $L_D^*(\theta)$, not a full optimisation theorem for realistic transformer training or a bound on $\Gamma_D^{\mathcal{Q}_{\mathrm{dep}}}(\theta)$. The geometry gap remains the main open mathematical problem; extending the mean-field analysis of \citet{wang_johnston_fusi_2025} from feedforward to attention-based architectures would be a natural route.

\paragraph{Niche construction.}
For language models, $W$ is not exogenous: model-generated text enters future training corpora, reshaping the ecology to which later generations adapt. This is a niche-construction problem \citep{laland_etal_2016}. Recent work on model collapse under synthetic-data retraining points to one concrete manifestation of that feedback: when later models train on data that no longer surprises them, performance and diversity can decay across generations \citep{gambetta_etal_2025_surplexity}. At any snapshot the framework applies; but long-run veridicality may become faithfulness to a partially model-constructed world. Recent evidence that language models can sometimes detect manipulations of their own internal states \citep{lindsey_2025_introspection} suggests a weaker, individual-level analogue of the same point: some computational states may themselves become part of the effective world the model tracks. Formalising that feedback loop would require coupling the population dynamics of \cref{sec:evolution} to a dynamics on $W$ and $\mu$, which we do not attempt here.

\medskip
Two conclusions follow. The ecological veridicality framework identifies which world-state distinctions the training ecology forces a Bayes-optimal encoding to preserve, and which it leaves free to merge. Simplicity pressure determines the order in which weak distinctions are shed. The two-ecology extension locates the gap pairs where evaluation pressure and post-training injection matter beyond the base token objective. These are specific, testable claims; they do not require geometric convergence, which the present results leave unresolved. The strongest convergence narratives therefore remain conditional: on the ecology, on penetrability, on simplicity pressure, and on whether model populations reshape the worlds to which they are supposed to become veridical. The model-organism methodology makes those conditions testable in a regime where every theoretical quantity is directly observable, and the resulting distinctions can be carried as disciplined hypotheses into the study of larger systems.

\acks{No external funding. No conflicts of interest. Thanks to Sinon, son of Autolycus.

Code and experiment scripts are available at \url{https://github.com/gvdr/llm_evo_veridicity}.}

\bibliography{refs}

\appendix
\section{Geometry of the Task Ecology}\label{app:geometry}

\subsection{Canonical Hilbert Geometry}

\begin{definition}[Task-distance kernel]
Let $N := |W|$, let $D_\sigma$ be the matrix with $(D_\sigma)_{ij} = \sigmasq(w_i, w_j)$, and let $I$ denote the $N \times N$ identity matrix. Under uniform centering
\[
J = I - (1/N)\mathbf{1}\mathbf{1}^T,
\]
define
\[
K_\mu = -\tfrac{1}{2} J D_\sigma J.
\]
(For non-uniform priors, replace $J$ by weighted centering.)
\end{definition}

\begin{proposition}[Canonical Hilbert embedding and PSD kernel]\label{prop:hilbert-kernel}
Let $\mathcal{H}_D = L^2(D_C;\allowbreak \R^{|V|})$ and define the square-root embedding $\Psi_D \colon W \to \mathcal{H}_D$ by
\[
\Psi_D(w)(c) = \sqrt{P_w(\cdot \mid c)},
\]
where the square root is taken coordinate-wise. Then for all $w_i,w_j \in W$:
\[
\sigmasq_D(w_i,w_j) = \tfrac{1}{2}\|\Psi_D(w_i) - \Psi_D(w_j)\|^2_{\mathcal{H}_D}.
\]
Consequently $D_\sigma$ is a squared Euclidean distance matrix (up to the factor $\tfrac{1}{2}$). Moreover, $K_\mu = -\tfrac{1}{2} J D_\sigma J$ is positive semidefinite, and if $\bar{\Psi}_D(w_i) = \Psi_D(w_i) - \tfrac{1}{N}\sum_j \Psi_D(w_j)$, then
\[
(K_\mu)_{ij} = \tfrac{1}{2}\langle \bar{\Psi}_D(w_i),\, \bar{\Psi}_D(w_j)\rangle_{\mathcal{H}_D}.
\]
\end{proposition}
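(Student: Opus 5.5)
The plan is to verify the distance identity pointwise in $c$, integrate over $D_C$, and then apply the classical Schoenberg/classical-MDS argument to the resulting Euclidean distance matrix. First I will check that $\Psi_D$ is well defined as an element of $\mathcal{H}_D$. For each $c$, $\sqrt{P_w(\cdot\mid c)}$ is a vector in $\R^{|V|}$ with Euclidean norm $\bigl(\sum_v P_w(v\mid c)\bigr)^{1/2} = 1$. Hence $\|\Psi_D(w)\|_{\mathcal{H}_D}^2 = \E_{c\sim D_C}[1] = 1$, so $\Psi_D(w)\in L^2(D_C;\R^{|V|})$; measurability in $c$ is inherited from that of $c\mapsto P_w(\cdot\mid c)$. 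For fixed $c$, the definition of squared Hellinger distance gives
\[
\Hsq\bigl(P_{w_i}(\cdot\mid c),P_{w_j}(\cdot\mid c)\bigr)=\tfrac12\bigl\|\Psi_D(w_i)(c)-\Psi_D(w_j)(c)\bigr\|_2^2 .
\]
Taking $\E_{c\sim D_C}$ of both sides and using \defref{def:text-distance} together with the definition of the $L^2$ norm yields $\sigmasq_D(w_i,w_j)=\tfrac12\|\Psi_D(w_i)-\Psi_D(w_j)\|^2_{\mathcal{H}_D}$. The Euclidean-distance-matrix claim follows immediately, with the finitely many points $\Psi_D(w_1),\dots,\Psi_D(w_N)$ lying in a Hilbert space whose span is finite-dimensional.

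For the kernel I will expand $D_\sigma = \tfrac12\bigl(g\mathbf 1^T + \mathbf 1 g^T - 2G\bigr)$. Here $G_{ij}=\langle\Psi_D(w_i),\Psi_D(w_j)\rangle$ is the Gram matrix and $g_i=G_{ii}$, which in fact equals $1$. Since $J\mathbf 1=0$, the rank-one terms vanish under double centering, leaving $-\tfrac12 J D_\sigma J = \tfrac12 JGJ$. Writing $JGJ$ entrywise shows that $(JGJ)_{ij}=\langle \bar\Psi_D(w_i),\bar\Psi_D(w_j)\rangle$, because centering the vectors corresponds to left- and right-multiplication by $J$ under the bilinear form. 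This gives the stated formula for $(K_\mu)_{ij}$. Positive semidefiniteness then follows from
\[
x^TK_\mu x=\tfrac12\Bigl\|\textstyle\sum_i x_i\bar\Psi_D(w_i)\Bigr\|^2\ge 0 .
\]

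The argument is essentially routine. The only step needing care is keeping the factor $\tfrac12$ consistent: it appears once in $\Hsq$ and again in $K_\mu$, and the two must combine to give the stated $\tfrac12$ in front of the inner product rather than $1$ or $\tfrac14$. I will also note that the claim is stated for uniform centering $J$; the weighted-centering variant goes through identically with $J$ replaced by $I-\mathbf 1\pi^T$.
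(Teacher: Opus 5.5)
Your proposal is correct and takes the same route as the paper: integrate the pointwise Hellinger identity over $D_C$, then double-center. Where the paper only cites the standard double-centering identity, you derive it from $D_\sigma=\tfrac12(g\mathbf 1^T+\mathbf 1 g^T-2G)$, which also makes the two factors of $\tfrac12$ easier to track; note only that your closing aside on weighted centering needs $J_\pi D_\sigma J_\pi^T$ with $J_\pi=I-\mathbf 1\pi^T$, because $\mathbf 1^T J_\pi=\mathbf 1^T-N\pi^T\neq 0$ for non-uniform $\pi$.
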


\begin{proof}
By the definition of task distance under the training ecology,
\[
\sigmasq_D(w_i,w_j)
= \E_{c\sim D_C}\!\left[\frac12\sum_v
\bigl(\sqrt{P_{w_i}(v\mid c)}-\sqrt{P_{w_j}(v\mid c)}\bigr)^2\right].
\]
Since $\Psi_D(w)(c)=\sqrt{P_w(\cdot\mid c)}$ coordinate-wise, the right-hand side is exactly
\[
\frac12\int \sum_v
\bigl(\Psi_D(w_i)(c)_v-\Psi_D(w_j)(c)_v\bigr)^2\,dD_C(c)
= \frac12\|\Psi_D(w_i)-\Psi_D(w_j)\|_{\mathcal H_D}^2,
\]
proving the first claim.

Now center the embedded points by
\[
\bar{\Psi}_D(w_i)=\Psi_D(w_i)-\frac1N\sum_{j=1}^N \Psi_D(w_j).
\]
Subtracting the same mean vector from both points does not change pairwise differences, so
\[
\|\Psi_D(w_i)-\Psi_D(w_j)\|^2
= \|\bar{\Psi}_D(w_i)-\bar{\Psi}_D(w_j)\|^2.
\]
For any centered Euclidean point cloud, the standard double-centering identity recovers the Gram matrix from the squared distance matrix:
\[
-\frac12 JD_\sigma J
\]
is the Gram matrix of the centered representatives. Hence
\[
(K_\mu)_{ij}
= \frac12\langle \bar{\Psi}_D(w_i),\bar{\Psi}_D(w_j)\rangle_{\mathcal H_D},
\]
so $K_\mu$ is positive semidefinite.
\end{proof}

The task ecology therefore determines a canonical Hilbertian target geometry independently of any particular neural architecture. What remains non-trivial is whether a learned representation $h \colon W \to \R^d$ approximates this geometry, rather than merely preserving the induced partition.

Next we record the standard square-root-embedding fact for Hellinger geometry together with the usual double-centering construction for Euclidean distance matrices, expressed in the present notation.

\subsection{What the Framework Proves for Learned Encoders}

\begin{definition}[Ecological veridicality of a representation map]
For $h \colon W \to \R^d$, define the partition ${\sim_h}$ on $W$ by $w_i \sim_h w_j$ iff $h(w_i) = h(w_j)$, and let $p_h \colon W \to W/{\sim_h}$ be the induced encoding. Let $K_h$ denote the centered Gram matrix of the learned codes, i.e.\ the Gram matrix of $\{h(w_i)-\frac{1}{|W|}\sum_j h(w_j)\}_i$. We say that $h$ is ecologically veridical when $p_h$ merges no $\mu$-separated pair.
\end{definition}

\begin{theorem}[Topological prediction, general case]\label{thm:topology-general}
Let $h \colon W \to \R^d$ be ecologically veridical. Then:
\begin{enumerate}[label=(\alph*)]
\item $h(w_i) \neq h(w_j)$ for every $\mu$-separated pair.

\item $h(w_i) = h(w_j)$ is permitted for $\mu$-equivalent pairs. For minimum-complexity zero-excess encoders (in the sense of \thmref{thm:min-complexity}), equality on $\mu$-equivalent pairs is additionally required.

\item If $h$ realises exactly $\kmu := |W/{\sim_\mu}|$ distinct class codes, then $\rank(K_h) \le \kmu - 1$, with equality when class representatives are in affine general position.
\end{enumerate}
\end{theorem}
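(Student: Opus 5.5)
The plan is to treat the three parts separately. Parts (a) and (b) are direct transcriptions of the definitions and of \thmref{thm:min-complexity}. Part (c) reuses the centering argument already given for \corref{cor:topological-convergence}.

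For (a), suppose $(w_i,w_j)$ is $\mu$-separated and assume, for contradiction, that $h(w_i)=h(w_j)$. Then $w_i \sim_h w_j$, so $p_h$ places both states in one cell. That cell therefore merges a $\mu$-separated pair, which contradicts the definition of ecological veridicality of $h$. If one prefers the loss formulation, \thmref{thm:ce-decomposition}(c) turns the same merge into a strictly positive Jensen--Shannon excess, because by \corref{cor:text-separation} the integrand is nonzero on a set of contexts of positive $D_C$-measure. Either route gives (a).

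For (b), the permission half holds because veridicality only forbids merging $\mu$-separated pairs. The identity map and any refinement of $W/{\sim_\mu}$ are therefore admissible. The requirement half comes from \thmref{thm:min-complexity}(b): a minimum-complexity zero-excess encoding has partition exactly $W/{\sim_\mu}$. Applied to $p_h$, this says that any two $\mu$-equivalent states lie in the same cell, i.e.\ $h(w_i)=h(w_j)$. The only care needed is to note that the complexity $H(p_h(W))$ depends only on the partition ${\sim_h}$ and not on the particular code values.

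For (c), let $z_1,\dots,z_{k_\mu}$ be the distinct codes and let $m=\frac{1}{|W|}\sum_j h(w_j)$. The matrix $K_h$ factors as $\bar H \bar H^{T}$, where the rows of $\bar H$ are $h(w_i)-m$. Hence $\rank(K_h)$ equals the dimension of the span of $\{z_a - m\}$. Because $m$ is a convex combination of the $z_a$, this span lies inside the span of the differences $\{z_a - z_1\}_{a\ge 2}$, which has dimension at most $k_\mu-1$.

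The main obstacle is getting equality under affine general position. For this step I would show that the two spans coincide: $z_a - z_1 = (z_a - m)-(z_1-m)$ already lies in the span of the centered vectors. The rank is then exactly the affine dimension of $\{z_a\}$, and this equals $k_\mu-1$ precisely when the representatives are affinely independent. This argument also handles non-uniform multiplicities of the class sizes, since $m$ is still a convex combination of the $z_a$ regardless of the weights.
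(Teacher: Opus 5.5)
Your proposal is correct and follows the paper's proof in all three parts: (a) and (b) are read off from the definition of ecological veridicality and \thmref{thm:min-complexity}(b), and (c) is the same centering argument. Your version of (c) is slightly more careful than the paper's phrase that the ``centered sum is zero'', which holds only when each class code is weighted by its multiplicity. Writing $m$ as a convex combination of the class codes handles unequal class sizes and also justifies the equality case, which the paper only asserts.
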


\begin{proof}
Part~(a) is exactly \citet[Theorem~4.1(a)]{dallariva_2026}: an ecologically veridical representation may not collapse any $\mu$-separated pair. For~(b), the same framework permits equality on $\mu$-equivalent pairs, while \thmref{thm:min-complexity}(b) adds a stronger requirement for minimum-complexity zero-excess encoders: their partition must be exactly $W/{\sim_\mu}$. For~(c), if $h$ realizes exactly $\kmu$ distinct class codes, then after centering there are still at most $\kmu$ distinct code vectors and their centered sum is zero. Their span therefore has dimension at most $\kmu-1$, so the centered Gram matrix $K_h$ has rank at most $\kmu-1$. Equality is achieved when the distinct class representatives are in affine general position.
\end{proof}

\begin{remark}[What this does NOT constrain]
\thmref{thm:topology-general} constrains only the partition structure induced by $h$ and the resulting rank bound on $K_h$. It does NOT constrain relative magnitudes of non-zero distances, eigenvectors, or overall scale.
\end{remark}

\subsection{Exact Metric Recovery in the Gaussian-Linear Case}

\begin{remark}[Scope of Appendix A.3]
In this section, we use a simplified Gaussian-linear model rather than the autoregressive setting of the main paper. Tasks are scalar-valued ($f(w_i) = c^T \varphi_i$), not distribution-valued ($f_c(w) = P_w(\cdot\mid c)$). The results here illustrate when geometric alignment (beyond the topological alignment proved in the main text) is achievable, and identify the restrictive conditions under which it holds.
\end{remark}

\begin{theorem}[Geometric alignment, Gaussian-linear case]\label{thm:geometry-gaussian}
Consider Gaussian-linear tasks $f(w_i) = c^T \varphi_i$ with $c \sim N(0, \Sigma_c)$, a linear encoder $h(w_i) = A\varphi_i$, and the task-relevant subspace
$V = \mathrm{span}\{\varphi_i - \varphi_j : \sigmasq(w_i,w_j)>0\}$.
Write $P_V$ for the orthogonal projector onto~$V$ and $\Delta = \{\varphi_i - \varphi_j : \sigmasq(w_i,w_j) > 0\}$ for the set of separated difference vectors. Assume readouts attain Bayes-optimal prediction on $h$-cells. Then:
\begin{enumerate}[label=(\alph*)]
\item Zero risk requires $A(\varphi_i - \varphi_j) \neq 0$ for every separated pair, i.e.\ $\Delta \cap \ker(A) = \emptyset$. A sufficient (but not necessary) condition is $\ker(A) \cap V = \{0\}$.

\item Under the sufficient condition $\ker(A) \cap V = \{0\}$, the minimum feasible rank is $\dim(V)$. Without it, lower ranks may suffice if the finitely many vectors in $\Delta$ avoid $\ker(A)$.

\item In the canonical projector gauge $A = P_V$ (which satisfies the sufficient condition):
$\|h(w_i) - h(w_j)\|^2 = \|P_V(\varphi_i - \varphi_j)\|^2$.

If $\Sigma_c = \sigma_c^2 P_V$ (isotropic on $V$):
$\|h(w_i) - h(w_j)\|^2 = \sigmasq(w_i, w_j) / \sigma_c^2$, i.e.\ exact proportionality.

If $\Sigma_c$ is anisotropic: exact proportionality is no longer guaranteed and generically fails. The encoder projects onto $V$ uniformly, while $\sigmasq$ weights directions by $\Sigma_c$.
\end{enumerate}
\end{theorem}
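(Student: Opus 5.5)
The proof reduces the question to a linear-algebraic statement about which difference vectors the encoder must keep nonzero, and then computes the induced distances directly. First I record the task distance in this model: for scalar Gaussian-linear tasks,
\[
\sigmasq(w_i,w_j)=\E_{c}\bigl[(c^T(\varphi_i-\varphi_j))^2\bigr]=(\varphi_i-\varphi_j)^T\Sigma_c(\varphi_i-\varphi_j).
\]
Hence the pair is separated iff $\varphi_i-\varphi_j\notin\ker\Sigma_c$.

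For part~(a), I invoke the zero/nonzero characterisation imported from \citet{dallariva_2026}, the scalar analogue of \thmref{thm:ce-decomposition}(c). With Bayes-optimal readouts on $h$-cells, zero risk holds iff no cell contains a separated pair. Since $h(w_i)=h(w_j)$ iff $A(\varphi_i-\varphi_j)=0$, zero risk is equivalent to $\Delta\cap\ker(A)=\emptyset$. Sufficiency of $\ker(A)\cap V=\{0\}$ follows because $\Delta\subseteq V\setminus\{0\}$. To show this condition is not necessary, I give a small example: two separated difference vectors spanning a plane $V$, with $A$ a rank-one projection whose kernel line avoids both vectors.

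For part~(b), I use $\ker(A)\cap V=\{0\}$, which says $A|_V$ is injective. Then $\rank A\ge\dim V$, and $A=P_V$ attains this rank, so $\dim V$ is the minimum. The remark that lower ranks may suffice when this condition is dropped is just the counterexample from~(a) restated.

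For part~(c), I first note that each $\varphi_i-\varphi_j$ with $i,j$ arbitrary need not lie in $V$, so I keep the projector explicitly:
\[
\|P_V\varphi_i-P_V\varphi_j\|^2=\|P_V(\varphi_i-\varphi_j)\|^2.
\]
Under $\Sigma_c=\sigma_c^2P_V$, we get $\sigmasq=\sigma_c^2\|P_V(\varphi_i-\varphi_j)\|^2$, which gives exact proportionality for every pair. The separated pairs are exactly those with $P_V$-component nonzero, so the identity is consistent with the definition of $V$.

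For the anisotropic claim, I construct a two-dimensional $V$ with $\Sigma_c=\mathrm{diag}(1,2)$ and two separated differences $e_1$ and $e_2$. The encoder distances are then $1$ and $1$, while the task distances are $1$ and $2$, so no constant ratio exists. Genericity follows because proportionality forces $\Sigma_c$ to be proportional to the identity on $V$, as a quadratic form, in every direction represented by $\Delta$.

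The main obstacle is the precise wording of ``generically fails.'' I plan to handle it by noting that proportionality defines a proper algebraic subset of the $\Sigma_c$ cone whenever $\Delta$ contains two linearly independent directions.
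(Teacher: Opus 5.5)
Your proposal is correct. Parts~(a), (b) and the isotropic half of~(c) follow the paper's proof essentially verbatim: the imported zero/nonzero criterion together with $h(w_i)=h(w_j)\iff A(\varphi_i-\varphi_j)=0$, injectivity of $A$ on $V$ for the rank bound, and the direct computation $\sigmasq=(\varphi_i-\varphi_j)^T\Sigma_c(\varphi_i-\varphi_j)$.

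You differ from the paper on the side claims. The paper asserts ``not necessary'' and ``lower ranks may suffice'' without giving a witness. Your rank-one example supplies one, provided you choose the kernel to avoid every pairwise difference of your three states, including $e_2-e_1$; a generic hyperplane does this.

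For the anisotropic case, the paper diagonalises $\Sigma_c$ on $V$ and claims proportionality holds iff all eigenvalues agree. You instead give an explicit counterexample plus a proper-algebraic-subset argument. The paper's route yields an exact characterisation, but it leaves implicit the step that makes the ``only if'' true. Since $\varphi_j-\varphi_k=(\varphi_j-\varphi_1)-(\varphi_k-\varphi_1)$, polarization shows that $d^T(\Sigma_c-\kappa P_V)d=0$ on all pairwise differences forces $\Sigma_c-\kappa P_V$ to vanish as a bilinear form on their span. Hence $P_V\Sigma_c P_V=\kappa P_V$. Adding this line would upgrade your ``in every direction represented by $\Delta$'' to the full iff.

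Your route proves only the weaker claim the statement actually makes, which suffices, but one technicality deserves a sentence. $\Delta$ and $V$ themselves depend on $\Sigma_c$, so run the genericity argument on the open dense set where the separation pattern is constant, e.g.\ $\Sigma_c$ nonsingular on the span of all differences. There, $(d_1^T\Sigma_c d_1)\|d_2\|^2=(d_2^T\Sigma_c d_2)\|d_1\|^2$ is a nontrivial linear constraint, as your $\mathrm{diag}(1,2)$ example confirms.
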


\begin{proof}
By \citet[Theorem~4.1(a)]{dallariva_2026}, zero risk is equivalent to merging only $\mu$-equivalent states. In the linear setting, $h(w_i) = h(w_j)$ iff $A(\varphi_i - \varphi_j) = 0$, so zero risk requires $A(\varphi_i - \varphi_j) \neq 0$ for every separated pair, proving~(a). For~(b), $\ker(A) \cap V = \{0\}$ implies $A$ is injective on $V$, so $\rank(A) \ge \dim(V)$, with equality achievable. For~(c), pick the canonical representative $A = P_V$. For isotropic $\Sigma_c$, $\sigmasq(w_i,w_j) = \sigma_c^2\|P_V(\varphi_i - \varphi_j)\|^2$, giving proportionality. For anisotropic $\Sigma_c$, write the spectral decomposition of $\Sigma_c$ on $V$ as $\Sigma_c|_V = \sum_k \lambda_k u_k u_k^T$, where $\{u_k\}$ is an orthonormal basis of $V$ and $\lambda_k > 0$ are the corresponding directional variances. Then $\sigmasq = \sum_k \lambda_k[(\varphi_i-\varphi_j)^T u_k]^2$ while $\|P_V(\varphi_i-\varphi_j)\|^2 = \sum_k[(\varphi_i-\varphi_j)^T u_k]^2$; proportional iff all $\lambda_k$ equal.
\end{proof}

\subsection{Neighborhood Stability and the Open Problem}

The main topological convergence result appears in the body of the paper as \corref{cor:topological-convergence}. Here we record only the neighborhood-stability lemma and the remaining open geometric question.

\begin{proposition}[Neighborhood recovery from metric approximation]\label{prop:knn-stability}
Let $d$ be a target metric on $W$ and $\hat d$ a learned metric. Fix $k$ and, for each $w_i$, let $r_k(i)$ and $r_{k+1}(i)$ denote the distances from $w_i$ to its $k$-th and $(k+1)$-st nearest neighbors under $d$. Assume the $k$-neighborhood margin
\[
\gamma_k := \min_i \bigl(r_{k+1}(i) - r_k(i)\bigr)
\]
is strictly positive. If
\[
\sup_{i \neq j} |\hat d(w_i,w_j) - d(w_i,w_j)| < \gamma_k/2,
\]
then the directed $k$-nearest-neighbor graph induced by $\hat d$ is exactly the same as the one induced by $d$.
\end{proposition}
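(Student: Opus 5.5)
The plan is a standard perturbation argument: show that, for every anchor $w_i$, the ordering of distances relevant to the $k$-neighborhood is preserved under $\hat d$. Write $\varepsilon := \sup_{i\neq j}|\hat d(w_i,w_j)-d(w_i,w_j)|$, so the hypothesis is $\varepsilon < \gamma_k/2$. Fix $i$. Let $N_k(i)$ be the set of $k$ nearest neighbors of $w_i$ under $d$, excluding $w_i$ itself. The margin $r_{k+1}(i)-r_k(i)\ge\gamma_k>0$ implies $N_k(i)$ is well defined without ties at the boundary. Every $u\in N_k(i)$ has $d(w_i,u)\le r_k(i)$, and every $v\notin N_k(i)\cup\{w_i\}$ has $d(w_i,v)\ge r_{k+1}(i)$.

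The key step is the two-sided bound. For $u\in N_k(i)$ and $v\notin N_k(i)\cup\{w_i\}$,
\[
\hat d(w_i,u) < d(w_i,u)+\gamma_k/2 \le r_k(i)+\gamma_k/2 \le r_{k+1}(i)-\gamma_k/2 \le d(w_i,v)-\gamma_k/2 < \hat d(w_i,v).
\]
The middle inequality uses $r_{k+1}(i)-r_k(i)\ge\gamma_k$. Hence under $\hat d$, every member of $N_k(i)$ is strictly closer to $w_i$ than every non-member. Since $|N_k(i)|=k$, the $k$ nearest neighbors of $w_i$ under $\hat d$ are exactly $N_k(i)$. There are also no ties at the $k$/$(k+1)$ boundary under $\hat d$, so the $\hat d$-neighborhood is unambiguous.

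Finally, the directed $k$-NN graph has an edge $w_i\to w_j$ iff $w_j\in N_k(i)$. The argument above holds for every $i$, so the edge sets coincide.

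The only real obstacle is the convention for ties and for excluding $w_i$ itself. I will fix these conventions explicitly: neighbors exclude the point itself, and $r_k(i)$ is the $k$-th smallest value of $d(w_i,\cdot)$ over $W\setminus\{w_i\}$. I will then note that positivity of $\gamma_k$ is exactly what makes $N_k(i)$ well defined, so ties strictly inside $N_k(i)$ are harmless. Ties among the non-neighbors are also irrelevant. If $|W|\le k+1$ makes $r_{k+1}$ undefined, the statement is vacuous or trivial, and I will set that case aside.
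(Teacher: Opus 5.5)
Your proposal is correct and is essentially the paper's argument: true $d$-neighbors of $w_i$ satisfy $\hat d < r_k(i)+\gamma_k/2$, non-neighbors satisfy $\hat d > r_{k+1}(i)-\gamma_k/2$, and $r_{k+1}(i)-r_k(i)\ge\gamma_k$ separates the two, so the directed $k$-NN graphs coincide. Your explicit conventions (excluding $w_i$ itself, why $\gamma_k>0$ makes $N_k(i)$ well defined, and the degenerate case $|W|\le k+1$) are a small addition that the paper leaves implicit.
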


\begin{proof}
Fix $w_i$. Every true $k$-nearest neighbor $w_j$ of $w_i$ satisfies $d(w_i,w_j) \le r_k(i)$, hence $\hat d(w_i,w_j) < r_k(i) + \gamma_k/2$. Every point $w_\ell$ outside the true $k$-neighborhood satisfies $d(w_i,w_\ell) \ge r_{k+1}(i)$, hence $\hat d(w_i,w_\ell) > r_{k+1}(i) - \gamma_k/2$. Because $r_{k+1}(i)-\gamma_k/2 \ge r_k(i)+\gamma_k/2$, no outsider can cross into the top-$k$ set under $\hat d$, and no true member can be pushed out. Since this holds for every~$i$, the directed $k$-NN graphs coincide.
\end{proof}

\begin{remark}[Status]
\propref{prop:knn-stability} is a standard margin-based perturbation lemma for nearest-neighbor graphs, included here for completeness rather than as a novel result.
\end{remark}

\begin{remark}[Open problem]
\propref{prop:hilbert-kernel} identifies the target geometry induced by the ecology, and \thmref{thm:geometry-gaussian} proves exact recovery only in a restrictive Gaussian-linear regime. For deep non-linear learners, existing feature-learning theory suggests partial alignment with the leading eigendirections of $K_\mu$ \citep{ba_etal_2022,atanasov_etal_2022,damian_etal_2022}, but does not establish full proportional recovery of pairwise distances. \propref{prop:knn-stability} shows what would be sufficient for Aristotelian local-neighborhood recovery, but the required metric-approximation theorem is only proved here in the Gaussian-linear case. General geometric convergence is therefore unresolved by the present results.
\end{remark}

\begin{interpretation}
Our framework supplies a canonical ecology kernel $K_\mu$ and proves that minimum-complexity zero-excess models agree on the induced partition. Exact recovery of $K_\mu$ by learned distances is proved only in the Gaussian-linear isotropic case. The prediction absent from \citet{huh_etal_2024} is \emph{failure}: when deployment probes distinctions weakly constrained by training, models may diverge rather than converge.
\end{interpretation}

\section{Deployment Decoder Classes}\label{app:decoder-gap}

The main text introduces the deployment decoding gap
\[
\Gamma_D^{\mathcal{Q}_{\mathrm{dep}}}(\theta)
= L_D^{\mathcal{Q}_{\mathrm{dep}}}(\theta)-L_D^*(\theta),
\]
which isolates the difference between the Bayes-optimal decoder for a fixed induced encoding and the best decoder available under a restricted deployment inference regime. Here we generalize \defref{def:decoder-gap} from induced encodings $p_{\theta,D}$ to arbitrary encodings $p$, and record only the structural facts needed in the body of the paper.

\begin{definition}[Decoder-class loss for a fixed encoding]
Let $p \colon W \to X$ be any encoding and let $\mathcal{Q}$ be a nonempty class of decoders
\[
q \colon X \times V^* \to \Delta(V).
\]
Define the best loss achievable within $\mathcal{Q}$ by
\[
L_D^{\mathcal{Q}}(p)
:= \inf_{q \in \mathcal{Q}} L_D(p,q),
\]
and the corresponding decoder-class gap by
\[
\Gamma_D^{\mathcal{Q}}(p)
:= L_D^{\mathcal{Q}}(p)-L_D^*(p).
\]
The infimum need not be attained in general; when it is attained, any minimizer is a best $\mathcal{Q}$-decoder for~$p$.
\end{definition}

\begin{proposition}[Monotonicity under decoder-class expansion]\label{prop:decoder-gap-monotone}
Let $\mathcal{Q}_1 \subseteq \mathcal{Q}_2$ be two nonempty decoder classes for the same encoding~$p$. Then
\[
L_D^{\mathcal{Q}_2}(p) \le L_D^{\mathcal{Q}_1}(p)
\qquad\text{and}\qquad
\Gamma_D^{\mathcal{Q}_2}(p) \le \Gamma_D^{\mathcal{Q}_1}(p).
\]
\end{proposition}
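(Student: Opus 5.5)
The plan is to prove both inequalities directly from the definitions: the first is monotonicity of an infimum over a larger index set, and the second follows by subtracting a quantity that does not depend on the decoder class.

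First, fix the encoding $p$. For every $q \in \mathcal{Q}_1$ we have $q \in \mathcal{Q}_2$, so
\[
L_D^{\mathcal{Q}_2}(p) = \inf_{q' \in \mathcal{Q}_2} L_D(p,q') \le L_D(p,q).
\]
Taking the infimum over $q \in \mathcal{Q}_1$ on the right gives $L_D^{\mathcal{Q}_2}(p) \le L_D^{\mathcal{Q}_1}(p)$. Nonemptiness of both classes ensures the infima are over nonempty sets. We do not need them to be attained; values in $[0,\infty]$ are fine, since each $L_D(p,q)$ is an expected log-loss and hence lies in $[0,\infty]$.

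Second, note that $L_D^*(p) = \inf_q L_D(p,q)$ is the infimum over all decoders, so it is the same number in both gaps. By \thmref{thm:ce-decomposition}(b) it equals $H(Y \mid C,X)$, which is finite because $V$ is finite. Subtracting this common finite constant from both sides of the first inequality preserves the order, which gives $\Gamma_D^{\mathcal{Q}_2}(p) \le \Gamma_D^{\mathcal{Q}_1}(p)$.

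The only point needing care is the finiteness of $L_D^*(p)$. Without it, the subtraction could produce expressions of the form $\infty - \infty$. Citing \thmref{thm:ce-decomposition}(b) together with $|V|<\infty$ handles this, because the entropy $H(Y \mid C,X)$ is then bounded by $\log |V|$. If $L_D^{\mathcal{Q}_1}(p) = \infty$, the gap inequality holds trivially. Otherwise both gaps are finite and nonnegative, as in \propref{prop:decoder-gap}(a), and the argument above applies.
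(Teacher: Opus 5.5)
Your proof is correct and follows the paper's argument: monotonicity of the infimum over the larger class, then subtraction of the common baseline $L_D^*(p)$. Your extra check that $L_D^*(p)=H(Y\mid C,X)\le\log|V|$ is finite rules out $\infty-\infty$, a point of rigor the paper leaves implicit.
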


\begin{proof}
Because $\mathcal{Q}_1 \subseteq \mathcal{Q}_2$, taking the infimum over the larger class cannot increase the value:
\[
\inf_{q\in\mathcal{Q}_2} L_D(p,q)
\le
\inf_{q\in\mathcal{Q}_1} L_D(p,q).
\]
Subtracting the common baseline $L_D^*(p)$ gives the same inequality for the decoder-class gaps.
\end{proof}

\begin{corollary}[Induced-encoding form]
If $\mathcal{Q}_1 \subseteq \mathcal{Q}_2$ are nonempty deployment decoder classes, then for every $\theta \in \Theta$:
\[
L_D^{\mathcal{Q}_2}(\theta) \le L_D^{\mathcal{Q}_1}(\theta)
\qquad\text{and}\qquad
\Gamma_D^{\mathcal{Q}_2}(\theta) \le \Gamma_D^{\mathcal{Q}_1}(\theta).
\]
\end{corollary}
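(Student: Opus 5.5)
The corollary is the induced-encoding specialisation of \propref{prop:decoder-gap-monotone}, so the plan is to substitute the encoding $p := p_{\theta,D}$ into that proposition. Fix $\theta \in \Theta$ and set $p := p_{\theta,D}$, the operational encoding from \defref{def:model-encoding}, with codomain $X = W/{\sim_{\theta,D}}$.

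The first step is to check that the notations agree. By \defref{def:decoder-gap},
\[
L_D^{\mathcal{Q}}(\theta) = \inf_{q\in\mathcal{Q}} L_D(p_{\theta,D},q).
\]
This is exactly $L_D^{\mathcal{Q}}(p_{\theta,D})$ in the appendix definition of decoder-class loss. In the same way, $L_D^*(\theta) = L_D^*(p_{\theta,D})$ by the definition used in \thmref{thm:llm-veridicality}. It follows that
\[
\Gamma_D^{\mathcal{Q}}(\theta) = \Gamma_D^{\mathcal{Q}}(p_{\theta,D})
\]
for both $\mathcal{Q} = \mathcal{Q}_1$ and $\mathcal{Q} = \mathcal{Q}_2$. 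The decoder classes are classes of maps $X \times V^* \to \Delta(V)$. They should be read with $X$ equal to the code space of $p_{\theta,D}$, or as any code space containing it, so that they fit the hypotheses of \propref{prop:decoder-gap-monotone}.

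The second step is to apply \propref{prop:decoder-gap-monotone} to $p_{\theta,D}$ with the inclusion $\mathcal{Q}_1 \subseteq \mathcal{Q}_2$. This gives
\[
L_D^{\mathcal{Q}_2}(p_{\theta,D}) \le L_D^{\mathcal{Q}_1}(p_{\theta,D})
\qquad\text{and}\qquad
\Gamma_D^{\mathcal{Q}_2}(p_{\theta,D}) \le \Gamma_D^{\mathcal{Q}_1}(p_{\theta,D}).
\]
Rewriting both sides in the $\theta$-notation from the first step yields the two stated inequalities. Since $\theta$ was arbitrary, they hold for every $\theta \in \Theta$.

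No real obstacle is expected. The only point needing care is the identification of the notation, in particular that the common baseline $L_D^*(\theta)$ is finite, so that subtracting it preserves the inequality. The Bayes-optimal decoder from \thmref{thm:ce-decomposition} gives $L_D^*(\theta) = H(Y \mid C, X)$. This is finite because $V$ is finite, so $H(Y \mid C,X) \le \log |V|$.
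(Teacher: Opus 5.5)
Your proposal is correct and takes the same route as the paper, whose proof simply applies \propref{prop:decoder-gap-monotone} to the induced encoding $p_{\theta,D}$. Your extra care in matching the $\theta$- and $p$-notation, and in noting that the baseline $L_D^*(\theta)\le\log|V|$ is finite, is sound and does not change the argument.
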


\begin{proof}
Apply \propref{prop:decoder-gap-monotone} to the induced encoding $p_{\theta,D}$.
\end{proof}

\begin{remark}[Interpretation]
Single-pass prompting, chain-of-thought prompting, scratchpads, and tool-augmented inference can be idealized as different deployment decoder classes for the same frozen encoding. The proposition above therefore supports only the monotonic claim used in the main text: if one inference regime genuinely enlarges the admissible decoder family relative to another, then the best achievable decoding gap cannot increase. Establishing concrete inclusion relations among realistic transformer prompting strategies, or bounding the resulting gaps for specific architectures, is a separate circuit-complexity problem that we do not attempt here.
\end{remark}

\section{Supplementary Consequences}

\subsection{Generalist versus Specialist}\label{sec:genspec}

The generalist-specialist comparison gives a supplementary consequence of the same excess decomposition: broad ecologies favor representations that preserve all distinctions jointly required across tasks, while specialists can be optimal only relative to narrower sub-ecologies.

\begin{theorem}[Generalist advantage]\label{thm:gen-vs-spec}
For each task $t \in \{1,\ldots,T\}$, let $D^{(t)}$ be the corresponding data distribution, $\mu_t$ its induced task ecology, and $D_C^{(t)}$ its context marginal.
Define the excess Bayes-optimal token loss
\[
\Delta_t(\theta) := L_{D^{(t)}}^*(\theta) - H_{D^{(t)}}(Y \mid C,W).
\]
For each $t$, interpret $L_{D^{(t)}}^*$ and $H_{D^{(t)}}$ under the joint law induced by $\pi$, $D_C^{(t)}$, and the conditional token distributions $P_w(\cdot\mid c)$. Let $D := (1/T)\sum_t D^{(t)}$ be the uniform task mixture, with induced ecology $\mu_D$ and context marginal $D_C := (1/T)\sum_t D_C^{(t)}$. Then:
\begin{enumerate}[label=(\alph*)]
\item If $\Delta_D(\theta_G)=0$, then $\Delta_t(\theta_G)=0$ for all $t$: the generalist matches every specialist on each constituent task.

\item For any specialist $\theta_t$ and task $s \neq t$: if $\theta_t$ merges a pair $(w_1,w_2)$ with $\sigmasq_{\mu_s}(w_1,w_2)>0$, then
\[
\Delta_s(\theta_t) > 0.
\]
More explicitly, if $x=p_{\theta_t,D^{(s)}}(w_1)=p_{\theta_t,D^{(s)}}(w_2)$ and $\lambda=\pi(w_1)/(\pi(w_1)+\pi(w_2))$, then
\[
\Delta_s(\theta_t)
\ge (\pi(w_1)+\pi(w_2))\,
   \E_{c\sim D_C^{(s)}}
   \!\left[\JS_{\lambda}
   \bigl(P_{w_1}(\cdot\mid c),\,P_{w_2}(\cdot\mid c)\bigr)\right]
> 0.
\]

\item Hence, on any deployment distribution that gives positive weight to at least one such missed pair, a zero-excess generalist strictly outperforms that specialist in Bayes-optimal next-token loss.
\end{enumerate}
\end{theorem}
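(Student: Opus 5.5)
The plan rests on \thmref{thm:ce-decomposition}, applied separately to each constituent ecology $D^{(t)}$. The one structural fact needed is how the mixture's context marginal $D_C=(1/T)\sum_t D_C^{(t)}$ dominates each $D_C^{(t)}$.

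\textbf{Part~(a).} Assume $\Delta_D(\theta_G)=0$. By \thmref{thm:ce-decomposition}(c), every cell of $p_{\theta_G,D}$ contains only states $w_1,w_2$ with $P_{w_1}(\cdot\mid c)=P_{w_2}(\cdot\mid c)$ for $D_C$-almost every $c$.

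Since $D_C^{(t)}\le T\,D_C$, every $D_C$-null set is $D_C^{(t)}$-null. So the equality also holds for $D_C^{(t)}$-almost every $c$.

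One subtlety is that $\Delta_t(\theta_G)$ is defined through the induced encoding $p_{\theta_G,D^{(t)}}$, not $p_{\theta_G,D}$. I will handle this by the same absolute-continuity argument applied to the model's outputs. If $F_{\theta_G}$ agrees on $\mathrm{obs}(w_1),\mathrm{obs}(w_2)$ for $D_C^{(t)}$-a.e.\ $c$, then it does so for $D_C$-a.e.\ $c$ on the support of $D_C^{(t)}$. The converse direction needs more care: $\sim_{\theta,D}$-equivalence implies $\sim_{\theta,D^{(t)}}$-equivalence, so $p_{\theta_G,D^{(t)}}$ is coarser than or equal to $p_{\theta_G,D}$.

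That coarsening is where an obstacle arises. A coarser encoding could merge a $D^{(t)}$-separated pair, and the argument must rule this out. The way I would do it is to read $\Delta_t(\theta_G)$ via the encoding whose cells are evaluated in $D^{(t)}$. Any pair merged in $p_{\theta_G,D^{(t)}}$ has model outputs that agree $D_C^{(t)}$-a.e. The question is whether such a pair also has equal $P_w$ there.

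This is the main obstacle. In the paper's framework, the only reading I can make rigorous is to treat the encoding as the fixed map $p_{\theta_G,D}$ evaluated under each sub-ecology. If I adopt that convention (as the theorem's phrase ``the generalist matches every specialist'' suggests), then $\Delta_t(\theta_G)=0$ follows at once from \thmref{thm:ce-decomposition}(c) with $D_C^{(t)}$. I would state this convention explicitly in the proof.

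\textbf{Part~(b).} Let $x$ be the common code of $w_1,w_2$ under $p_{\theta_t,D^{(s)}}$, with cell $C_x\supseteq\{w_1,w_2\}$. By \thmref{thm:ce-decomposition}(b),
\[
\Delta_s(\theta_t)=\E_{c\sim D_C^{(s)}}\Bigl[\sum_{x'}\pi_{x'}\JS_{\alpha_{x'}}(\cdots)\Bigr]\ge \E_{c\sim D_C^{(s)}}\bigl[\pi_x\JS_{\alpha_x}(\{P_w(\cdot\mid c)\}_{w\in C_x})\bigr].
\]
Write $\JS_{\alpha_x}$ as $I(Y;W\mid X=x,C=c)$. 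By the chain rule, conditioning on the event $W\in\{w_1,w_2\}$ shows that
\[
\pi_x\,\JS_{\alpha_x}\ \ge\ (\pi(w_1)+\pi(w_2))\,\JS_\lambda\bigl(P_{w_1},P_{w_2}\bigr).
\]
Concretely, $I(Y;W)\ge I(Y;W\mid G)$ fails in general, so I would use $I(Y;W)\ge I(Y;W\mid G)\cdot$-style grouping carefully. The clean route is the split-gain identity from the proof of \thmref{thm:split-threshold}: refining $C_x$ into $\{w_1\},\{w_2\},$ singletons gives total excess $\pi_x\JS_{\alpha_x}$ equal to $\sum$ of nonnegative terms including $(\pi(w_1)+\pi(w_2))\JS_\lambda$. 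This follows from the chain rule for $H(Y\mid C,X)-H(Y\mid C,W)$ over a refinement chain that first isolates $\{w_1,w_2\}$.

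Positivity of the bound then comes from $\sigmasq_{\mu_s}(w_1,w_2)>0$ via \corref{cor:text-separation}: the distributions $P_{w_1},P_{w_2}$ differ on a set of positive $D_C^{(s)}$-measure, and $\JS_\lambda>0$ there since $\lambda\in(0,1)$.

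\textbf{Part~(c).} Combine (a) and (b). The deployment loss is a mixture of per-task excesses. The generalist's excess is $0$ on each component, while the specialist's excess is strictly positive on a component of positive weight, so the specialist's total excess is strictly larger.
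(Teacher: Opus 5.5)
Your proof is correct and follows the paper's own argument: (a) uses $D_C^{(t)}\le T\,D_C$ together with \thmref{thm:ce-decomposition}(c); (b) uses the grouping decomposition of the cell's weighted Jensen--Shannon term, where your refinement chain through $\{w_1,w_2\}$ is the paper's identity $\JS_{\alpha_x}=\JS_{(\beta,1-\beta)}(M_{12},M_{\mathrm{rest}})+\beta\,\JS_\lambda(P_{w_1},P_{w_2})+(1-\beta)\JS_{\mathrm{rest}}$, and the inequality $I(Y;W)\ge I(Y;W\mid G)$ you hesitated over does hold here because $G$ is a function of $W$; and (c) combines the two. The subtlety you flag in (a) is real, and the paper's proof glosses over it by likewise evaluating the single encoding $p_{\theta_G,D}$ under each $D^{(t)}$; under the literal induced encoding $p_{\theta_G,D^{(t)}}$, part (a) can fail (e.g.\ a model whose outputs separate $w_1,w_2$ only on task-$s$ contexts while the true laws also differ on task-$t$ contexts), so stating the fixed-encoding convention explicitly, as you do, is the right move.
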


\begin{proof}
(a)~If $\Delta_D(\theta_G)=0$, \thmref{thm:ce-decomposition}(c) implies that $p_{\theta_G,D}$ merges only pairs that are $D_C$-almost-everywhere equivalent under the mixture ecology. Since $D_C=(1/T)\sum_t D_C^{(t)}$, this implies $D_C^{(t)}$-almost-everywhere equivalence for every~$t$. Hence $\Delta_t(\theta_G)=0$ for all~$t$.

(b)~Let $x$ be the merged cell containing $w_1$ and $w_2$ under task~$s$. By \thmref{thm:ce-decomposition}(b), the contribution of cell $x$ to $\Delta_s(\theta_t)$ is
\[
\pi_x\, \E_{c\sim D_C^{(s)}}
\bigl[\JS_{\alpha_x}
(\{P_w(\cdot\mid c)\}_{w\in C_x})\bigr].
\]
Grouping all states in $C_x \setminus \{w_1,w_2\}$ into a residual component gives the exact decomposition
\[
\JS_{\alpha_x}(\{P_w\}_{w\in C_x})
= \JS_{(\beta,1-\beta)}(M_{12},M_{\mathrm{rest}})
  + \beta\,\JS_{\lambda}(P_{w_1},P_{w_2})
  + (1-\beta)\,\JS_{\mathrm{rest}},
\]
where $\beta=\alpha_x(w_1)+\alpha_x(w_2)$, $M_{12}$ is the $(\lambda,1-\lambda)$-mixture of $P_{w_1},P_{w_2}$, $M_{\mathrm{rest}}$ is the mixture of the remaining cell distributions, and $\JS_{\mathrm{rest}}\ge 0$ is their internal weighted Jensen--Shannon divergence. Hence the cell contribution is at least
\[
\pi_x\beta\,\E_{c\sim D_C^{(s)}}\bigl[\JS_{\lambda}(P_{w_1}(\cdot\mid c),P_{w_2}(\cdot\mid c))\bigr],
\]
which is the displayed bound because $\pi_x\beta=\pi(w_1)+\pi(w_2)$. Because $\sigmasq_{\mu_s}(w_1,w_2)>0$, the two next-token laws differ on a set of positive $D_C^{(s)}$-measure, so the two-state Jensen--Shannon term is positive on a set of positive measure and therefore has strictly positive expectation.

(c)~From~(a) and~(b).
\end{proof}

\section{Selection on Recipe Traits}\label{app:recipe}

The two-ecology framework of \cref{sec:two-ecology} identifies post-training as an ecology-injection mechanism. The following results characterise how lineage selection acts on the strength of that injection.

\begin{proposition}[Selection on recipe traits]\label{prop:recipe-trait}
Let $\mathcal{R}$ be a finite recipe space with a heritable scalar trait $\alpha(r)\in[0,1]$ for each $r\in\mathcal{R}$, interpreted as the strength of ecology injection.
Consider one selection stage in a Wright--Fisher population over recipes $r_1,\ldots,r_K$ with frequencies~$x(r)$ and expected fitness~$f(r)$.
Define the population mean trait $\bar\alpha := \sum_r x(r)\,\alpha(r)$ and let $\bar\alpha_{\mathrm{eval}}$ denote the mean trait in the selected-parent population.
Then:
\begin{enumerate}[label=(\alph*)]
\item \textbf{Exact identity.}
$\bar\alpha_{\mathrm{eval}} - \bar\alpha = \Cov_x(f,\alpha)\,/\,\bar f$.

\item \textbf{Sufficient condition.} Write each recipe as $r=(\alpha,\zeta)$, where~$\zeta$ collects all other coordinates. If for every fixed~$\zeta$ the map $\alpha\mapsto\Delta_{\mathrm{eval}}(\alpha,\zeta)$ is nonincreasing, and if fitness has the form $f(r)=g(\Delta_{\mathrm{eval}}(r))$ for a strictly decreasing~$g$, then $\Cov_x(f,\alpha)\ge 0$ and therefore $\bar\alpha_{\mathrm{eval}}\ge\bar\alpha$.

\item \textbf{Strict increase.} If, in addition, there is positive recipe mass on a set of~$\zeta$ values for which $\alpha\mapsto\Delta_{\mathrm{eval}}(\alpha,\zeta)$ is strictly decreasing on a set of positive conditional mass, then $\Cov_x(f,\alpha)>0$ and $\bar\alpha_{\mathrm{eval}}>\bar\alpha$.
\end{enumerate}
\end{proposition}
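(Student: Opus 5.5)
The plan is to treat this as the selection-stage Price identity applied to the trait $\alpha$. Under Wright--Fisher parent choice with probability proportional to fitness, the expected frequency of recipe $r$ among selected parents is $x'(r) = x(r) f(r)/\bar f$, where $\bar f := \sum_r x(r) f(r) > 0$ (positivity of fitness is implicit in proportional parent choice). I read $\bar\alpha_{\mathrm{eval}}$ as the mean trait under this reweighted distribution, or as its expectation if the sample is random. Part~(a) is then a one-line computation:
\[
\bar\alpha_{\mathrm{eval}} - \bar\alpha = \sum_r x(r)\,\frac{f(r)}{\bar f}\,\alpha(r) - \sum_r x(r)\,\alpha(r) = \frac{\E_x[f\alpha] - \E_x[f]\,\E_x[\alpha]}{\bar f} = \frac{\Cov_x(f,\alpha)}{\bar f}.
\]

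For part~(b), composing the nonincreasing map $\alpha \mapsto \Delta_{\mathrm{eval}}(\alpha,\zeta)$ with the strictly decreasing $g$ shows that $\alpha \mapsto f(\alpha,\zeta)$ is nondecreasing for each fixed $\zeta$. The plan is then to apply the law of total covariance under $x$, conditioning on $\zeta$:
\[
\Cov_x(f,\alpha) = \E_x\bigl[\Cov_x(f,\alpha \mid \zeta)\bigr] + \Cov_x\bigl(\E_x[f \mid \zeta],\, \E_x[\alpha \mid \zeta]\bigr).
\]
Each conditional covariance is nonnegative by Chebyshev's association inequality: two nondecreasing functions of the single variable $\alpha$ are positively correlated, via
\[
\Cov(u(\alpha), v(\alpha)) = \tfrac12\, \E\bigl[(u(\alpha)-u(\alpha'))(v(\alpha)-v(\alpha'))\bigr]
\]
with $\alpha'$ an independent copy.

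The between-$\zeta$ term is the main obstacle, because nothing in the hypotheses controls it. If high-$\alpha$ recipes concentrate on $\zeta$ values with intrinsically poor fitness, the covariance can turn negative (a Simpson-type reversal). I would first look for an implicit assumption that rules this out. The natural one is that $\alpha$ is independent of $\zeta$ under $x$, for example a product recipe distribution, or more weakly that $\E_x[\alpha \mid \zeta]$ is constant in $\zeta$. Under either assumption the second term vanishes and (b) follows from (a) with $\bar f > 0$. If no such assumption can be extracted from the setup, I would state it explicitly as a hypothesis, or else restrict (b) to populations that are homogeneous in $\zeta$, and I would note a small counterexample showing why it is needed.

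For part~(c), under the same independence assumption, the Chebyshev identity gives strict positivity. On a $\zeta$-set of positive mass, $f(\cdot,\zeta)$ is strictly increasing on a region carrying positive conditional mass with at least two distinct $\alpha$ values. There the integrand $(f(\alpha)-f(\alpha'))(\alpha-\alpha')$ is strictly positive with positive probability, so the conditional covariance is positive on a positive-mass set of $\zeta$. Hence $\Cov_x(f,\alpha) > 0$, and by (a), $\bar\alpha_{\mathrm{eval}} > \bar\alpha$.
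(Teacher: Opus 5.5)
Your part (a) is the paper's own computation. For (b) and (c) your suspicion is correct, and the gap it exposes is in the statement and the paper's proof, not in your plan.

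\textbf{Where the paper's argument breaks.} The paper conditions on $\alpha$ rather than on $\zeta$. It sets $m(a):=\E_x[f\mid\alpha=a]$, asserts that $m$ is nondecreasing ``under the stated monotonicity,'' and then applies the independent-copy identity to $\Cov_x(m(\alpha),\alpha)=\Cov_x(f,\alpha)$. But $m(a)=\sum_\zeta x(\zeta\mid\alpha=a)\,f(a,\zeta)$. Monotonicity of $f(\cdot,\zeta)$ for each fixed $\zeta$ says nothing about $m$ once the conditional law of $\zeta$ shifts with $a$. That shift is exactly your between-$\zeta$ term.

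\textbf{Counterexample.} Let $\zeta\in\{0,1\}$, $\Delta_{\mathrm{eval}}(\alpha,1)=0.1(1-\alpha)$, $\Delta_{\mathrm{eval}}(\alpha,0)=1+0.1(1-\alpha)$, and $g(d)=2-d$. Put mass $0.45$ on each of $(0,1)$ and $(1,0)$, and mass $0.05$ on each of $(1,1)$ and $(0,0)$. The hypotheses of both (b) and (c) hold, including positive conditional mass on two $\alpha$ atoms within each $\zeta$. Yet $\bar\alpha=0.5$, $\bar f=1.45$ and $\E_x[f\alpha]=0.55$, so $\bar\alpha_{\mathrm{eval}}=0.55/1.45<\bar\alpha$. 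Thus (b) and (c) are false as stated. Nothing elsewhere in the setup supplies the missing condition; the paper never assumes a product recipe distribution. You should therefore state the extra hypothesis explicitly, as you propose.

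\textbf{Comparing the two repairs.} Once the hypothesis is added, your route and the repaired paper route differ in a useful way.
\begin{itemize}
\item Under full independence of $\alpha$ and $\zeta$ under $x$, $m(a)=\sum_\zeta x(\zeta)f(a,\zeta)$ is nondecreasing, and the paper's one-line Chebyshev argument goes through.
\item Your decomposition over $\zeta$ needs less. The within-$\zeta$ term is always nonnegative, so it suffices that $\Cov_x(\E_x[f\mid\zeta],\E_x[\alpha\mid\zeta])\ge 0$, and in particular that $\E_x[\alpha\mid\zeta]$ be constant.
\item Mean-independence does not by itself make $m$ monotone. For example, let $\alpha$ be uniform on $\{0,1\}$ when $\zeta=0$ and equal to $1/2$ when $\zeta=1$, with $f(\cdot,0)\equiv 1$ and $f(\cdot,1)\equiv 10$. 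So your weaker condition is not covered even by the patched paper argument.
\end{itemize}

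\textbf{Part (c).} Your reading of (c) is the right one: conditional mass is taken given $\zeta$, and you require at least two distinct $\alpha$ atoms of positive conditional mass on which $f(\cdot,\zeta)$ strictly increases. This matches the statement's own phrasing and is exactly what strictness needs. The paper's strictness step needs the same nondegeneracy and inherits the same gap. Your (c) also holds under mean-independence alone, since the between term vanishes and the within term is strictly positive.
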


\begin{proof}
The selected-parent distribution is $x_{\mathrm{sel}}(r) = x(r)\,f(r)/\bar f$, so
\[
\bar\alpha_{\mathrm{eval}} = \frac{1}{\bar f}\sum_r x(r)\,f(r)\,\alpha(r),
\]
giving~(a). For~(b), under the stated monotonicity the conditional mean fitness $m(a):=\E_x[f\mid\alpha=a]$ is nondecreasing in~$a$. Using an independent copy~$A'$ of~$A$,
\[
2\,\Cov_x(m(A),A)=\E[(m(A)-m(A'))(A-A')]\ge 0.
\]
Since $\Cov_x(f,\alpha)=\Cov_x(m(\alpha),\alpha)$ by the law of total covariance, the claim follows. For~(c), strict decrease on positive mass gives $P((m(A)-m(A'))(A-A')>0)>0$, hence strict positivity.
\end{proof}

The monotonicity condition in~(b) is substantive. It can fail if the injected task family is badly aligned with the evaluation ecology, for example through reward hacking, capability degradation, or post-training that improves a proxy while worsening the actual deployment target.

\begin{lemma}[Selection-stage directional gap closing]\label{lem:gap-closing}
Fix a gap pair $(w_1,w_2)\in G_\varepsilon$ and define the recipe-level token-ecology separation score $s(r):=\sigmasq_{\mathrm{tok}}(r;\,w_1,w_2)$. Assume $s(r)$ depends on recipes only through the scalar trait~$\alpha(r)$, and write $s(r)=h(\alpha(r))$ for some nondecreasing~$h$. If the selected-parent trait distribution first-order stochastically dominates the pre-selection distribution, then $\E_{\mathrm{eval}}[h(\alpha)]\ge\E[h(\alpha)]$.
\end{lemma}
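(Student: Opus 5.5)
The plan is to recognise the lemma as the standard characterisation of first-order stochastic dominance: expectations of nondecreasing functions are ordered by it. The first step reduces everything to a statement about the scalar trait. Since $s(r)=h(\alpha(r))$ depends on $r$ only through $\alpha(r)$, both $\E[h(\alpha)]$ under the pre-selection frequencies $x$ and $\E_{\mathrm{eval}}[h(\alpha)]$ under the selected-parent frequencies $x_{\mathrm{sel}}(r)=x(r)f(r)/\bar f$ (as in the proof of \propref{prop:recipe-trait}) are expectations of $h$ against the pushforward trait laws. Call these $F$ and $F_{\mathrm{eval}}$. Both are finitely supported on $[0,1]$, because $\mathcal{R}$ is finite.

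Next, write $h$ as a layer-cake integral. Let $a_1<\dots<a_m$ be the finitely many trait values in the union of the two supports. Then
\[
h(a_j)=h(a_1)+\sum_{i=2}^{j}\bigl(h(a_i)-h(a_{i-1})\bigr),
\]
and each increment is nonnegative because $h$ is nondecreasing. Taking expectations gives
\[
\E_{\mathrm{eval}}[h(\alpha)]-\E[h(\alpha)]
=\sum_{i=2}^{m}\bigl(h(a_i)-h(a_{i-1})\bigr)
\bigl(P_{\mathrm{eval}}(\alpha\ge a_i)-P(\alpha\ge a_i)\bigr).
\]
First-order stochastic dominance means $P_{\mathrm{eval}}(\alpha\ge a)\ge P(\alpha\ge a)$ for all $a$. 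So every summand is a product of two nonnegative factors, and the difference is nonnegative. This is Abel summation, and because the support is finite no measurability or integrability issues arise.

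I do not expect a real obstacle. The only point needing care is that the dominance hypothesis is stated for the trait distribution while $h$ is evaluated on recipes. The reduction in the first step handles this, and it is exactly where the assumption that $s$ depends on $r$ only through $\alpha(r)$ is used. I may add a remark that the hypothesis is not automatic. For example, it follows if $m(a)=\E_x[f\mid\alpha=a]$ is nondecreasing, which is the condition of \propref{prop:recipe-trait}(b): then the likelihood ratio $m(a)/\bar f$ is monotone, and monotone likelihood ratio implies dominance. The lemma itself simply assumes dominance, though, so this is an aside.
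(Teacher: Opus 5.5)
Your proposal is correct and follows the paper's approach. The paper's proof just invokes the standard fact that first-order stochastic dominance orders expectations of nondecreasing functions. Your reduction to the pushforward trait laws, followed by Abel summation over the finite support, is a self-contained proof of that same fact in the finite-recipe setting.
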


\begin{proof}
By the standard monotone-comparison property of first-order stochastic dominance applied to the nondecreasing function~$h$.
\end{proof}

The assumption $s(r)=h(\alpha(r))$ collapses all other recipe coordinates into a single scalar trait and assumes monotone dependence on that trait alone. The lemma is therefore an idealised strengthening that guides the design of controlled synthetic experiments, rather than a claim about realistic recipe spaces.

\section{Off-Ecology Error Bounds}\label{app:off-ecology}

The following propositions provide quantitative bounds for the failure predictions stated in \cref{sec:conclusion}. We prove them only for the next-token log-loss ecology. Extending them to generalized ecologies would require additional assumptions on the task losses; we do not use that extension in the present manuscript.

Let $\mu_1$ be the ecology under which the encoding was optimized and let $\mu_2$ be a probe ecology that refines~$\mu_1$. Let $D_C^{(1)}$ and $D_C^{(2)}$ denote context marginals inducing $\mu_1$ and $\mu_2$, respectively. For $i\in\{1,2\}$, interpret $L_{\mu_i}^*$ and $H_{\mu_i}$ under the joint law induced by $\pi$, $D_C^{(i)}$, and the conditional token distributions $P_w(\cdot\mid c)$.

\begin{proposition}[Off-ecology excess bound]\label{prop:off-ecology-error}
Let $p$ be a minimum-complexity zero-excess encoding for~$\mu_1$. If $\sigmasq_{\mu_1}(w_1,w_2)=0$ and $\sigmasq_{\mu_2}(w_1,w_2)>0$, then $p(w_1)=p(w_2)$ and
\[
L_{\mu_2}^*(p) - H_{\mu_2}(Y \mid C,W)
\ge (\pi(w_1)+\pi(w_2))\,
   \E_{c\sim D_C^{(2)}}
   \!\left[\JS_{\lambda}
   \bigl(P_{w_1}(\cdot\mid c),\,P_{w_2}(\cdot\mid c)\bigr)\right]
> 0,
\]
where $\lambda=\pi(w_1)/(\pi(w_1)+\pi(w_2))$.
\end{proposition}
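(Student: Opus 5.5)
The proof has two parts: first show that $p$ merges $w_1$ and $w_2$, then lower-bound the excess of the cell containing them under $\mu_2$, exactly as in \thmref{thm:gen-vs-spec}(b).

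\emph{Step 1: the pair is merged.} Since $\sigmasq_{\mu_1}(w_1,w_2)=0$, \corref{cor:text-separation} shows that $P_{w_1}(\cdot\mid c)=P_{w_2}(\cdot\mid c)$ for $D_C^{(1)}$-almost every $c$. Hence $w_1\sim_{\mu_1}w_2$. By \thmref{thm:min-complexity}(b), a minimum-complexity zero-excess encoding for $\mu_1$ has partition exactly $W/{\sim_{\mu_1}}$, so $p(w_1)=p(w_2)=:x$. Here it matters that $p$ is minimum-complexity. A merely zero-excess encoding could split the pair, and then the bound would fail.

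\emph{Step 2: decompose the excess under $\mu_2$.} Apply \thmref{thm:ce-decomposition}(b) under the joint law built from $\pi$, $D_C^{(2)}$ and $P_w(\cdot\mid c)$:
\[
L_{\mu_2}^*(p)-H_{\mu_2}(Y\mid C,W)=\E_{c\sim D_C^{(2)}}\Bigl[\sum_{x'}\pi_{x'}\JS_{\alpha_{x'}}\bigl(\{P_w(\cdot\mid c)\}_{w\in C_{x'}}\bigr)\Bigr].
\]
Every summand is nonnegative, so we drop all cells except $C_x$. Inside $C_x$ we use the grouping (chain-rule) identity for weighted Jensen--Shannon divergence, already used in the proof of \thmref{thm:gen-vs-spec}(b). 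Write $\beta=\alpha_x(w_1)+\alpha_x(w_2)$. The divergence of the cell then equals three terms: the JS divergence between the $\{w_1,w_2\}$-mixture and the mixture of the rest, plus $\beta\,\JS_\lambda(P_{w_1},P_{w_2})$, plus $(1-\beta)$ times the internal JS divergence of the rest.

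The cleanest justification is information-theoretic. Let $Z=W$ restricted to $C_x$, and let $G$ indicate whether $Z\in\{w_1,w_2\}$. Then
\[
I(Y;Z\mid C,X=x)=I(Y;G\mid C,X=x)+I(Y;Z\mid G,C,X=x)\ge \Pr(G=1\mid X=x)\,I(Y;Z\mid G=1,C,X=x),
\]
and the last mutual information is exactly $\JS_\lambda$. This works even when $C_x=\{w_1,w_2\}$, in which case $\beta=1$. Multiplying by $\pi_x$ gives $\pi_x\beta=\pi(w_1)+\pi(w_2)$, which is the displayed lower bound.

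\emph{Step 3: strict positivity.} Since $\sigmasq_{\mu_2}(w_1,w_2)>0$, \corref{cor:text-separation} (applied with $D_C^{(2)}$) shows that $P_{w_1}(\cdot\mid c)\neq P_{w_2}(\cdot\mid c)$ on a set of positive $D_C^{(2)}$-measure. The two-point JS divergence with $\lambda\in(0,1)$ is positive exactly when the two distributions differ. Also $\lambda\in(0,1)$ holds because $\pi>0$. So the expectation of $\JS_\lambda$ is strictly positive, and the prefactor $\pi(w_1)+\pi(w_2)$ is positive.

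The main obstacle is Step 2, namely handling the other states in the merged cell correctly. I would settle it with the conditional mutual-information chain rule above rather than by expanding KL divergences, since the chain rule gives nonnegativity of the discarded terms immediately. Everything else follows directly from earlier results.
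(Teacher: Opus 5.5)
Your proposal is correct and follows the paper's proof essentially step for step. Minimum complexity forces the partition $W/{\sim_{\mu_1}}$, so the pair is merged; the $\mu_2$-excess is reduced to the merged cell via \thmref{thm:ce-decomposition}(b); the grouping bound gives the factor $\pi_x\beta=\pi(w_1)+\pi(w_2)$; and strict positivity follows from separation under $D_C^{(2)}$, with your conditional mutual-information chain rule being an information-theoretic restatement of the hierarchical Jensen--Shannon decomposition the paper borrows from \thmref{thm:gen-vs-spec}(b).
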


\begin{proof}
By \thmref{thm:min-complexity}, a minimum-complexity zero-excess encoding for~$\mu_1$ has partition $W/{\sim_{\mu_1}}$. Since $\sigmasq_{\mu_1}(w_1,w_2)=0$, we have $w_1\sim_{\mu_1}w_2$, hence $p(w_1)=p(w_2)$. Let $x$ be that merged cell. By \thmref{thm:ce-decomposition}(b), the contribution of cell~$x$ to the excess under~$\mu_2$ is
\[
\pi_x\, \E_{c\sim D_C^{(2)}}
\bigl[\JS_{\alpha_x}
(\{P_w(\cdot\mid c)\}_{w\in C_x})\bigr].
\]
Group the states in $C_x$ into the pair $\{w_1,w_2\}$ and the residual set $C_x\setminus\{w_1,w_2\}$. The same hierarchical weighted Jensen--Shannon decomposition used in \cref{sec:genspec} gives
\[
\JS_{\alpha_x}(\{P_w\}_{w\in C_x})
\ge \beta\,\JS_{\lambda}(P_{w_1},P_{w_2}),
\]
where $\beta=\alpha_x(w_1)+\alpha_x(w_2)$ and $\lambda=\alpha_x(w_1)/\beta=\pi(w_1)/(\pi(w_1)+\pi(w_2))$. Multiplying by $\pi_x$ yields the displayed lower bound because $\pi_x\beta=\pi(w_1)+\pi(w_2)$. Since $\sigmasq_{\mu_2}(w_1,w_2)>0$, the two next-token laws differ on a set of positive $D_C^{(2)}$-measure, so the two-state Jensen--Shannon term has strictly positive expectation.
\end{proof}

\begin{proposition}[Off-ecology non-identifiability]\label{prop:off-ecology-nonident}
Under the same setup, if there exists a context set~$A$ with $D_C^{(1)}(A)=0$, $D_C^{(2)}(A)>0$, and $P_{w_1}(\cdot\mid c)\neq P_{w_2}(\cdot\mid c)$ for $c\in A$, then there exist two decoders $q^{(1)},q^{(2)}$ that attain the same optimal loss under~$\mu_1$ but disagree on~$A$. The training objective does not identify a unique off-ecology extension.
\end{proposition}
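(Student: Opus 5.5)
The plan is to build the two decoders explicitly: both copy the Bayes-optimal decoder of \thmref{thm:ce-decomposition}(a) off the set $A$, and they are modified only on $A$. Let $x := p(w_1)=p(w_2)$ be the merged cell, which exists by the first part of \propref{prop:off-ecology-error}. Set $q^{(1)} := q_p^*$ computed under $\mu_1$, that is, $q^{(1)}(\cdot\mid x',c)=\bar P_{x'}(\cdot\mid c)$ for every code $x'$ and context $c$. Define $q^{(2)}$ to agree with $q^{(1)}$ for every $c\notin A$ and every code. For $c\in A$, let $q^{(2)}(\cdot\mid x,c)$ be a distribution different from $\bar P_x(\cdot\mid c)$. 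A concrete choice is $P_{w_1}(\cdot\mid c)$ whenever this differs from $\bar P_x(\cdot\mid c)$, and $P_{w_2}(\cdot\mid c)$ otherwise. Since $P_{w_1}(\cdot\mid c)\neq P_{w_2}(\cdot\mid c)$ on $A$, at least one of the two differs from the cell mixture. Any fixed alternative, such as the uniform distribution on $V$, would also do if it is chosen different from the mixture.

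The next step is to show that both decoders attain $L_{\mu_1}^*(p)$. The loss $L_{D^{(1)}}(p,q)$ is an expectation over $c\sim D_C^{(1)}$ of a per-context quantity. That quantity depends on $q$ only through $q(\cdot\mid\cdot,c)$ at that same $c$. Because $q^{(1)}$ and $q^{(2)}$ coincide outside $A$ and $D_C^{(1)}(A)=0$, the integrands agree $D_C^{(1)}$-almost everywhere. Hence $L_{D^{(1)}}(p,q^{(2)})=L_{D^{(1)}}(p,q^{(1)})=L_{\mu_1}^*(p)$, and both decoders are optimal under $\mu_1$.

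The last step is to show that the decoders disagree on $A$, and that this disagreement is visible under $\mu_2$. By construction $q^{(2)}(\cdot\mid x,c)\neq q^{(1)}(\cdot\mid x,c)$ for every $c\in A$, and $D_C^{(2)}(A)>0$. So the disagreement set carries positive probe mass, and $\mu_1$ does not determine the decoder's behaviour there. I can add that the two $\mu_2$-losses generally differ. By the cell computation in the proof of \thmref{thm:ce-decomposition}(a), the $\mu_2$-loss penalises the deviation from the $\mu_2$-optimal mixture by $\pi_x\,\KL(\bar P_x\|q)$ at each $c\in A$. This makes the choice of extension consequential, though the statement only asks for disagreement.

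The main obstacle is measurability. The set $A$ should be measurable, and if $V^*$ is countable it is just a set of contexts of positive $D_C^{(2)}$-mass. The selection of $q^{(2)}$ on $A$ must also be measurable. I would handle both with the explicit rule above, which is a deterministic function of $P_{w_1},P_{w_2}$ and $\bar P_x$. In the countable-context setting of the paper the issue then disappears.
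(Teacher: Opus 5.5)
Your proof is correct and matches the paper's argument: copy the Bayes-optimal decoder off $A$, modify it only on $A$, and use $D_C^{(1)}(A)=0$ to show the $\mu_1$-loss is unchanged. The one difference is cosmetic: the paper sets the two decoders to $P_{w_1}(\cdot\mid c)$ and $P_{w_2}(\cdot\mid c)$ on $A$, while you keep one decoder equal to the cell mixture $\bar P_x(\cdot\mid c)$, which is also $\mu_2$-optimal, so your two extensions differ strictly in $\mu_2$-loss and not just in their predictions.
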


\begin{proof}
Let $x:=p(w_1)=p(w_2)$; by assumption, the probe ecology distinguishes two states that the optimized encoding leaves merged. Set both decoders equal to the Bayes-optimal decoder for~$p$ under~$\mu_1$ outside~$A$. On~$A$, define
\[
q^{(1)}(x,c):=P_{w_1}(\cdot\mid c),
\qquad
q^{(2)}(x,c):=P_{w_2}(\cdot\mid c),
\]
and leave all other code cells unchanged. Since $D_C^{(1)}(A)=0$, these modifications affect a set of zero $\mu_1$-measure, so both decoders attain the same optimal $\mu_1$-loss. Since $D_C^{(2)}(A)>0$ and the laws differ on~$A$, the two off-ecology extensions disagree on a set of positive probe measure.
\end{proof}

\section{Corpus Sources}\label{app:corpora}

We normalized all corpora to ASCII-range characters. We transliterated Unicode accented characters, removed markup, headers, and metadata, and split each text into fixed-length character chunks for tokenization.

\paragraph{Alice's Adventures in Wonderland.}
Five languages: English, French (trans.\ Henri Bu{\'e}), German (trans.\ Antonie Zimmermann), Italian (trans.\ T.\ Pietroc{\`o}la-Rossetti), Finnish (trans.\ Anni Swan). Digital texts from Project Gutenberg ebooks \#11, \#55456, \#19778, \#28371, \#46569.

\paragraph{Dante's Commedia.}
Seven languages: Italian, English, German, Finnish, Spanish, French, Portuguese. Digital texts from Project Gutenberg ebooks \#1000, \#1004, \#8085, \#12546, \#57303, \#22768/\#22769, and Portuguese text from pt.Wikisource.

\paragraph{Communist Manifesto.}
Ten languages: English, German, Spanish, French, Italian, Portuguese, Polish, Czech, Dutch, Finnish. Digital texts from the Marxists Internet Archive (\url{https://www.marxists.org/}).

\paragraph{Voynich manuscript.}
EVA transliteration in IVTFF format from Rene Zandbergen's digital archive (\url{https://www.voynich.nu/transcr.html}), using Takeshi Takahashi's complete transcription. We retained only lowercase Latin-alphabet characters, i.e.\ the EVA encoding of Voynich glyphs.

\paragraph{Practical Common Lisp.}
Source code from Peter Seibel's \emph{Practical Common Lisp}, normalized to lowercase letters, parentheses, and spaces. We use it for the bracket-balance and code-validation experiments (\cref{sec:two-ecology}). We verified balanced chunks for proper bracket nesting and generated unbalanced chunks by randomly permuting bracket characters at the same positions.

\end{document}